\newcommand\myeq{\stackrel{\mathclap{\normalfont\mbox{def}}}{=}}
\newlist{notes}{enumerate}{1}
\setlist[notes]{label=Note: ,leftmargin=*}
\newcommand{\NN}{{\sf I\kern-0.14emN}}   % Natural numbers
\newcommand{\ZZ}{{\sf Z\kern-0.45emZ}}   % Integers
\newcommand{\QQQ}{{\sf C\kern-0.48emQ}}   % Rational numbers
\newcommand{\RR}{{\sf I\kern-0.14emR}}   % Real numbers
\newcommand{\normallinespacing}{\renewcommand{\baselinestretch}{1.5} \normalsize}
\newcommand{\syncc}{~\stackrel{\textstyle \rhd\kern-0.57em\lhd}{\scriptstyle L}~}
\newtheorem{definition}{Definition}[section]
\newtheorem{theorem}{Theorem}[section]
\newtheorem{conjecture}{Conjecture}[section]
\newtheorem{example}{Example}[section]
\newtheorem*{remark}{Remark}
\newtheorem*{remark1}{Remark 1}
\newtheorem*{remark2}{Remark 2}
\begin{document}

\title{\LARGE {\bf Weak Keys and Cryptanalysis of a Cold War Block Cipher}\\
 \vspace*{6mm}
}

\author{Marios Georgiou}
\submitdate{August 2018}

\normallinespacing
\maketitle

\preface
\addcontentsline{toc}{chapter}{Abstract}

\begin{abstract}

T-310 is a cipher that was used for encryption of governmental communications in East Germany during the final years of the Cold War. Due to its complexity and the encryption process, there was no published attack for a period of more than 40 years until 2018 by Nicolas T. Courtois et al. in [10]. In this thesis we study the so called 'long term keys' that were used in the cipher, in order to expose weaknesses which will assist the design of various attacks on T-310.\\\\
\textbf{Keywords:} Cold War, T-310, block cipher, Linear Cryptanalysis, Generalised Linear Cryptanalysis, slide attacks, decryption oracle attacks

\end{abstract}
\cleardoublepage

\addcontentsline{toc}{chapter}{Acknowledgements}

\begin{acknowledgements}
I would like to thank my supervisor, Nicolas Tadeusz Courtois, for his guidance and support throughout the completion of this thesis. I would also like to thank Matteo Scarlata, a student at ETH Zurich, for the design of most of the figures that appear in this thesis.

\end{acknowledgements}

\body
\chapter{Introduction}

\section{Motivation and Goal}

T-310 was an extremely significant cipher as it was used for encryption of teletype communications during the Cold War in East Germany. Its production started in the 1970s and it was designed by mathematicians and engineers who specialised in cryptography and cryptology.\\
After it was examined for its security by ZCO and Soviet cryptologists in 1980, a large number of T-310 cipher machines started being massively produced, and the number of around 3800 active cipher machines being used in 1989 was reached. During the final years of the Cold War, T-310 had become so famous that it was used for encryption of governmental communicatios (see chapter 1 in [9]).\\\\
T-310 is a rather complex cipher and for more than 40 years no attack was published until 2018 by Nicolas T. Courtois et al. in [10]. The aim of the thesis is to study the so called 'long term keys' which were used for a long period of time - probably for a year [4] or when it was required [5] - in order to expose weaknesses of the cipher and exploit these weaknesses to produce invariances which will assist the design of future attacks on T-310.

\section{Structure of the Thesis}

In Chapter 2, we provide the reader with the necessary knowledge about T-310 and we explain how encryption with T-310 was performed. In the third Chapter, we study the Boolean function that was used inside the round function of T-310. In the next two Chapters we search for long term keys that are weak against Linear Cryptanalysis and Generalised Linear Cryptanalysis. These weak long term keys can then be used to deploy attacks on T-310. In the final Chapter, we summarise our results and we suggest how the research on the design of attacks on T-310 can be extended.
\chapter{Background Theory}

\label{ch:background}

In this chapter we will repeat what has already been stated in [9].
\vspace{-30pt}
\section{Introduction to T-310}
T-310 contains a block cipher, which is iterated many times $(13 \cdot 127 = 1651)$ in a stream encryption mode in order to extract only 10 bits, and use them to encrypt a single 5-bit character of the plaintext.\\ 
The size of the block cipher is 36 bits, the secret (short term) key has size 240 bits and the IV has a length of 61 bits. The secret key is halved into $s_{1-120,1}$ and $s_{1-120,2}$ and it is repeated every 120 rounds. In order to expand the IV bits, the designers of T-310 implemented an LFSR (Linear-Feedback Shift Register) which has a prime period of $2^{61}-1$ defined by:
$$f_i = f_{i-61} \oplus f_{i-60} \oplus f_{i-59} \oplus f_{i-56}$$

\begin{definition}[Long Term Key]
We call a long term key a triple $(D,P,\alpha)$, where $D:\{1,\dots,9\} \to \{0,\dots,36\}$, $P:\{1,\dots,27\} \to \{0,\dots,36\}$ and $\alpha \in \{1,\dots,36\}$
\end{definition}
There are different classes of long term keys, such as KT1 or KT2 (see Appendices A and B). The purpose of long term keys is to make T-310 operate in a different way each time a different long term key is used.

\begin{definition}[Round Function]
We denote by $u_{m,1-36}$ the block cipher's state on all 36 bits at a round $m = 0,1,2,\dots$ and we set the initial state $u_{0,1-36}=$ 0xC5A13E396.
We denote the round function by $\phi: \{0,1\}^3 \times \{0,1\}^{36} \to \{0,1\}^{36}$. Hence
$$u_{m,1-36} = \phi (s_{m,1},s_{m,2},f_m;u_{m-1,1-36})$$
\end{definition}
As we can see from Figure 2.1, T-310 is a Feistel cipher with 4 branches, and all bits that are numbered $1,\dots,36$ and are multiples of 4 are replaced, while those which are not multiples of 4 are just shifted to the next position. For example, bit 1 is shifted at position 2.
\begin{figure}[h]
    \centering
    \includegraphics[scale=0.6]{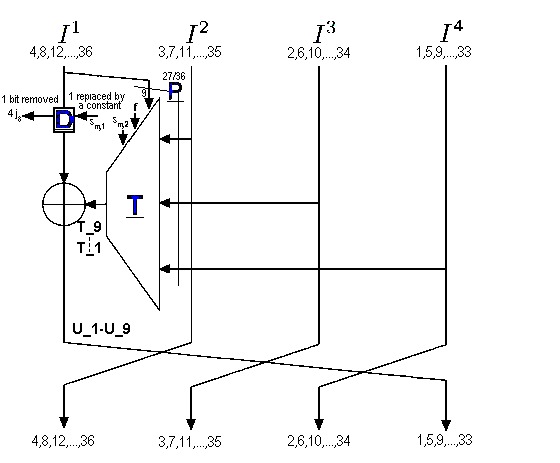}
    \caption{Internal Structure of an Encryption Round for T-310 for KT1 Keys}
    \label{fig:mesh1}
\end{figure}

\begin{definition}[Boolean Function Z]
T-310 uses the same Boolean function $Z:\mathbb{F}^6_2 \to \mathbb{F}_2$ four times in Z1,Z2,Z3 and Z4 (see Figure 2.2) and according to [15] is defined by:
\begin{align*}
Z(x0,x1,x2,x3,x4,x5) = x0 \oplus x4 \oplus x5 \oplus x0x3 \oplus x1x2 \oplus x1x4 \oplus x3x4 \oplus x4x5 \oplus x0x2x3\\ 
\oplus x0x2x5 \oplus x0x3x4 \oplus x1x2x5 \oplus x1x3x5 \oplus x2x4x5 \oplus x0x1x2x3\\
\oplus x0x1x2x4 \oplus x0x1x4x5 \oplus x1x2x3x5 \oplus x0x1x2x3x4 \oplus x0x2x3x4x5
\end{align*}
\end{definition}

We can now demonstrate how the bits $U_{1-9}$ are computed after one round. According to [9] whenever $D(i)=0$ we have $u_{m,0} \myeq s_{m+1,1}$. By the description of KT1 keys the bits $U_{1-9}$ are computed according to the following equations:

\setcounter{equation}{0}
\vskip-9pt
\vskip-9pt
\begin{align}
U_1 \oplus s_{1}&= U_2 \oplus u_{D(2)}  \oplus u_{P(27)}\tag{1}\\
U_2 \oplus u_{D(2)} &= U_3 \oplus u_{D(3)}  \oplus Z_4(u_{P(21-26)})\tag{2}\\
U_3 \oplus u_{D(3)} &= U_4 \oplus u_{D(4)}  \oplus u_{P(20)}\tag{3}\\
U_4 \oplus u_{D(4)} &= U_5 \oplus u_{D(5)}  \oplus Z_3(u_{P(14-19)}) \oplus s_{2}\tag{4}\\
U_5 \oplus u_{D(5)} &= U_6 \oplus u_{D(6)}  \oplus u_{P(13)}\tag{5}\\
U_6 \oplus u_{D(6)} &= U_7 \oplus u_{D(7)}  \oplus Z_2(u_{P(7-12)})\tag{6}\\
U_7 \oplus u_{D(7)} &= U_8 \oplus u_{D(8)}  \oplus u_{P(6)}\tag{7}\\
U_8 \oplus u_{D(8)} &= U_9 \oplus u_{D(9)}  \oplus Z_1(s_{2},u_{P(1-5)})\tag{8}\\
U_9 \oplus u_{D(9)} &= f~~~~~~~~~~~~ &\tag{9}
\end{align}

\begin{figure}[h]
    \centering
    \includegraphics[scale=0.27]{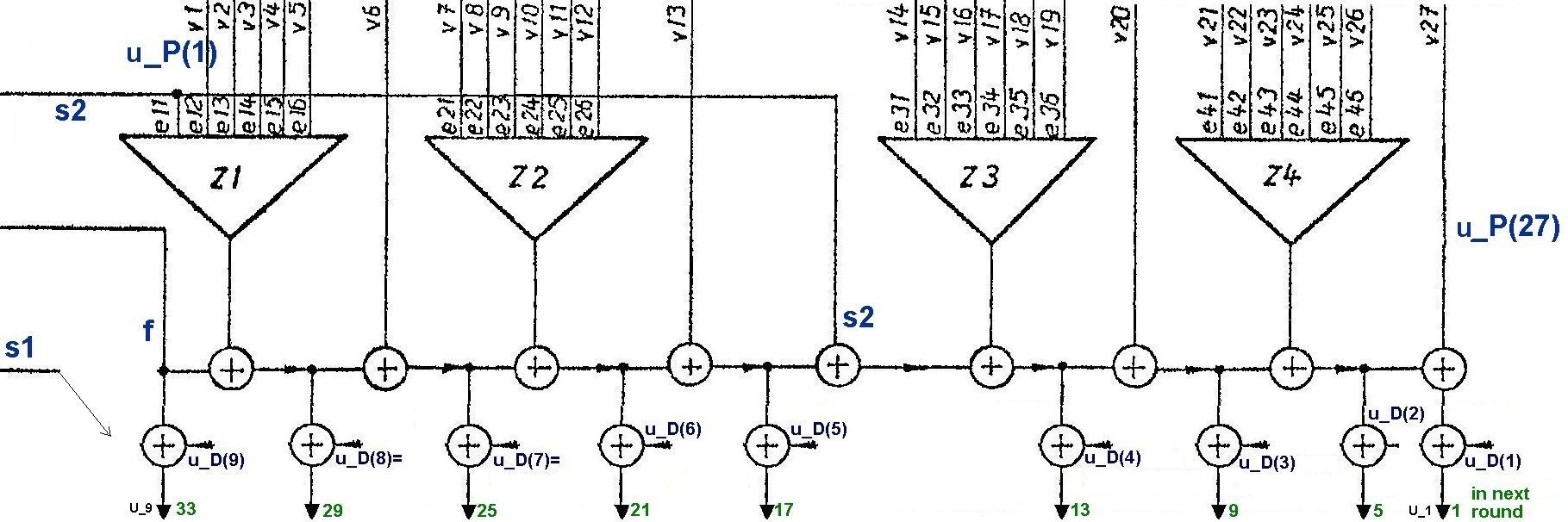}
    \caption{Internal Structure of One Round of T-310}
    \label{fig:mesh1}
\end{figure}

\pagebreak
\section{Encryption in T-310}

The constant $\alpha \in \{1, \dots ,36\}$ from the long term key defines which bit will be extracted after every 127 rounds in order to be used for the encryption. Thus
$$a_i \myeq u_{127\cdot i, \alpha}$$
After the collection of 13 bits, we dispose 3 of them and use the remaining as follows:
$$C_j = (P_j \oplus B_j) \cdot M^{r_j}$$
where $(P_j,C_j)$ is a 5-bit plaintext and ciphertext pair, $B_j = (a_{7+13(j-1)},\dots,a_{11+13(j-1)})$ are 5 consecutive bits out of the 13 we extracted and we use the remaining 5 consecutive bits to compute $r_j$ as follows:
$$r_l = \begin{cases}
0 & \quad \text{if}\,\,\,\,\,\,\,\,\,\,\,\,\,\,\,R_j = (0,0,0,0,0) \\
0 & \quad \text{if}\,\,\,\,\,\,\,\,\,\,\,\,\,\,\,R_j = (1,1,1,1,1) \\
31-r & \quad \text{if }R_j \cdot M^r = (1,1,1,1,1)
\end{cases}$$
where $R_j \myeq (a_{1+13(j-1)},\dots,a_{5+13(j-1)})$ and 
$$M = \begin{pmatrix}
0 & 0 & 0 & 0 & 1 \\
1 & 0 & 0 & 0 & 0 \\
0 & 1 & 0 & 0 & 1 \\
0 & 0 & 1 & 0 & 0 \\
0 & 0 & 0 & 1 & 0 
\end{pmatrix}, \text{   which is such that }M^{31}=Id$$

\vspace{50pt}
\section{Related Work}

The structure and the encryption process of T-310 was extensively studied in [9] by the acquisition of historical documents from 1970s. The long term KT1 keys and their properties were further studied in Appendix B of [9], while in Appendix C, Nicolas T. Courtois et al. studied the bijectivity of one round function $\phi$, and provided a mathematical proof that all KT1 keys induce a bijective round function. Furthermore, in [7] Nicolas T. Courtois presented a slide attack on T-310 in combination with a decryption oracle attack.\\\\
The notion of Linear Cryptanalysis, which was introduced in [6] by Matsui in order to design an attack for DES, has served as inspiration since then for the design of many attacks, including the attacks in section 4.3. Generalised Linear Cryptanalysis (GLC) was introduced by Carlo Harpes et al. in [2] and it was also studied in [8]. Generalised Linear Cryptanalysis can be used to break a Feistel cipher which is considered secure against Linear and Differential Cryptanalysis. Since T-310 is also a Feistel cipher with 4 branches, we were inspired to use GLC in order to find long term keys which are weak against it. Finally, we were also motivated by [12], where the authors present attacks on Toyocrypt and LILI-128, two stream ciphers that use LFSRs and non linear Boolean functions $f$, as T-310 also does, by exploiting the fact that there exist some multivariate polynomials $g$ such that $f\cdot g = 0$.
\chapter{Boolean Function of T-310}

It is expected that the Boolean function Z was designed in such a way that it is resistant against linear and differential cryptanalysis, so that an attacker will not be able to correlate some input bits with some output bits of the function. To check whether this is true, we need to calculate the Walsh Spectrum and the Autocorrelation Spectrum of the Boolean function. 

\begin{definition}[Walsh Spectrum]
The Walsh Spectrum of a Boolean function is defined to be the product of the Boolean function's truth table and the Hadamard Matrix $[14,16]$.
\end{definition}
\begin{definition}[Hadamard Matrix]
The Hadamard Matrix $(H_n)$ is defined to be an $n \times n$ square matrix, where $n=2^k$ for some $k \in \mathbb{N}$, whose entries are only $1$ or $-1$ and its rows are mutually orthogonal. Hadamard Matrices are defined by the following sequence:
\setlength{\belowdisplayskip}{5pt} \setlength{\belowdisplayshortskip}{5pt}
\setlength{\abovedisplayskip}{5pt}
\setlength{\abovedisplayshortskip}{5pt}
$$H_1 = [1]\,, \,\,\,\,\, H_2 = \begin{bmatrix}
1 & 1 \\
1 & -1 
\end{bmatrix}\,, 
\,\,\,\,\,
H_{2^k} = 
\begin{bmatrix}
H_{2^{k-1}} & H_{2^{k-1}} \\
H_{2^{k-1}} & -H_{2^{k-1}}
\end{bmatrix}
= H_2 \otimes H_{2^{k-1}}
$$
for $k \geq 2$, where $\otimes$ is the Kronecker product.
\end{definition}
\begin{definition}[Autocorrelation Spectrum]
The Autocorrelation Spectrum of a Boolean function $f$ is defined as 
\setlength{\belowdisplayskip}{5pt} \setlength{\belowdisplayshortskip}{5pt}
\setlength{\abovedisplayskip}{5pt}
\setlength{\abovedisplayshortskip}{5pt}
$$\hat{r}_f(x) = \sum_{y \in \mathbb{F}_2^n} (-1)^{f(y) \oplus f(y \oplus x)}$$
where $x \in \mathbb{F}_2^n$ $[1,3,13]$.
\end{definition}
\begin{table}[h]
\caption{Walsh Spectrum and Autocorrelation Spectrum for Boolean Function Z}
\begin{center}
{\small \begin{tabular}{|c|c|lllcc}
\cline{1-2} \cline{6-7}
Walsh Spectrum & Frequency &  &  & \multicolumn{1}{l|}{} & \multicolumn{1}{c|}{Autocorrelation Spectrum} & \multicolumn{1}{c|}{Frequency} \\ \cline{1-2} \cline{6-7} 
-10             & 1         &  &  & \multicolumn{1}{l|}{} & \multicolumn{1}{c|}{-24}                      & \multicolumn{1}{c|}{3}         \\ \cline{1-2} \cline{6-7} 
-8             & 2         &  &  & \multicolumn{1}{l|}{} & \multicolumn{1}{c|}{-16}                      & \multicolumn{1}{c|}{6}         \\ \cline{1-2} \cline{6-7} 
-6             & 8         &  &  & \multicolumn{1}{l|}{} & \multicolumn{1}{c|}{-8}                       & \multicolumn{1}{c|}{13}        \\ \cline{1-2} \cline{6-7} 
-4             & 5         &  &  & \multicolumn{1}{l|}{} & \multicolumn{1}{c|}{0}                        & \multicolumn{1}{c|}{20}        \\ \cline{1-2} \cline{6-7} 
-2              & 10        &  &  & \multicolumn{1}{l|}{} & \multicolumn{1}{c|}{8}                        & \multicolumn{1}{c|}{17}        \\ \cline{1-2} \cline{6-7} 
0              & 16        &  &  & \multicolumn{1}{l|}{} & \multicolumn{1}{c|}{16}                       & \multicolumn{1}{c|}{3}         \\ \cline{1-2} \cline{6-7} 
2              & 8         &  &  & \multicolumn{1}{l|}{} & \multicolumn{1}{c|}{24}                       & \multicolumn{1}{c|}{1}         \\ \cline{1-2} \cline{6-7} 
4              & 7         &  &  & \multicolumn{1}{l|}{} & \multicolumn{1}{c|}{64}                       & \multicolumn{1}{c|}{1}         \\ \cline{1-2} \cline{6-7} 
6              & 5         &  &  &                       &                                               &                                \\ \cline{1-2}
8             & 1         &  &  &                       &                                               &                                \\ \cline{1-2}
%32             & 1         &  &  &                       &                                               &                      #          \\ \cline{1-2}
\end{tabular}}
\end{center}
\end{table}
As we can see in both tables, most values are focused around 0 which indicates that the Boolean function Z is resistant against linear and differential cryptanalysis.\\
By extending Theorem C.0.1 from [12], we have:
\begin{theorem}
Let $f$ be any Boolean function $f:GF(2)^k \to GF(2)$. There is a Boolean function $g \neq 0$ of degree at most $k$ such that $f(x) \cdot g(x) = (1+a)\cdot g(x)$, where $a \in \{0,1\}$.
\end{theorem}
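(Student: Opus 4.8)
The plan is to reduce the claimed multiplicative identity to the single defining feature of Boolean functions over $GF(2)$, namely idempotency $f^2=f$ (and likewise $a^2=a$ for $a\in\{0,1\}$), and then to read off an explicit $g$. Rewriting the target identity $f\cdot g=(1+a)\,g$ as $(f+1+a)\,g=0$ over $GF(2)$ suggests the candidate $g=f+a$, since then $f+1+a=g+1$ and the product $(g+1)\,g=g^2+g$ vanishes by idempotency. So the first step is to verify, by a short expansion using $f^2=f$ and $a^2=a$, that $g=f+a$ satisfies $f\cdot g=f(1+a)=(1+a)\,g$ for \emph{every} choice of $a$, independently of any further hypothesis on $f$.

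Having secured the identity, I would dispatch the degree requirement immediately: every Boolean function on $k$ variables has algebraic normal form of degree at most $k$, and adding the constant $a$ changes only the constant term, so $\deg g=\deg(f+a)\le k$ with nothing to prove. The substantive clause is therefore only that $g$ can be taken \emph{nonzero}, and this is where the freedom in choosing $a$ is spent. I would split on whether $f$ is the identically-zero function: if $f\not\equiv 0$ I take $a=0$, so that $g=f\neq 0$; if $f\equiv 0$ I take $a=1$, so that $g=f+1=1\neq 0$. These two cases are exhaustive, so a nonzero admissible $g$ always exists, completing the argument.

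There is no genuine obstacle here --- the statement is essentially a packaging of idempotency --- so the only thing demanding care is the bookkeeping of the $GF(2)$ arithmetic: checking that $(1+a)$ collapses to the intended scalar ($1$ when $a=0$, $0$ when $a=1$) and that the chosen $g$ is nonzero in exactly the branch where it is used. It is worth remarking that this is precisely why the result is phrased as an \emph{extension} of the cited theorem: the parameter $a$ lets the two annihilation regimes, $(f+1)g=0$ (support of $g$ inside $\{f=1\}$) and $fg=0$ (support of $g$ inside $\{f=0\}$), be handled uniformly by the single choice $g=f+a$. The weak degree bound ``at most $k$'' is what makes the existence part trivial; any sharper bound of the Courtois--Meier type would instead require the counting argument of [12], which I would not pursue here.
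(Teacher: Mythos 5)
Your proof is correct, and it takes a genuinely different route from the paper's. The paper runs the annihilator counting argument adapted from [12]: it picks $a$ so that $X=\{x\,:\,f(x)=a\}$ has at most $2^{k-1}$ elements, forms the $|X|\times 2^{k}$ evaluation matrix of all monomials of degree at most $k$ on $X$, and extracts a nonzero $g$ vanishing on $X$ from the nontrivial kernel; the identity $f\cdot g=(1+a)\cdot g$ then follows by checking the two cases $x\in X$ and $x\notin X$. You instead solve the rewritten equation $(f+1+a)\,g=0$ outright with $g=f+a$, since $(g+1)g=g^{2}+g=0$ by idempotency, and spend the freedom in $a$ only on nonvanishing ($a=0$ unless $f\equiv 0$, in which case $a=1$); your $g$ is exactly the indicator function of $\{x\,:\,f(x)\neq a\}$, i.e.\ the maximal-support annihilator. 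Your route is shorter and makes transparent why the stated degree bound of $k$ renders the existence claim essentially trivial --- every $k$-variable Boolean function already has algebraic degree at most $k$ --- whereas the paper's counting machinery is what one would need if the degree bound were tightened (to $\lceil k/2\rceil$ as in the original theorem of [12]), and its kernel computation is also what yields the whole space of annihilators sampled in Remark 2, not just the single canonical one. A minor point in your favour: you treat the constant function $f\equiv 0$ explicitly, whereas the paper's proof excludes the constant functions at the outset and never returns to them.
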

\begin{proof}
The number of all possible inputs to a Boolean function of degree $k$ is $2^k$. For every Boolean function $f$ different from $f=0$ or $f=1$:
\setlength{\belowdisplayskip}{5pt} \setlength{\belowdisplayshortskip}{5pt}
\setlength{\abovedisplayskip}{5pt}
\setlength{\abovedisplayshortskip}{5pt}
$$\exists \text{ }a \in \{0,1\}\text{, such that } |\{x|f(x) = a\}|\text{ } \leq \text{ } \frac{1}{2}2^k$$
Let $X = \{x|f(x) = a\}$. We then construct a matrix whose rows are the elements of $X$ and the columns are all possible monomials of degree at most $k$. Each entry of the matrix is the value of the monomial with regard to the corresponding $x \in X$, which will result in the value $0$ or $1$. Clearly, the number of columns is:
$$\sum_{i=0}^{k} \binom ki = 2^k$$
We observe that the number of columns is greater than the number of rows, thus there should be a linear combination of monomials that is equal to $0$. Hence, for a Boolean function $g$ of degree of at most $k$:
\setlength{\belowdisplayskip}{0pt} \setlength{\belowdisplayshortskip}{0pt}
\setlength{\abovedisplayskip}{0pt}
\setlength{\abovedisplayshortskip}{0pt}
$$\exists \text{ }a \in \{0,1\}\text{, such that }\forall x \in X \text{, }f(x) = a 	\Rightarrow g(x) = 0$$
Therefore,
$$\forall x \in \{0,1\}^k \text{, } f(x) \cdot g(x) = (1 + a) \cdot g(x)$$
\end{proof}
\begin{remark1}
We can rewrite the equation in Theorem 3.0.1. as:
$$(f(x) + 1 + a) \cdot g(x) = 0$$
\end{remark1}
\begin{remark2}
There are 32 such $g$'s that satisfy the above equation for the Boolean function of T-310. We list one for each possible degree below.
\begin{itemize}
\item $g = x0*x1*x4 \oplus x0*x2*x3 \oplus x0*x2*x5 \oplus x0*x3*x4 \oplus x0*x3*x5 \oplus x0*x4*x5 \oplus x0 \oplus x1*x2*x3 \oplus x1*x2*x4 \oplus x1*x2*x5 \oplus x1*x2 \oplus x1*x3*x5 \oplus x1*x4*x5 \oplus x2*x3*x5 \oplus x2*x3 \oplus x4*x5 \oplus x4 \oplus x5 \oplus 1$
\item $g = x0*x1*x2*x3 \oplus x0*x1*x4 \oplus x0*x2*x3*x4 \oplus x0*x2*x3*x5 \oplus x0*x2*x4*x5 \oplus x0*x2*x5 \oplus x0*x3*x4*x5 \oplus x0*x3*x4 \oplus x0*x3*x5 \oplus x0 \oplus x1*x2*x3*x4 \oplus x1*x2*x4*x5 \oplus x1*x2*x4 \oplus x1*x2*x5 \oplus x1*x2 \oplus x4*x5 \oplus x4 \oplus x5 \oplus 1$
\item $g = x0*x1*x2*x3 \oplus x0*x1*x2*x4*x5 \oplus x0*x1*x4 \oplus x0*x2*x3*x5 \oplus x0*x2*x5 \oplus x0*x3*x4*x5 \oplus x0*x3*x4 \oplus x0*x3*x5 \oplus x0 \oplus x1*x2*x4*x5 \oplus x1*x2*x4 \oplus x1*x2*x5 \oplus x1*x2 \oplus x4*x5 \oplus x4 \oplus x5 \oplus 1$
\item $g = x0*x1*x2*x3*x4*x5$
\end{itemize}
\end{remark2}
\chapter{Linear Cryptanalysis}

\vspace{-25pt}
In this chapter we search for weak long term keys that exhibit linear invariances. In section 4.3 we explain the significance of these keys to mount attacks on T-310.
\vspace{-20pt}
\begin{notes}
\item For the remaining of this chapter we let $X^{(i)}$ denote values inside round $i$.
\end{notes}
\vspace{-30pt}

\section{Weak Keys with One-Bit
Correlations}
\begin{theorem}[A class of Alpha to Alpha properties]
\label{Preconditions6rAlphtoalph}
For each long term KT1 key such that
$D(6)=28$, $D(7)=36$, $D(8)=20$, $P(4)=30$, $P(5)=22$, $P(8)=18$, $P(10)=34$ and for any short term key on 240 bits, and for any initial state on 36 bits, we have the $\alpha\to\alpha$ linear approximation,
$[25]\to[25]$, for 6 rounds.
\end{theorem}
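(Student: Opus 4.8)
The plan is to prove the stronger, deterministic statement that $u_{m+6,25}=u_{m,25}$ holds as an identity for every round index $m$, every short term key and every initial state; a correlation-one approximation is exactly such an identity, so this is what ``$[25]\to[25]$ for $6$ rounds'' amounts to. First I would record the two propagation rules extracted from the round function: a bit in a position that is not a multiple of $4$ is copied from the position one lower, i.e.\ $u_{m,j}=u_{m-1,j-1}$ for $j\not\equiv 0 \pmod 4$, while the nine positions $4,8,\dots,36$ receive the freshly computed values $U_1,\dots,U_9$ of equations (1)--(9), so that position $4k$ carries $U_k$ (in particular position $24$ carries $U_6$, which is why the hypotheses single out $D(6),D(7),D(8)$). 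Since $25\equiv 1\pmod 4$, the first step is immediate: $u_{m,25}=u_{m-1,24}=U_6^{(m-1)}$, which reduces the claim to understanding how $U_6$ is generated and fed back.

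Next I would unroll $u_{m+6,25}=U_6^{(m+5)}$ by repeatedly substituting equations (1)--(9). The chain form of those equations is the key lever: adding the consecutive relations (6)--(9) telescopes, and the intermediate feedback bits $u_{D(7)},u_{D(8)},u_{D(9)}$ each occur twice and cancel mod $2$ regardless of their values, leaving only the boundary feedback $u_{D(6)}=u_{28}$ together with the LFSR bit $f$, the key bit $s_2$, and the Boolean outputs $Z_1(s_2,u_{P(1-5)})$ and $Z_2(u_{P(7-12)})$. The specified values $D(7)=36$ and $D(8)=20$ then come into play at the next layer, when the surviving multiples of $4$, namely $u_{28}=U_7$ and $u_{36}=U_9$, are themselves expanded, fixing which positions the subsequent feedback bits are read from; meanwhile the shift rule traces every non-multiple-of-$4$ bit straight back toward round $m$. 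Carrying this out through all six rounds expresses $u_{m+6,25}$ as $u_{m,25}$ plus a correction sum consisting of LFSR bits $f$, copies of the key bit $s_2$, and several Boolean-function outputs.

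The crux is to show that this entire correction sum vanishes identically, and the conditions $P(4)=30$, $P(5)=22$, $P(8)=18$, $P(10)=34$ are precisely what is needed: after the shift rule is applied, they force the six-bit argument tuples fed into $Z_1$ and $Z_2$ in the rounds where they occur to coincide, so the corresponding nonlinear outputs are equal and cancel in pairs, and the same bookkeeping shows each LFSR bit and each copy of $s_2$ enters an even number of times and cancels, leaving exactly $u_{m,25}$. I expect the main obstacle to be exactly this pairing of the nonlinear terms: one must track all six inputs of each $Z_j$ across the rounds and verify, argument by argument, that the specified $D$ and $P$ values align them, while simultaneously checking that the many unspecified entries of $D$ and $P$ drop out, as they must since the statement is claimed for every KT1 key meeting the listed constraints. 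I would organise this as a finite table listing, for each of the six rounds, the active positions and the exact arguments of every Boolean function encountered, and then read off the cancellations; the linear shift-tracking and the $D$-telescoping are routine, but this nonlinear cancellation is where the specified constants earn their place.
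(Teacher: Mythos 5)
There is a genuine gap, and it is in the very first move: you set out to prove the \emph{deterministic} identity $u_{m+6,25}=u_{m,25}$ for every key and state, but the theorem does not claim probability $1$. Unlike the later theorems in this chapter, which explicitly say ``true with probability exactly 1.0'', this one asserts a Matsui-style \emph{biased} linear approximation: the paper's trail is $[25]\to[28]$ (3 rounds, deterministic), then $[28]\to[19,21,29,35]$ with bias $2^{-3.4}$, then $[19,21,29,35]\to[25]$ with bias $2^{-2.4}$. The nonlinear terms do not cancel and cannot be made to cancel. After telescoping equations (6)--(9) you are left with $Z_2(u_{P(7-12)})\oplus Z_1(s_2,u_{P(1-5)})$ plus affine terms; since $P$ is injective these two occurrences of $Z$ act on disjoint sets of state bits (and at different rounds), so their sum is a genuinely nonlinear function of the state and no constraint on $D$ and $P$ makes it vanish identically. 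The listed conditions $P(8)=18$, $P(10)=34$ and $P(4)=30$, $P(5)=22$ serve a different purpose: they place the trail bits among the \emph{inputs} of $Z_2$ (round 3) and $Z_1$ (round 5), so that the nonzero Walsh coefficients of $Z$ at the masks selecting those inputs give $Z_2\oplus u_{18}\oplus u_{34}$ and $Z_1\oplus u_{22}\oplus u_{30}$ a bias --- they are correlated to $0$, not equal to $0$. Your plan would stall exactly at the step you flag as the crux, because the pairing you hope for does not exist.

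A secondary but also consequential error: you have the Feistel bookkeeping backwards. The fresh outputs $U_1,\dots,U_9$ are written into positions $1,5,\dots,33$ (so $U_7$ becomes the new bit $25$), while a bit at a position $j$ with $j\not\equiv 0\pmod 4$ is \emph{copied forward} to position $j+1$. Hence $u_{m,25}$ is not $u_{m-1,24}$; rather $u_{m-1,25}$ propagates to $u_{m,26}$, which is why the paper gets the deterministic 3-round segment $[25]\to[26]\to[27]\to[28]$ before any nonlinearity is touched. To repair the proof you would need to (i) follow bit $25$ forward through three shifts to bit $28=D(6)$, (ii) use the combination of equations (6) and (7) with $P(6)=D(8)$ to express the round-$4$ bits $21,29$ in terms of $u_{28}$ and $Z_2$, approximating $Z_2$ by $u_{P(8)}\oplus u_{P(10)}=u_{18}\oplus u_{34}$ (which shift to $19,35$), and (iii) use equations (7)--(9) with $D(7)=36$, $D(8)=20$ to land back on bit $25$, approximating $Z_1$ by $u_{P(4)}\oplus u_{P(5)}=u_{30}\oplus u_{22}$ --- keeping track of the bias at each nonlinear step via the Walsh spectrum of $Z$ rather than expecting exact cancellation.
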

\vspace{-25pt}

\begin{proof}
We will prove the above theorem using the long term key $625$ which has the above characteristics.\\
$625: P=7, 32, 33, 30, 22, 20, 5, 18, 9, 34, 35, 31, 36, 28, 21, 24, 27, 25, 26, 16, 4, 23, 19, 29, 8, 12, 11$ $D=0, 32, 24, 8, 12, 28, 36, 20, 4$\\
We will show that the following holds:

\begin{table}[h]
 \caption{A Detailed Explanation for Key 625}
\label{SomeGoodLC1to1Invariant625details}
\begin{center}
{\small \begin{tabular}{|c|c|c|c|}
\hline
rounds & input $\to$ output & bias\\
\hline
3& [25] $\to$ [28] & $2^{- 1.0}$\\
1& [28] $\to$ [19,21,29,35] & $2^{- 3.4}$\\
2& [19,21,29,35] $\to$ [25] & $2^{- 2.4}$\\
\hline
\end{tabular}}
\end{center}
\end{table}

\begin{figure}[h]
    \centering
    \includegraphics[scale=0.47]{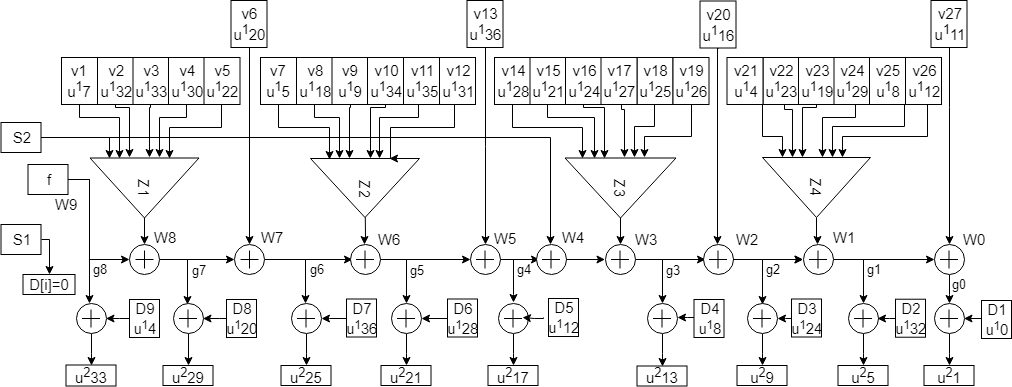}
    \caption{One Round of T-310 for Key 625}
    \label{fig:mesh1}
\end{figure}

First, we observe that $[25]\to[26]\to[27]\to[28]$ for $3$ rounds. So $u_{25}^{(1)}=u_{28}^{(3)}$.\\
We combine equations (6) and (7) to get
\begin{equation} 
\label{eq:3.1}
U_6=u_{D(6)}\oplus Z_2(v7-v12)\oplus U_8 \oplus u_{D(8)} \oplus u_{P(6)}
\end{equation}
From the description of KT1 keys we have that $P(6)=D(8)$. This means that $v6=u_{20}^{(3)}=u_{D(8)}$ and the XOR of two quantities that are equal is zero. Furthermore, according to the theorem we have $D(6)=28$. So equation \ref{eq:3.1} becomes
\begin{equation} 
\label{eq:3.2}u_{21}^{(4)}\oplus u_{29}^{(4)}\oplus u_{28}^{(3)}=Z_2^{(3)}(v7-v12)
\end{equation}
From the theorem we observe that $v8=u_{18}^{(3)}$ and $v10=u_{34}^{(3)}$ which are two of the inputs of $Z_2$. If we add $u_{18}^{(3)}$ and $u_{34}^{(3)}$ in both sides of the equation \ref{eq:3.2}, we have
\begin{equation} 
\label{eq:3.3}
Z_2^{(3)}(v7-v12)\oplus u_{18}^{(3)}\oplus u_{34}^{(3)}=u_{28}^{(3)}\oplus u_{21}^{(4)}\oplus u_{29}^{(4)}\oplus u_{18}^{(3)}\oplus u_{34}^{(3)}
\end{equation}
We know that bits $u_{18}$ and $u_{34}$ in round $3$ will become bits $u_{19}$ and $u_{35}$ in round $4$, respectively. So equation \ref{eq:3.3} becomes
$$Z_2^{(3)}(v7-v12))\oplus u_{18}^{(3)}\oplus u_{34}^{(3)}=u_{28}^{(3)}\oplus u_{21}^{(4)}\oplus u_{29}^{(4)}\oplus u_{19}^{(4)}\oplus u_{35}^{(4)}$$
and the expression $u_{28}^{(3)}\oplus u_{21}^{(4)}\oplus u_{29}^{(4)}\oplus u_{19}^{(4)}\oplus u_{35}^{(4)}$ is biased.\\
Thus, we have showed that $[28]\to[19,21,29,35]$.\\
In the fifth round the bits $u_{19}$, $u_{21}$, $u_{29}$, $u_{35}$ become $u_{20}$, $u_{22}$, $u_{30}$, $u_{36}$.
According to the theorem, we have $D(7)=36$, $D(8)=20=P(6)$, $P(4)=30$ and $P(5)=22$. So we have $v4=u_{30}^{(5)}$, $v5=u_{22}^{(5)}$, $v6=u_{20}^{(5)}$ and $u_{D(7)}^{(5)}=u_{36}^{(5)}$.\\
Combining equations (7), (8) and (9) we get
\vspace{-5pt}
$$U_7=u_{D(7)}\oplus u_{P(6)}\oplus Z_1(s_2,v1-v5)\oplus f$$
\vspace{-5pt}
which becomes
$$u_{25}^{(6)}=u_{36}^{(5)}\oplus u_{20}^{(5)}\oplus Z_1^{(5)}(s_2,v1-v5)\oplus f^{(5)}$$
We observe that the output of $Z_1^{(5)}$ is correlated to the XOR of two of its inputs $u_{22}^{(5)}\oplus u_{30}^{(5)}$
Hence, we have
$$u_{25}^{(6)}\oplus f^{(5)}=u_{36}^{(5)}\oplus u_{20}^{(5)}\oplus Z_1^{(5)}(s_2,v1-v5)\oplus u_{22}^{(5)}\oplus u_{30}^{(5)}$$
where the right hand side is biased.\\
Thus, we showed that $[19,21,29,35]\to[25]$ for 2 rounds.\\
Therefore, we have shown that $[25]\to[25]$ holds for 6 rounds and all the conditions of the theorem were satisfied.
\end{proof}
\noindent\emph{Proof produced by software\footnote{This is a product of software written by Nicolas T. Courtois}:}
{\small
\begin{lstlisting}
$[25]\to[26]$ i25=o26
$[26]\to[27]$ i26=o27
$[27]\to[28]$ i27=o28
$[28]\to[19,21,29,35]$ Z2=g6$\oplus$g5 Z2=Z2e2+Z2e4 Z2e2=i18 Z2e4=i34 v6=g7$\oplus$g6 v6=i20 
               d8=g7$\oplus$o29 d8=i20 d6=g5$\oplus$o21 d6=i28 i18=o19 i34=o35
$[19,21,29,35]\to[20,22,30,36]$ i19=o20 i21=o22 i29=o30 i35=o36
$[20,22,30,36]\to[25]$ Z1=f$\oplus$g7 Z1=Z1e5+Z1e6 Z1e5=i30 Z1e6=i22 v6=g7$\oplus$g6 v6=i20  
               d7=g6$\oplus$o25 d7=i36

\end{lstlisting}}
\noindent\emph{Another Proof for Key 729:}\\
$729: P=7, 23, 33, 16, 31, 4, 5, 1, 9, 12, 14, 13, 36, 8, 21, 3, 24, 25, 32, 20, 2, 6, 30, 29, 28, 26, 18$ \\$D=0, 12, 16, 28, 8, 32, 36, 4, 24$
{\small
\begin{lstlisting}
$[29]\to[30]$ i29=o30
$[30]\to[31]$ i30=o31
$[31]\to[32]$ i31=o32
$[32]\to[2,21,29]$ Z2=g6$\oplus$g5 Z2=Z2e2 Z2e2=i1 v6=g7$\oplus$g6 v6=i4 d8=g7$\oplus$o29 d8=i4 
            d6=i32 d6=g5$\oplus$o21 i1=o2
$[2,21,29]\to[3,22,30]$ i2=o3 i21=o22 i29=o30 
$[3,22,30]\to[4,23,31]$ i3=o4 i22=o23 i30=o31
$[4,23,31]\to[29]$ Z1=f$\oplus$g7 Z1=Z1e3+Z1e6 Z1e3=i23 Z1e6=i31 d8=g7$\oplus$o29 d8=i4

\end{lstlisting}}

\begin{remark}
Whenever a long term KT1 key exhibits an one-bit correlation of type $[4\cdot k +1] \to [4\cdot k +1]$, where $k \in \{0,1,\dots,7\}$, then the same key can also be used for one-bit correlations of type $[4\cdot k +i] \to [4\cdot k +i]$, where $i \in \{1,2,3,4\}$.
\end{remark}

\begin{conjecture}
From our observations, we believe that the smaller the number $\alpha$ is the hardest it is to produce one-bit correlations. That is because as the number $\alpha$ becomes smaller, the Hamming weight (HW) at round 4 becomes larger. For example, as you can also observe in Appendix C, if $\alpha \in \{29,\dots,32\}$, $HW = 3$ at round 4 and  if $\alpha \in \{25,\dots,29\}$, $HW = 4$ at round 4 etc.
\end{conjecture}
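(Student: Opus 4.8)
The plan is to split the conjecture into a rigorous combinatorial kernel — the claim that the Hamming weight at round $4$ is a non-increasing step function of $\alpha$, constant on each block $\{4k+1,\dots,4k+4\}$ — and a softer heuristic tail, namely that larger round-$4$ weight makes a one-bit correlation harder to realise. I would attack the kernel directly and treat the tail via the piling-up lemma.

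First I would formalise mask propagation for a single round as a map on the $36$-bit mask space that acts in two distinct ways. On the non-multiples of $4$ it is the deterministic shift $p \mapsto p+1$ recorded in the software output (the lines \texttt{i25=o26}, etc.), which never changes the Hamming weight. On each multiple of $4$ it is an \emph{expansion}: the bit stored there is one of $U_1,\dots,U_9$, so to continue the trail one substitutes the defining equation (1)--(9), replaces the involved $Z_i$ by its best low-weight linear approximation (the proof uses two-input approximations such as $u_{22}\oplus u_{30}$ for $Z_1$), and cancels the terms that the KT1 constraints force to coincide (for instance $P(6)=D(8)$, used to kill $v6=u_{D(8)}$). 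The round-$4$ weight is then exactly the number of monomials surviving this single expansion step.

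Next, using the Remark that a key supporting $[4k+1]\to[4k+1]$ also supports $[4k+i]\to[4k+i]$, I would restrict to $\alpha=4k+1$. Three pure shifts carry such a mask to the multiple of $4$ at position $4(k+1)$ with weight still $1$, and the expansion happens there; this is the step $[25]\to[28]$ then $[28]\to[19,21,29,35]$ for key $625$ and $[32]\to[2,21,29]$ for key $729$. The key point to establish is that the \emph{index} of the equation reached, and hence how many $Z_i$-approximation terms accumulate when the telescoping chain (1)--(9) is unwound, is determined by $k$ alone: equation (9) is the lightest ($U_9\oplus u_{D(9)}=f$) and each step up the chain toward $U_1$ accrues one more $u_{P(\cdot)}$ or $Z_i$ contribution. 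Since smaller $\alpha$ lands higher in this chain, it inherits more surviving terms, so the round-$4$ weight is non-increasing in $\alpha$. I would then close the kernel by the finite check already implicit in Appendix C: tabulate the surviving weight for all nine blocks, recovering $HW=4$ on the block containing $25$--$28$ and $HW=3$ on $29$--$32$, and read off monotonicity by inspection.

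For the difficulty clause I would invoke the piling-up lemma: the total six-round bias is the product of the per-segment biases ($2^{-1.0}$, $2^{-3.4}$, $2^{-2.4}$ for key $625$), and each extra active bit at the expansion round forces an additional $Z_i$-approximation, contributing another factor strictly below $2^{-1}$; thus larger round-$4$ weight mechanically lowers the attainable bias and tightens the constraints imposed on $(D,P)$ to close the trail back to a single bit. \textbf{The main obstacle}, and the reason the statement is posed as a conjecture rather than a theorem, is that the round-$4$ mask is in principle key-dependent through $D$ and $P$: proving the weight depends on the block of $\alpha$ \emph{alone} requires showing that the KT1 structural constraints force exactly the cancellations that make the surviving-term count key-independent, and the difficulty direction is only a tendency — a fully rigorous version would have to quantify over all KT1 keys and all admissible closing trails, for which the piling-up estimate gives evidence but not a proof.
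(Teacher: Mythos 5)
There is no proof in the paper for you to match: this statement is deliberately labelled a \emph{conjecture}, and its only support is the empirical enumeration in Appendix~C (preconditions found by search, with tables of the surviving bit-sets after 4 rounds and the proportion of weak keys for each case). Your proposal honestly flags this, but its combinatorial kernel contains a concrete flaw. You argue that the round-4 Hamming weight is governed by \emph{where in the telescoping chain} (1)--(9) the shifted bit lands, with smaller $\alpha$ landing ``higher'' and accruing one extra term per step. This is not how the trails in the paper work, for two reasons. First, the equation index reached is not a function of $\alpha$ at all: it is fixed by the key's choice of which $D(i)$ equals the multiple of $4$ that the mask reaches after three shifts. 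Appendix~C makes this explicit --- for $[25]\to[25]$ the cases $D(6)=28$, $D(5)=28$, $D(4)=28$ and $D(3)=28$ (Cases 4a, 4b, 5, 6) all occur, i.e.\ four different positions in the chain, yet \emph{all} yield $HW=4$; likewise $D(3),\dots,D(6)=32$ all yield $HW=3$ for $[29]\to[29]$. Second, your mechanism predicts weight strictly increasing by one per block as $\alpha$ decreases, but Appendix~C refutes this: the $[21]\to[21]$ cases (Cases 7--9) also have internal sets of $4$ bits after 4 rounds, not $5$. The actual driver of the weight is which $Z_i$-approximations (one-input, as in key $729$ where $Z_2\approx i1$, versus two-input, as in key $625$ where $Z_2\approx i18\oplus i34$) can be closed back into a single-bit trail under the KT1 constraints --- and the paper determines this by enumeration, not by any chain-position argument; the proofs in Section~4.1 combine only two or three \emph{adjacent} equations and rely on forced cancellations such as $P(6)=D(8)$, never unwinding the whole chain from~(9).

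Your treatment of the ``hardness'' clause also measures the wrong quantity. In the paper, ``harder to produce'' means that \emph{fewer long term keys} satisfy the preconditions: the Appendix~C tables quantify this as a proportion of weak keys (e.g.\ $2^{-11.76}$ for the best $[29]\to[29]$ cases against $2^{-16.01}$ and worse for $[25]\to[25]$). A piling-up argument about the bias of a \emph{given} trail does not address key density, and the inference ``more active bits at round 4 $\Rightarrow$ lower bias'' is false in general for this cipher: the $8$-round and $6$-round multi-bit properties of Section~4.2 have many active bits yet hold with probability exactly $1.0$, because every $Z$ input is cancelled rather than approximated. So a correct formalisation would have to count admissible $(D,P)$ under the KT1 constraints for each closing trail --- precisely the computation Appendix~C performs case by case --- rather than multiply per-segment biases. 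In short: the paper offers evidence, not a proof; your plan does not close that gap, and its central structural claim (weight determined by $\alpha$'s block via chain position) is contradicted by the paper's own data.
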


\section{More Weak Keys}
From the theorem below, we can see that it is possible for a long term KT1 key to exhibit an invariant property which includes more than one bits.
\begin{theorem}[A class of 8R properties]
\label{Preconditions6rAlphtoalph}
For each long term KT1 key such that
$\{D(7),D(9)\}\in \{12,16\}$, $\{D(3)/D(4)\}=32$, with the remaining of $\{D(3)/D(4)\}\in\{28,36\}$ and finally the remaining of $\{28,36\}\in P(20)$  and for any short term key on 240 bits, and for any initial state on 36 bits,  
we have the linear approximation
$[9,13]\to[9,13]$ which is true with probability exactly 1.0 for 8 rounds.
\end{theorem}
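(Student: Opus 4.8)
The plan is to prove the $[9,13]\to[9,13]$ relation as a single exact (probability one) linear identity, obtained by tracking the mask $u_9\oplus u_{13}$ through all eight rounds and exploiting that the round function is, on most bits, a pure position shift: any bit whose index is not a multiple of $4$ is copied to the next position, so a mask supported on such bits merely has its support incremented by one per round, losslessly and with no bias. Since $9\equiv 13\equiv 1\pmod 4$, the pair $(9,13)$ travels in lockstep to $(10,14)$, $(11,15)$ and $(12,16)$ over the first three rounds, so the quantity to follow becomes $u_{12}\oplus u_{16}$. Because $12$ and $16$ are multiples of $4$, these bits are consumed at the fourth round; this is the first of only two ``active'' rounds, the remaining six being pure shifts.

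At the fourth round I would invoke $\{D(7),D(9)\}=\{12,16\}$ to write $u_{12}\oplus u_{16}=u_{D(7)}\oplus u_{D(9)}$, and then read equations (7) and (9) backwards to express this in terms of the freshly produced bits $U_7,U_8,U_9$ (which occupy the refilled positions $25,29,33$). Equation (9) gives $u_{D(9)}=U_9\oplus f$, and equation (7) together with the KT1 relation $P(6)=D(8)$, so that $u_{D(8)}\oplus u_{P(6)}=0$, gives $U_7\oplus U_8=u_{D(7)}$. Adding these yields the exact identity $U_7\oplus U_8\oplus U_9=u_{D(7)}\oplus u_{D(9)}\oplus f$. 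The decisive structural point is that the nonlinear term $Z_1$ in equation (8) enters $U_7$ and $U_8$ through the same branch and cancels in $U_7\oplus U_8$, so \emph{no} $Z$-term survives and no approximation is incurred; the mask now reads $u_{25}\oplus u_{29}\oplus u_{33}$, up to the injected bit $f$.

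These three indices are again in shift positions, so over rounds five, six and seven they move deterministically to $28,32,36$, and the tracked quantity becomes $u_{28}\oplus u_{32}\oplus u_{36}$ at the input of the eighth round. There I would use $\{D(3),D(4),P(20)\}=\{28,32,36\}$ to match these indices with the inputs read by the linear equation (3), namely $U_3\oplus U_4=u_{D(3)}\oplus u_{D(4)}\oplus u_{P(20)}$. Since $U_3$ and $U_4$ are exactly the new bits placed at positions $9$ and $13$, this collapses the three-bit mask back to $u_9\oplus u_{13}$ on the output state, closing the eight-round cycle. Composing the four equalities (shift, round $4$, shift, round $8$) gives a chain of genuine identities, hence a relation that holds with probability exactly one.

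The step I expect to be the real obstacle is not the shift arithmetic but confirming \emph{exactness} end to end, i.e.\ matching the five distinguished indices $\{12,16\}$ and $\{28,32,36\}$ to the correct linear equations (3), (7), (9) and certifying that no branch ever routes the mask through a $Z$-equation. The prototype is the cancellation above, enabled by $P(6)=D(8)$; one must also check that the key bit $s_2$, which appears only inside the cancelled $Z_1$, indeed disappears, and carefully carry the LFSR bit $f$ injected through equation (9) at the middle round, since in the T-310 approximation framework it is precisely the deterministic bookkeeping of such $f$ and $s_2$ contributions (rather than any probabilistic estimate) that certifies the correlation is $1$. Establishing that the particular placement of $32$ and of the leftover pair $\{28,36\}$ among $D(3),D(4),P(20)$ is exactly what forces this clean cancellation is the heart of the argument.
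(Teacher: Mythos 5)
Your proposal is correct and follows essentially the same route as the paper's proof: three shift rounds taking $[9,13]$ to $[12,16]$, an active fourth round combining equations (7) and (9) with the $P(6)=D(8)$ cancellation to reach $[25,29,33]$ (carrying the $f$ bit), three more shift rounds to $[28,32,36]$, and a final active round via equation (3) with $\{D(3),D(4),P(20)\}=\{28,32,36\}$. Your symmetric treatment of $u_{D(7)}\oplus u_{D(9)}$ merely merges the paper's two cases ($D(7)=12,D(9)=16$ and vice versa) into one step.
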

\vskip1pt

\begin{proof}
A long term key that has the above characteristics is key 788:\\
$788 : P = 26,19,33,36,4,20,5,27,9,17,2,11,12,31,21,22,1,25,7,28,16,24,32,29,8,30,34$\\
$D = 0,4,36,32,24,8,12,20,16$\\
We will show that the following holds:

\begin{table}[h]
\caption{A Detailed Explanation of the 8 Round Property}
\label{SomeGoodLC1to1Invariant625details}
\begin{center}
{\small \begin{tabular}{|c|c|c|c|c|}
\hline
rounds & input $\to$ output & bias\\
\hline
3& [9,13] $\to$ [12,16,$f^{(3)}$] & $2^{- 1.0}$\\
1& [12,16,$f^{(3)}$] $\to$ [25,29,33] & $2^{- 1.0}$\\
3& [25,29,33] $\to$ [28,32,36] & $2^{- 1.0}$\\
1& [28,32,36] $\to$ [9,13] & $2^{- 1.0}$\\
\hline
\end{tabular}}
\end{center}
\end{table}

First, we observe that $[9]\to[10]\to[11]\to[12]$ and $[13]\to[14]\to[15]\to[16]$ for $3$ rounds. So $u_{9}^{(1)}=u_{12}^{(3)}$ and $u_{13}^{(1)}=u_{16}^{(3)}$.\\
From the description of KT1 keys we have that $P(6)=D(8)$. This means that $v6=u_{20}^{(3)}=u_{D(8)}$ and the XOR of two quantities that are equal, is zero. So equation (7) becomes
$$u_{21}^{(4)}\oplus u_{29}^{(4)}\oplus u_{25}^{(4)}=u_{D(7)}^{(3)}$$
\begin{figure}[h]
    \centering
    \includegraphics[scale=0.47]{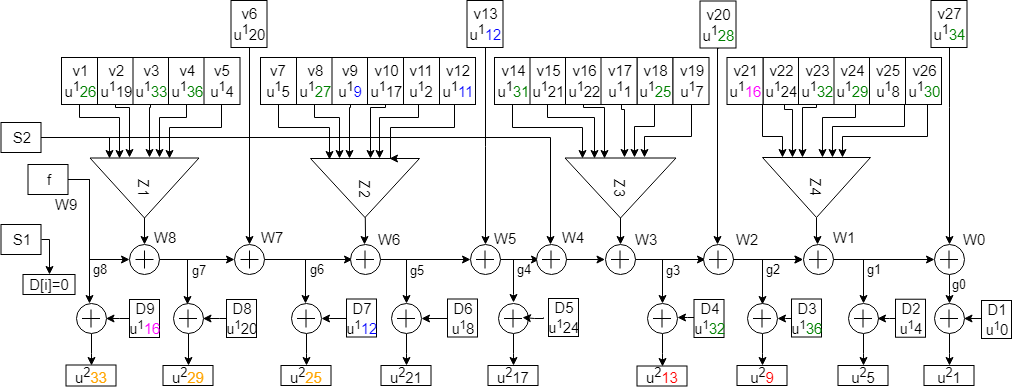}
    \caption{One Round of T-310 for Key 788}
    \label{fig:mesh1}
\end{figure}
\pagebreak
\\Then we have two cases:\\
Case A: If $D(7)=12$ and $D(9)=16$, then we have $u_{29}^{(4)}\oplus u_{25}^{(4)}=u_{12}^{(3)}$ and using (9) we have $u_{33}^{(4)} = f^{(3)}\oplus u_{16}^{(3)}$ which implies that $[12]\to [25,29]$ and $[16,f^{(3)}]\to [33]$.\\
Case B: If $D(7)=16$ and $D(9)=12$, then we have $u_{29}^{(4)}\oplus u_{25}^{(4)}=u_{16}^{(3)}$ and using (9) we have $u_{33}^{(4)} = f^{(3)}\oplus u_{12}^{(3)}$ which implies that $[16]\to [25,29]$ and $[12,f^{(3)}]\to [33]$.\\
So in both cases we have $[12,16,f^{(3)}]\to [25,29,33]$\\
Then we can easily observe that $[25]\to [28]$, $[29]\to [32]$ and $[33]\to [36]$ for 3 rounds.\\
From equation (3), the following holds
$$u_{9}^{(8)}\oplus u_{13}^{(8)}=u_{D(3)}^{(7)}\oplus u_{D(4)}^{(7)}\oplus u_{P(20)}^{(7)}$$
If we first have $\{D(3)/D(4)\}=32$, with the remaining of $\{D(3)/D(4)\}\in\{28,36\}$ and finally the remaining of $\{28,36\}\in P(20)$, then clearly $\{D(3),D(4),P(20)\}\in \{28,32,36\}$. Hence, the following holds
$$u_{9}^{(8)}\oplus u_{13}^{(8)}=u_{28}^{(7)}\oplus u_{32}^{(7)}\oplus u_{36}^{(7)}$$
Therefore, we have successfully shown that $[28,32,36]\to [9,13]$.\\
However, we cannot have $P(20)=32$ and $\{D(3),D(4)\}\in \{28,36\}$ because the following condition of KT1 keys will no longer hold:\\
"There exist $\{j_1,j_2,\dots,j_7,j_8\}$ a permutation of $\{2,3,\dots ,9\}$ which defines\\
$D(i)$ for every $i \in \{2,3,\dots ,9\}$ as follows:\\
$D(j_1)=4,D(j_2)=4j_1,D(j_3)=4j_2,\dots,D(j_8)=4j_7$"
\end{proof}
\vspace{30pt}
However, even if we can construct KT1 keys which have linear invariances such the one we have just studied, in order to mount an attack like the one described in section 4.3.2 we need to find invariant properties that include both IV bits and secret key bits. As we will see in the rest of this section this is also possible.

\newpage
\begin{theorem}[A class of 6R properties]
\label{Preconditions6rAlphtoalph}
For each long term KT1 key such that
$D(7)=16$, $\{D(3),D(4),P(20)\}\subset \{4,8,36\}$, $P(27)=10$ and finally $\{D(2),D(9)\}\subset \{28,32\}$ 
and for any short term key on 240 bits, and for any initial state on 36 bits, we have the linear approximation
$[1,5,15,33]\to[1,5,15,33]$ which is true with probability exactly 1.0 for 6 rounds.
\end{theorem}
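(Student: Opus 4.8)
The plan is to mirror the technique of the two preceding proofs: exhibit one representative KT1 key meeting the four tap-conditions, then follow the single linear mask $[1,5,15,33]$ through the six rounds, at each step either shifting a bit deterministically or eliminating a feedback bit through one of the round equations (1)--(9). Because the contracting Feistel structure sends the value at every position $p\not\equiv 0 \pmod 4$ to position $p+1$ in the next round with probability $1$, and because I will only ever invoke the $Z$-free equations (1), (3), (7) and (9), no probabilistic $Z$-approximation enters and the whole trail is deterministic --- this is what yields ``probability exactly $1.0$''. The role of each hypothesis is precisely to make one feedback step collapse to a purely linear relation on the tracked bits.

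Concretely I would establish the mask chain
\[
\{1,5,15,33\}\to\{2,6,16,34\}\to\{3,7,25,29,35\}\to
\]
\[
\{4,8,26,30,36\}\to\{9,13,27,31\}\to\{10,14,28,32\}\to\{1,5,15,33\},
\]
treating the three feedback events as follows. At round $2$ the bit that started at $15$ has shifted into position $16=D(7)$; equation (7) together with the generic KT1 identity $P(6)=D(8)$ collapses to $u_{D(7)}=U_7\oplus U_8$, so position $16$ is replaced by the freshly computed positions $\{25,29\}$ with no $Z$-term. At round $4$ the bits $\{4,8,36\}$ coincide with $\{D(3),D(4),P(20)\}$, hence enter equation (3) simultaneously; since (3) is linear this gives $u_{4}\oplus u_{8}\oplus u_{36}=U_3\oplus U_4$, replacing $\{4,8,36\}$ by $\{9,13\}$. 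At round $6$ the feedback bits $\{28,32\}=\{D(2),D(9)\}$ are eliminated through the two clean equations (9), giving $u_{D(9)}=U_9\oplus f=u_{33}\oplus f$, and (1), giving $u_{D(2)}=U_1\oplus U_2\oplus u_{P(27)}\oplus s_1=u_1\oplus u_5\oplus u_{10}\oplus s_1$ via $P(27)=10$.

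The step I expect to be the crux is the closure at round $6$. The mask $\{10,14,28,32\}$ carries, besides the desired endpoints, a stray shifted bit: position $10$ shifts to $11$. What rescues the invariant is exactly $P(27)=10$: the term $u_{P(27)}=u_{10}$ produced by equation (1) shifts to the same $u_{11}$ and cancels the stray bit, leaving precisely $u_1\oplus u_5\oplus u_{15}\oplus u_{33}$ on the output side. Carrying out this bookkeeping shows that the six-round relation closes not as a pure state identity but as $\bigoplus_{\{1,5,15,33\}}u^{(0)}\oplus\bigoplus_{\{1,5,15,33\}}u^{(6)}=s_1\oplus f$; the appearance of the key bit $s_1$ and the IV bit $f$ is exactly the ``invariant property that includes both IV bits and secret key bits'' announced just before the statement, and the relation holds deterministically.

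Two points then remain to secure the ``for each such KT1 key'' quantifier. First, I must check that the trail never reads an unspecified tap, so that the freely chosen entries of $D$ and $P$ cannot inject any $Z_1,\dots,Z_4$ term --- this holds because the six masks above only ever meet the taps $D(7)$, $\{D(3),D(4),P(20)\}$, $\{D(2),D(9)\}$ and $P(27)$ that the hypotheses fix, but it should be stated explicitly. Second, as flagged at the end of the previous proof, one must confirm that the four conditions are jointly compatible with the KT1 defining constraints on $D$ (in particular the permutation relation $D(j_1)=4,\dots,D(j_8)=4j_7$), so that such keys actually exist; exhibiting one concrete key with the required pattern simultaneously discharges this consistency check and anchors the generic argument.
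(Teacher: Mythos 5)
Your proposal is correct and follows essentially the same route as the paper's proof: the same $2+2+2$ trail $[1,5,15,33]\to[3,7,25,29,35]\to[9,13,27,31]\to[1,5,15,33]$ closed with the same $Z$-free equations (7), (3), (1) and (9), each hypothesis playing exactly the role you assign it, and the paper likewise anchors the ``for each such key'' claim on a concrete representative (key 706). Your account of the round-6 closure --- $P(27)=10$ forcing the $u_{P(27)}$ term of equation (1) to coincide with the shifted image of bit 9, so that the six-round relation closes as $u_1\oplus u_5\oplus u_{15}\oplus u_{33}$ on both sides up to the known offset $s_1\oplus f$ --- is precisely the mechanism in the paper's last step.
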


\begin{proof}
A long term key that has the above characteristics is key 706:\\
$706 : P = 8,2,33,4,13,20,5,14,9,22,30,31,16,19,21,32,3,25,28,36,27,11,23,29,12,24,10$\\
$D = 0,28,8,4,24,12,16,20,32 $\\
We will show that the following holds:

\begin{table}[h]
\caption{A Detailed Explanation of the 6 Round Property}
\label{SomeGoodLC1to1Invariant625details}
\begin{center}
{\small \begin{tabular}{|c|c|c|c|c|}
\hline
rounds & input $\to$ output & bias\\
\hline
2& [\textcolor{Red}{1},\textcolor{Red}{5},\textcolor{ForestGreen}{15},\textcolor{Red}{33}] $\to$ [\textcolor{Red}{3},\textcolor{Red}{7},\textcolor{ForestGreen}{25},\textcolor{ForestGreen}{29},\textcolor{Red}{35}] & $2^{- 1.0}$\\
2& [\textcolor{Red}{3},\textcolor{Red}{7},\textcolor{ForestGreen}{25},\textcolor{ForestGreen}{29},\textcolor{Red}{35}] $\to$ [\textcolor{Red}{9},\textcolor{Red}{13},\textcolor{ForestGreen}{27},\textcolor{ForestGreen}{31}] & $2^{- 1.0}$\\
2& [\textcolor{Blue}{9},\textcolor{Red}{13},\textcolor{Blue}{27},\textcolor{Blue}{31}] $\to$ [\textcolor{Blue}{1},\textcolor{Blue}{5},\textcolor{Red}{15},\textcolor{Blue}{33}] & $2^{- 1.0}$\\
\hline
\end{tabular}}
\end{center}
\end{table}
\vskip-27pt

First of all, we observe that $[\textcolor{Red}{1}]\to[\textcolor{Red}{3}]$, $[\textcolor{Red}{5}]\to[\textcolor{Red}{7}]$ and $[\textcolor{Red}{33}]\to[\textcolor{Red}{35}]$ for 2 rounds. We also see that $[\textcolor{ForestGreen}{15}]\to[\textcolor{ForestGreen}{16}]$ for 1 round. So $u_1^{(1)}=u_3^{(3)}$, $u_5^{(1)}=u_7^{(3)}$, $u_{33}^{(1)}=u_{35}^{(3)}$ and $u_{15}^{(1)}=u_{16}^{(2)}$.\\
\begin{figure}[h]
    \centering
    \includegraphics[scale=0.24]{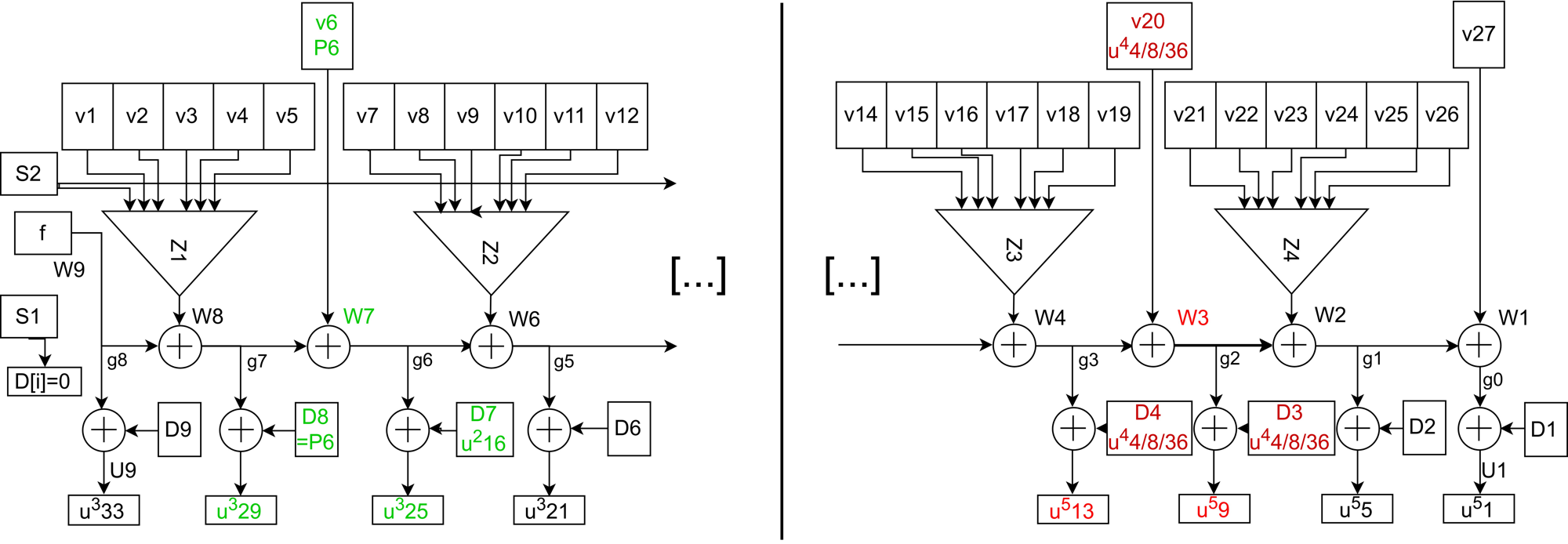}
    \caption{Explanation for our Proof}
    \label{fig:mesh1}
\end{figure}
\\From the description of KT1 keys we have $P(6)=D(8)$ and from the theorem we also have $D(7)=16$. Hence, equation (7) becomes
$$u_{25}^{(3)}\oplus u_{29}^{(3)}=u_{16}^{(2)}$$
Thus, we have $[\textcolor{ForestGreen}{16}]\to[\textcolor{ForestGreen}{25},\textcolor{ForestGreen}{29}]$ for 1 round and, therefore, $[\textcolor{Red}{1},\textcolor{Red}{5},\textcolor{ForestGreen}{15},\textcolor{Red}{33}] \to [\textcolor{Red}{3},\textcolor{Red}{7},\textcolor{ForestGreen}{25},\textcolor{ForestGreen}{29},\textcolor{Red}{35}]$ for 2 rounds.\\
Then we observe that $u_{25}^{(3)}=u_{27}^{(5)}$, $u_{29}^{(3)}=u_{31}^{(5)}$, $u_{3}^{(3)}=u_{4}^{(4)}$, $u_{7}^{(3)}=u_{8}^{(4)}$ and $u_{35}^{(3)}=u_{36}^{(4)}$. According to the theorem, we have the condition $\{D(3),D(4),P(20)\}\subset \{4,8,36\}$. Therefore, equation (3) becomes
$$u_{4}^{(4)}\oplus u_{8}^{(4)}\oplus u_{36}^{(4)} = u_{9}^{(5)}\oplus u_{13}^{(5)}$$
Thus, we have shown that $[\textcolor{Red}{3},\textcolor{Red}{7},\textcolor{ForestGreen}{25},\textcolor{ForestGreen}{29},\textcolor{Red}{35}] \to [\textcolor{Red}{9},\textcolor{Red}{13},\textcolor{ForestGreen}{27},\textcolor{ForestGreen}{31}]$ for 2 rounds.\\
\begin{figure}[h]
    \centering
    \includegraphics[scale=0.4]{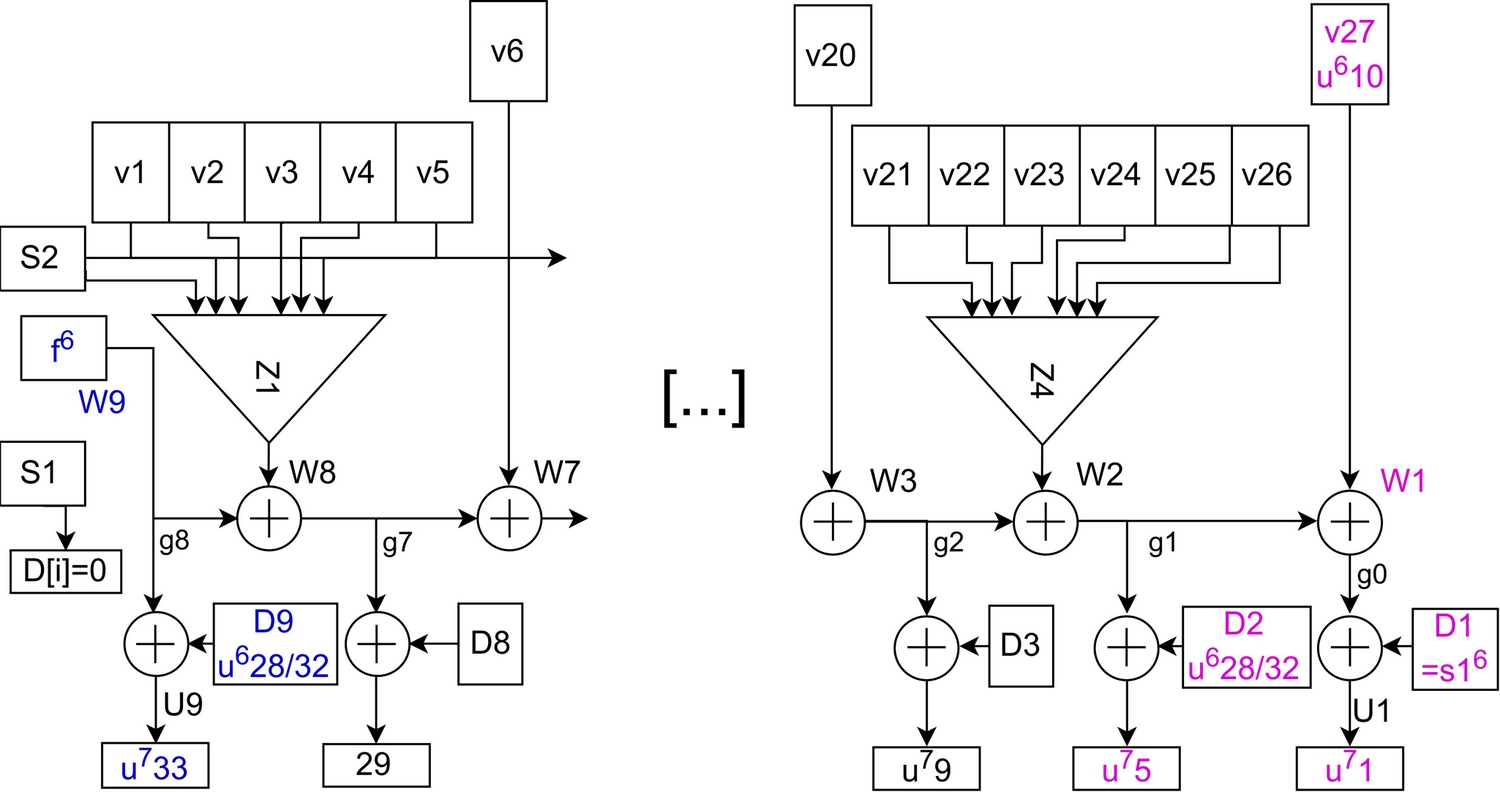}
    \caption{Further Explanation for our Proof}
    \label{fig:mesh1}
\end{figure}
\\We proceed to the final part of the proof as follows:\\
It is clear that $u_{13}^{(5)}=u_{15}^{(7)}$, $u_{9}^{(5)}=u_{10}^{(6)}$, $u_{27}^{(5)}=u_{28}^{(6)}$, and $u_{31}^{(5)}=u_{32}^{(6)}$. The remaining conditions from the theorem are $P(27)=10$ and $\{D(2),D(9)\}\subset \{28,32\}$. Hence, equation (1) becomes
$$u_{10}^{(6)}\oplus s_1^{(6)}\oplus u_{D(2)}^{(6)}=u_{1}^{(7)} \oplus u_{5}^{(7)}$$
and equation (9) becomes 
$$u_{D(9)}^{(6)}\oplus u_{33}^{(7)}=f^{(6)}$$
At this final step, we see that if $\{D(2),D(9)\}\subset \{28,32\}$ we have $[\textcolor{Blue}{10},\textcolor{Blue}{28},\textcolor{Blue}{32}]\to [\textcolor{Blue}{1},\textcolor{Blue}{5},\textcolor{Blue}{33}]$ for one round. So, we have also shown that $[\textcolor{Blue}{9},\textcolor{Red}{13},\textcolor{Blue}{27},\textcolor{Blue}{31}] \to [\textcolor{Blue}{1},\textcolor{Blue}{5},\textcolor{Red}{15},\textcolor{Blue}{33}]$ for 2 rounds.\\
Therefore, if the conditions of the theorem are satisfied , we have
$[\textcolor{Red}{1},\textcolor{Red}{5},\textcolor{ForestGreen}{15},\textcolor{Red}{33}]\to[\textcolor{Blue}{1},\textcolor{Blue}{5},\textcolor{Red}{15},\textcolor{Blue}{33}]$ for 6 rounds.
\end{proof}

\begin{theorem}[A class of 4R properties]
\label{Preconditions4rAlphtoalph}
For each long term KT1 key such that
$D(2)=36$, $D(9)=4$, $\{D(5),D(6),D(7)\}\subset \{8,20,24\}$ and $P(27)=7$ and for any short term key on 240 bits, and for any initial state on 36 bits, we have the linear approximation
$[4,8,19-20,23-24,36,s_1^{(1)},f^{(1)}]\to[4,8,19-20,23-24,36]$ which is true with probability exactly 1.0 for 4 rounds.
\end{theorem}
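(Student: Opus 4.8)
The plan is to prove the identity \emph{deterministically} (probability exactly $1.0$, so no Boolean-function approximation is ever used) by pushing a single linear combination of state bits through the four rounds using only the three \emph{linear} round equations (1), (5) and (9), i.e.\ the ones free of any $Z_i$. As in the proofs for keys $706$ and $788$, I would fix one representative KT1 key satisfying the four hypotheses, then verify that every manipulation relies only on those hypotheses together with the structural KT1 relation $P(13)=D(7)$ (entirely analogous to the relation $P(6)=D(8)$ already used in this chapter). The new values $U_1,\dots,U_9$ are written into positions $1,5,\dots,33$, and every other bit shifts by $+1$ per round; these two facts drive the whole argument.

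First I would trace each of the seven output positions backwards through the shift structure to locate its source. This gives $u_4,u_8,u_{20},u_{24},u_{36}$ at the output as $U_1,U_2,U_5,U_6,U_9$ computed in round $1$ (three shifts each, e.g.\ $17\to18\to19\to20$), and — crucially — $u_{19},u_{23}$ at the output as $U_5,U_6$ computed in round $2$ (two shifts, $17\to18\to19$ and $21\to22\to23$). The asymmetry that positions $20,24$ descend from round $1$ while their neighbours $19,23$ descend from round $2$ is the step most prone to bookkeeping errors, so it is where I would be most careful.

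Next I would apply the linear equations. Equation (1) with $D(2)=36$ and $P(27)=7$ yields $U_1\oplus U_2 = s_1\oplus u_{36}\oplus u_{7}$; equation (9) with $D(9)=4$ yields $U_9 = u_4\oplus f$; and equation (5), which reads $U_5\oplus u_{D(5)} = U_6\oplus u_{D(6)}\oplus u_{P(13)}$, is applied once at round $1$ and once at round $2$. Using $\{D(5),D(6),D(7)\}=\{8,20,24\}$ together with $P(13)=D(7)$ I would rewrite $\{D(5),D(6),P(13)\}=\{8,20,24\}$, so the round-$1$ application collapses to $u_{20}^{out}\oplus u_{24}^{out}=u_8\oplus u_{20}\oplus u_{24}$ (input bits directly), while the round-$2$ application collapses to the three round-$1$ state bits at $8,20,24$, which by the shift are exactly the input bits $u_7,u_{19},u_{23}$, giving $u_{19}^{out}\oplus u_{23}^{out}=u_7\oplus u_{19}\oplus u_{23}$.

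Finally I would XOR the four relations. The left-hand sides assemble into precisely the output mask $\{4,8,19,20,23,24,36\}$; on the right the two copies of $u_7$ cancel, leaving $u_4\oplus u_8\oplus u_{19}\oplus u_{20}\oplus u_{23}\oplus u_{24}\oplus u_{36}\oplus s_1\oplus f$, the claimed input mask, so the approximation holds with probability $1$. I expect the main obstacle to be recognizing that the stray term $u_{P(13)}$ in equation (5) is not uncontrolled but is pinned down by the KT1 identity $P(13)=D(7)$; it is this identity, rather than the more familiar $P(6)=D(8)$, that lets the round-$1$ and round-$2$ contributions close up and the whole combination telescope exactly.
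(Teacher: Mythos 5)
Your proposal is correct and is essentially the paper's own argument: both rest on exactly the three $Z$-free equations (1), (5) (applied once at round 1 and once at round 2), and (9), on the KT1 identity $P(13)=D(7)$ turning $\{D(5),D(6),u_{P(13)}\}$ into the set $\{8,20,24\}$, and on the cancellation of the stray $u_7$ term. The only difference is presentational --- you trace the seven output positions backwards to their source rounds and telescope everything in a single XOR, whereas the paper propagates the mask forward round by round through the intermediate masks $[1,5,8,17,20,21,24,33]$ and $[2,6,17,18,21,22,34]$.
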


\vspace{-16pt}
\begin{proof}
We will show that the following holds:

\begin{table}[h]
\caption{A Detailed Explanation of the 4 Round Property}
\label{SomeGoodLC1to1Invariant625details}
\begin{center}
{\small \begin{tabular}{|c|c|c|c|c|}
\hline
rounds & input $\to$ output & bias\\
\hline
1& $[\textcolor{ForestGreen}{4},\textcolor{Blue}{8},\textcolor{Brown}{19}-\textcolor{Blue}{20},\textcolor{Brown}{23}-\textcolor{Blue}{24},\textcolor{Red}{36},\textcolor{Red}{s_1^{(1)}},\textcolor{ForestGreen}{f^{(1)}}] \to [\textcolor{Red}{1},\textcolor{Red}{5},\textcolor{Red}{8},\textcolor{Blue}{17},\textcolor{Brown}{20}-\textcolor{Blue}{21},\textcolor{Brown}{24},\textcolor{ForestGreen}{33}]$ & $2^{- 1.0}$\\
1& $[\textcolor{Red}{1},\textcolor{Red}{5},\textcolor{Brown}{8},\textcolor{Blue}{17},\textcolor{Brown}{20}-\textcolor{Blue}{21},\textcolor{Brown}{24},\textcolor{ForestGreen}{33}] \to [\textcolor{Red}{2},\textcolor{Red}{6},\textcolor{Brown}{17}-\textcolor{Blue}{18},\textcolor{Brown}{21}-\textcolor{Blue}{22},\textcolor{ForestGreen}{34}]$ & $2^{- 1.0}$\\
2& $[\textcolor{Red}{2},\textcolor{Red}{6},\textcolor{Brown}{17}-\textcolor{Blue}{18},\textcolor{Brown}{21}-\textcolor{Blue}{22},\textcolor{ForestGreen}{34}] \to [\textcolor{Red}{4},\textcolor{Red}{8},\textcolor{Brown}{19}-\textcolor{Blue}{20},\textcolor{Brown}{23}-\textcolor{Blue}{24},\textcolor{ForestGreen}{36}]$ & $2^{- 1.0}$\\
\hline
\end{tabular}}
\end{center}
\end{table}

We observe that $u_{19}^{(1)}=u_{20}^{(2)}$ and $u_{23}^{(1)}=u_{24}^{(2)}$.\\
According to the theorem, $D(2)=36$, $P(27)=7$ and using equation (1)  we have
$$u_1^{(2)} \oplus s_{1}^{(1)}= u_5^{(2)} \oplus u_{36}^{(1)}  \oplus u_{7}^{(1)}$$
However we know that $u_{7}^{(1)}=u_{8}^{(2)}$ so the previous equation becomes 
$$u_1^{(2)} \oplus u_5^{(2)} \oplus u_{8}^{(2)}= u_{36}^{(1)}  \oplus s_{1}^{(1)} $$
Hence, $[\textcolor{Red}{36},\textcolor{Red}{s_1^{(1)}}] \to [\textcolor{Red}{1},\textcolor{Red}{5},\textcolor{Red}{8}]$.\\
From the theorem, we also have the condition $\{D(5),D(6),D(7)\}\subset \{8,20,24\}$. From the description of KT1 keys we have that $P(13) = D(7)$ and ,thus, equation (5) becomes
$$u_{17}^{(2)} \oplus u_{21}^{(2)} = u_{8}^{(1)} \oplus u_{20}^{(1)}  \oplus u_{24}^{(1)}$$
Hence, $[\textcolor{Blue}{8},\textcolor{Blue}{20},\textcolor{Blue}{24}] \to [\textcolor{Blue}{17},\textcolor{Blue}{21}]$.\\
The last condition of the theorem is $D(9)=4$ and equation (9) becomes
$$u_{33}^{(2)} \oplus u_{4}^{(1)} = f^{(1)}$$
Hence, $[\textcolor{ForestGreen}{4},\textcolor{ForestGreen}{f^{(1)}}] \to [\textcolor{ForestGreen}{33}]$.\\
Combining all the previous results, we have $[\textcolor{ForestGreen}{4},\textcolor{Blue}{8},\textcolor{Brown}{19}-\textcolor{Blue}{20},\textcolor{Brown}{23}-\textcolor{Blue}{24},\textcolor{Red}{36},\textcolor{Red}{s_1^{(1)}},\textcolor{ForestGreen}{f^{(1)}}] \to [\textcolor{Red}{1},\textcolor{Red}{5},\textcolor{Red}{8},\textcolor{Blue}{17},\textcolor{Brown}{20}-\textcolor{Blue}{21},\textcolor{Brown}{24},\textcolor{ForestGreen}{33}]$ for 1 round.\\
In the second round we observe that $u_{1}^{(2)}=u_{2}^{(3)}$, $u_{5}^{(2)}=u_{6}^{(3)}$, $u_{17}^{(2)}=u_{18}^{(3)}$, $u_{21}^{(2)}=u_{22}^{(3)}$, $u_{33}^{(2)}=u_{34}^{(3)}$.\\
Using equation (5) again we deduce that $[\textcolor{Brown}{8},\textcolor{Brown}{20},\textcolor{Brown}{24}] \to [\textcolor{Brown}{17},\textcolor{Brown}{21}]$.\\
Thus, we have $[\textcolor{Red}{1},\textcolor{Red}{5},\textcolor{Brown}{8},\textcolor{Blue}{17},\textcolor{Brown}{20}-\textcolor{Blue}{21},\textcolor{Brown}{24},\textcolor{ForestGreen}{33}] \to [\textcolor{Red}{2},\textcolor{Red}{6},\textcolor{Brown}{17}-\textcolor{Blue}{18},\textcolor{Brown}{21}-\textcolor{Blue}{22},\textcolor{ForestGreen}{34}]$ for 1 round.\\
Finally, it is clear that $[\textcolor{Red}{2},\textcolor{Red}{6},\textcolor{Brown}{17}-\textcolor{Blue}{18},\textcolor{Brown}{21}-\textcolor{Blue}{22},\textcolor{ForestGreen}{34}] \to [\textcolor{Red}{4},\textcolor{Red}{8},\textcolor{Brown}{19}-\textcolor{Blue}{20},\textcolor{Brown}{23}-\textcolor{Blue}{24},\textcolor{ForestGreen}{36}]$ for 2 rounds and this completes the proof.
\end{proof}
\vspace{-15pt}
A very similar theorem is:
\begin{theorem}[A class of 4R properties]
\label{Preconditions4rAlphtoalph}
For each long term KT1 key such that
$D(2)=36$, $D(9)=4$, $\{D(5),D(6),D(7)\}\subset \{8,20,24\}$ and $P(27)=6$ and for any short term key on 240 bits, and for any initial state on 36 bits, we have the linear approximation\\
$[4,8,18,20,22,24,36,s_1^{(1)},f^{(1)}]\to[4,8,18,20,22,24,36]$ which is true with probability exactly 1.0 for 4 rounds.
\end{theorem}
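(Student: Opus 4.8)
The plan is to reuse, almost verbatim, the linear-trail bookkeeping of the preceding $P(27)=7$ theorem and change only the single step in which $P(27)$ enters. First I would fix the input mask $\{4,8,18,20,22,24,36\}$ together with the key/IV bits $s_1^{(1)},f^{(1)}$ and push it forward one round at a time. At each round I replace every mask bit at a position $i\not\equiv 0 \pmod 4$ by its shift image $u_i^{(m)}=u_{i+1}^{(m+1)}$, and every mask bit at a position $i\equiv 0 \pmod 4$ (namely the positions $D(2)=36$, $D(9)=4$ and the three values $\{D(5),D(6),D(7)\}\subset\{8,20,24\}$, which being distinct exhaust $\{8,20,24\}$) by the right-hand side of the corresponding equation. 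Since only the odd-indexed equations (1), (5), (9) are invoked, all of which are purely linear, no $Z_i$ term ever appears; every row of the trail is therefore exact, so their XOR — the full four-round relation — holds with probability exactly $1.0$, which is what a per-row bias of $2^{-1.0}$ records. Because the stated conditions pin down all the relevant $D,P$ entries, this argument is generic over the KT1 class; exhibiting one concrete key (as in the earlier proofs) is only a sanity check.

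Concretely, in round $1$ I would apply equation (9) with $D(9)=4$ to obtain $[4,f^{(1)}]\to[33]$, equation (5) with $\{D(5),D(6),D(7)\}$ exhausting $\{8,20,24\}$ and the KT1 relation $P(13)=D(7)$ to obtain $[8,20,24]\to[17,21]$, and equation (1) with $D(2)=36$, $P(27)=6$ to obtain $u_{36}^{(1)}\oplus s_1^{(1)}=u_1^{(2)}\oplus u_5^{(2)}\oplus u_6^{(1)}$. Here lies the only genuine difference from the $P(27)=7$ proof: the term $u_{P(27)}^{(1)}=u_6^{(1)}$ shifts to $u_7^{(2)}$ rather than to $u_8^{(2)}$, so equation (1) reads $[36,s_1^{(1)}]\to[1,5,7]$, with a $7$ where the earlier proof had an $8$. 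Shifting the two remaining input bits $18\to19$ and $22\to23$ then produces the round-$1$ output mask $\{1,5,7,17,19,21,23,33\}$.

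From there the trail should close by pure bookkeeping. Round $2$ contains no position divisible by $4$, so it is a plain shift to $\{2,6,8,18,20,22,24,34\}$; note that the stray $7$ produced by equation (1) has now become an $8$, and $19,23$ have become $20,24$. Round $3$ therefore presents exactly the triple $\{8,20,24\}$ again, so I would invoke equation (5) a second time, sending $[8,20,24]\to[17,21]$ and shifting the rest, reaching $\{3,7,17,19,21,23,35\}$. Round $4$ is once more a pure shift, landing on $\{4,8,18,20,22,24,36\}$, which is the output mask, completing the four exact rounds.

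The one thing to watch — a bookkeeping obstacle rather than a conceptual one — is precisely this rescheduling. Because $P(27)=6$ feeds bit $7$ (not $8$) into equation (1), the bit needed to recreate the triple $\{8,20,24\}$ arrives one round late, so the second use of equation (5) falls in round $3$, with a pure-shift round $2$ in between, instead of in the consecutive rounds $1$ and $2$ of the $P(27)=7$ case. Correspondingly I must begin from $18,22$ rather than $19,23$, so that these bits reach positions $20,24$ at exactly the same round as bit $8$. Once this single alignment is checked, the four exact rounds compose to the stated probability-$1.0$ approximation $[4,8,18,20,22,24,36,s_1^{(1)},f^{(1)}]\to[4,8,18,20,22,24,36]$.
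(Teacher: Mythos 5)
Your proposal is correct and follows essentially the same route as the paper: the same trail $\{4,8,18,20,22,24,36\}\to\{1,5,7,17,19,21,23,33\}\to\{2,6,8,18,20,22,24,34\}\to\{3,7,17,19,21,23,35\}\to\{4,8,18,20,22,24,36\}$, built from the same exact linear equations (1), (5), (9) with $P(13)=D(7)$, including the key alignment point that $P(27)=6$ delays the reappearance of the triple $\{8,20,24\}$ to round 3 and forces the input bits $18,22$ in place of $19,23$. The paper merely groups the last two rounds together in its table; the content is identical.
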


\begin{proof}
We will show that the following holds:

\begin{table}[h]
\caption{A Detailed Explanation of the 4 Round Property}
\label{SomeGoodLC1to1Invariant625details}
\begin{center}
{\small \begin{tabular}{|c|c|c|c|c|}
\hline
rounds & input $\to$ output & bias\\
\hline
1& $[\textcolor{ForestGreen}{4},\textcolor{Blue}{8},\textcolor{Brown}{18},\textcolor{Blue}{20},\textcolor{Brown}{22},\textcolor{Blue}{24},\textcolor{Red}{36},\textcolor{Red}{s_1^{(1)}},\textcolor{ForestGreen}{f^{(1)}}] \to [\textcolor{Red}{1},\textcolor{Red}{5},\textcolor{Red}{7},\textcolor{Blue}{17},\textcolor{Brown}{19},\textcolor{Blue}{21},\textcolor{Brown}{23},\textcolor{ForestGreen}{33}]$ & $2^{- 1.0}$\\
1& $[\textcolor{Red}{1},\textcolor{Red}{5},\textcolor{Red}{7},\textcolor{Blue}{17},\textcolor{Brown}{19},\textcolor{Blue}{21},\textcolor{Brown}{23},\textcolor{ForestGreen}{33}] \to [\textcolor{Red}{2},\textcolor{Red}{6},\textcolor{Red}{8},\textcolor{Blue}{18},\textcolor{Brown}{20},\textcolor{Blue}{22},\textcolor{Brown}{24},\textcolor{ForestGreen}{34}]$ & $2^{- 1.0}$\\
2& $[\textcolor{Red}{2},\textcolor{Red}{6},\textcolor{Brown}{8},\textcolor{Blue}{18},\textcolor{Brown}{20},\textcolor{Blue}{22},\textcolor{Brown}{24},\textcolor{ForestGreen}{34}] \to [\textcolor{Red}{4},\textcolor{Red}{8},\textcolor{Brown}{18},\textcolor{Blue}{20},\textcolor{Brown}{22},\textcolor{Blue}{24},\textcolor{ForestGreen}{36}]$ & $2^{- 1.0}$\\
\hline
\end{tabular}}
\end{center}
\end{table}
\vspace{-10pt}
We recall a subset of equations from section 2.1
\vspace{-20pt}
\begin{align}
U_1 \oplus s_{1}&= U_2 \oplus u_{D(2)}  \oplus u_{P(27)}\tag{1}\\
U_5 \oplus u_{D(5)} &= U_6 \oplus u_{D(6)}  \oplus u_{P(13)}\tag{5}\\
U_9 \oplus u_{D(9)} &= f~~~~~~~~~~~~ &\tag{9}
\end{align}
We observe that $u_{18}^{(1)}=u_{19}^{(2)}$ and $u_{22}^{(1)}=u_{23}^{(2)}$.\\
According to the theorem, $D(2)=36$, $P(27)=6$ and using equation (1)  we have
$$u_1^{(2)} \oplus s_{1}^{(1)}= u_5^{(2)} \oplus u_{36}^{(1)}  \oplus u_{6}^{(1)}$$
However we know that $u_{6}^{(1)}=u_{7}^{(2)}$ so the previous equation becomes 
$$u_1^{(2)} \oplus u_5^{(2)} \oplus u_{7}^{(2)}= u_{36}^{(1)}  \oplus s_{1}^{(1)} $$
Hence, $[\textcolor{Red}{36},\textcolor{Red}{s_1^{(1)}}] \to [\textcolor{Red}{1},\textcolor{Red}{5},\textcolor{Red}{7}]$.\\
From the theorem, we also have $\{D(5),D(6),D(7)\}\subset \{8,20,24\}$. From the description of KT1 keys we have that $P(13) = D(7)$ and ,thus, equation (5) becomes
$$u_{17}^{(2)} \oplus u_{21}^{(2)} = u_{8}^{(1)} \oplus u_{20}^{(1)}  \oplus u_{24}^{(1)}$$
Hence, $[\textcolor{Blue}{8},\textcolor{Blue}{20},\textcolor{Blue}{24}] \to [\textcolor{Blue}{17},\textcolor{Blue}{21}]$.\\
The last condition of the theorem is $D(9)=4$ and equation (9) becomes
$$u_{33}^{(2)} \oplus u_{4}^{(1)} = f^{(1)}$$
Hence, $[\textcolor{ForestGreen}{4},\textcolor{ForestGreen}{f^{(1)}}] \to [\textcolor{ForestGreen}{33}]$.\\
Combining all the previous results, we have $[\textcolor{ForestGreen}{4},\textcolor{Blue}{8},\textcolor{Brown}{18},\textcolor{Blue}{20},\textcolor{Brown}{22},\textcolor{Blue}{24},\textcolor{Red}{36},\textcolor{Red}{s_1^{(1)}},\textcolor{ForestGreen}{f^{(1)}}] \to [\textcolor{Red}{1},\textcolor{Red}{5},\textcolor{Red}{7},\textcolor{Blue}{17},\textcolor{Brown}{19},\textcolor{Blue}{21},\textcolor{Brown}{23},\textcolor{ForestGreen}{33}]$ for 1 round.\\
In the second round it is clear that $[\textcolor{Red}{1},\textcolor{Red}{5},\textcolor{Red}{7},\textcolor{Blue}{17},\textcolor{Brown}{19},\textcolor{Blue}{21},\textcolor{Brown}{23},\textcolor{ForestGreen}{33}] \to [\textcolor{Red}{2},\textcolor{Red}{6},\textcolor{Red}{8},\textcolor{Blue}{18},\textcolor{Brown}{20},\textcolor{Blue}{22},\textcolor{Brown}{24},\textcolor{ForestGreen}{34}]$.\\
In the following two rounds we observe that $u_{2}^{(3)}=u_{4}^{(5)}$, $u_{6}^{(3)}=u_{8}^{(5)}$, $u_{18}^{(3)}=u_{20}^{(5)}$, $u_{22}^{(3)}=u_{24}^{(5)}$, $u_{34}^{(3)}=u_{36}^{(5)}$.\\
Using equation (5) again we deduce that $[\textcolor{Brown}{8},\textcolor{Brown}{20},\textcolor{Brown}{24}] \to [\textcolor{Brown}{17},\textcolor{Brown}{21}]$ for 1 round.\\
Furthermore, we can notice that $u_{17}^{(4)}=u_{18}^{(5)}$, $u_{21}^{(4)}=u_{22}^{(5)}$.\\
Thus, we have $[\textcolor{Red}{2},\textcolor{Red}{6},\textcolor{Brown}{8},\textcolor{Blue}{18},\textcolor{Brown}{20},\textcolor{Blue}{22},\textcolor{Brown}{24},\textcolor{ForestGreen}{34}] \to [\textcolor{Red}{4},\textcolor{Red}{8},\textcolor{Brown}{18},\textcolor{Blue}{20},\textcolor{Brown}{22},\textcolor{Blue}{24},\textcolor{ForestGreen}{36}]$ for 2 rounds and this completes the proof.
\end{proof}

\pagebreak
\begin{theorem}[A class of 8R properties]
\label{Preconditions8rAlphtoalph}
For each long term KT1 key such that
$D(2)=36$, $D(9)=4$, $\{D(5),D(6),D(7)\}\subset \{8,20,24\}$, $P(27)=6$ and for any short term key on 240 bits, and for any initial state on 36 bits, we have the linear approximation
$[1,3,5,17,21]\to[1,3,5,17,21]$ which is true with probability exactly 1.0 for 8 rounds.
\end{theorem}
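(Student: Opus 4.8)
The plan is to prove this exactly as the earlier theorems of this section: fix a long term KT1 key meeting the four stated constraints (such keys are readily constructed, since the constraints here are precisely those of the preceding 4R theorem), and then push the linear mask $[1,3,5,17,21]$ deterministically through the eight rounds, one round at a time, recording the intermediate masks. The whole trail will be built from only two ingredients: the shift rule $u_j^{(m)}=u_{j+1}^{(m+1)}$, valid for every position $j$ that is not a multiple of $4$, and the three \emph{purely linear} round equations (1), (5) and (9), together with the KT1 identity $P(13)=D(7)$. The key idea is that I would never invoke equations (2), (4), (6) or (8), since those carry the nonlinear functions $Z_4,Z_3,Z_2,Z_1$; avoiding them entirely is what forces every one-round step to have correlation $\pm 1$, and hence makes the eight-round approximation deterministic.

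Concretely I would organise the argument into four two-round segments, mirroring the tables used above, passing through the intermediate masks $[1,3,5,17,21] \to [3,7,19,23,33] \to [17,21,33,35] \to [1,5,7,19,23,35] \to [1,3,5,17,21]$. Within each segment the first round is pure shifting and the second round consumes the active multiples of $4$ through the linear equations, and the bookkeeping of which equation fires where is the heart of the proof. Bit $4$ is eliminated by equation (9) (using $D(9)=4$) at rounds $2$ and $4$; the triple $\{8,20,24\}$, which is exactly $\{D(5),D(6),D(7)\}$, is collapsed to $u_{17}\oplus u_{21}$ by equation (5) (using $P(13)=D(7)$) at rounds $4$ and $8$; and bit $36=D(2)$ is rewritten by equation (1) (using $P(27)=6$) at rounds $6$ and $8$. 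In each case I would check that the consumed bits appear together in precisely the grouping the relevant equation requires, so that the internal taps cancel and only new-position bits survive.

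The step I expect to be delicate is the very last round. There bit $6$ is active and shifts to $u_7^{(9)}$, while equation (1) simultaneously pulls in the tap $u_{P(27)}^{(8)}=u_6^{(8)}$; since $u_6^{(8)}=u_7^{(9)}$, these two contributions coincide and cancel, and it is exactly this cancellation that regenerates the clean mask $[1,3,5,17,21]$ with no residual state bits. I would display this cancellation explicitly rather than leave it implicit. The only remaining subtlety is cosmetic: equation (1) contributes a key bit $s_1$ (at rounds $6$ and $8$) and equation (9) contributes an IV bit $f$ (at rounds $2$ and $4$). Exactly as in the 6R property proved above, where the final step silently absorbs an $s_1^{(6)}\oplus f^{(6)}$ term, these enter only as a deterministic key/IV-dependent offset; they leave the correlation magnitude at $1$, so the approximation $[1,3,5,17,21]\to[1,3,5,17,21]$ is true with probability exactly $1.0$ in the sense used throughout the chapter.
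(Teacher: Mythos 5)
Your proposal is correct and follows essentially the same route as the paper's proof: the intermediate masks you list at the odd rounds ($[3,7,19,23,33]$, $[17,21,33,35]$, $[1,5,7,19,23,35]$) are exactly the states occurring in the paper's $1+2+2+2+1$ trail, and you invoke the same three linear equations (1), (5), (9) together with $P(13)=D(7)$ at the same rounds, with the same treatment of the $f^{(2)}\oplus f^{(4)}\oplus s_1^{(6)}\oplus s_1^{(8)}$ offset. The only cosmetic deviations are your $2+2+2+2$ segmentation and your description of the last round's bit-$6$ elimination as a cancellation of two copies of $u_7^{(9)}$ rather than a direct absorption of $u_6^{(8)}$ into equation (1); both are equivalent to the paper's Facts B--E.
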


\vspace{-20pt}
\begin{proof}
We will show that the following holds:

\begin{table}[h]
\caption{A Detailed Explanation of the 8 Round Property}
\label{SomeGoodLC1to1Invariant625details}
\begin{center}
{\small \begin{tabular}{|c|c|c|c|c|}
\hline
rounds & input $\to$ output & bias\\
\hline
1& $[\textcolor{ForestGreen}{1},\textcolor{ForestGreen}{3},\textcolor{Blue}{5},\textcolor{Blue}{17},\textcolor{Blue}{21}] \to [\textcolor{ForestGreen}{2},\textcolor{ForestGreen}{4},\textcolor{Blue}{6},\textcolor{Blue}{18},\textcolor{Blue}{22},\textcolor{ForestGreen}{f^{(2)}}]$ & $2^{- 1.0}$\\
2& $[\textcolor{ForestGreen}{2},\textcolor{ForestGreen}{4},\textcolor{Blue}{6},\textcolor{Blue}{18},\textcolor{Blue}{22},\textcolor{ForestGreen}{f^{(2)}}] \to [\textcolor{ForestGreen}{4},\textcolor{Blue}{8},\textcolor{Blue}{20},\textcolor{Blue}{24},\textcolor{ForestGreen}{34},\textcolor{ForestGreen}{f^{(4)}}]$ & $2^{- 1.0}$\\
2& $[\textcolor{ForestGreen}{4},\textcolor{Blue}{8},\textcolor{Blue}{20},\textcolor{Blue}{24},\textcolor{ForestGreen}{34},\textcolor{ForestGreen}{f^{(4)}}] \to [\textcolor{Blue}{18},\textcolor{Blue}{22},\textcolor{ForestGreen}{34},\textcolor{ForestGreen}{36},\textcolor{Red}{s_1^{(6)}}]$ & $2^{- 1.0}$\\
2& $[\textcolor{Blue}{18},\textcolor{Blue}{22},\textcolor{ForestGreen}{34},\textcolor{Red}{36},\textcolor{Red}{s_1^{(6)}}] \to [\textcolor{Red}{2},\textcolor{Red}{6},\textcolor{Red}{8},\textcolor{Blue}{20},\textcolor{Blue}{24},\textcolor{ForestGreen}{36},\textcolor{RedViolet}{s_1^{(8)}}]$ & $2^{- 1.0}$\\
1& $[\textcolor{Red}{2},\textcolor{RedViolet}{6},\textcolor{Blue}{8},\textcolor{Blue}{20},\textcolor{Blue}{24},\textcolor{RedViolet}{36},\textcolor{RedViolet}{s_1^{(8)}}] \to [\textcolor{RedViolet}{1},\textcolor{Red}{3},\textcolor{RedViolet}{5},\textcolor{Blue}{17},\textcolor{Blue}{21}]$ & $2^{- 1.0}$\\
\hline
\end{tabular}}
\end{center}
\end{table}

In order to explain the results in the above table we will use the following facts:\\
Fact A: It is clear that $u_{4k+j}^{(i)} \to u_{4k+j+1}^{(i+1)}$ where $k \in \{0,\dots,8\}$ and $j \in \{1,2,3\}$, for some rounds $i$ and $i+1$.\\
Fact B: According to the theorem, $D(2)=36$, $P(27)=6$ and using equation (1)  we have
$$u_1^{(i+1)} \oplus s_{1}^{(i)}= u_5^{(i+1)} \oplus u_{36}^{(i)}  \oplus u_{6}^{(i)}$$
So we have $[\textcolor{RedViolet}{6},\textcolor{RedViolet}{36},\textcolor{RedViolet}{s_1^{(i)}}] \to [\textcolor{RedViolet}{1},\textcolor{RedViolet}{5}]$.\\
Fact C: However we know that $u_{6}^{(i)}=u_{7}^{(i+1)}$ so the previous equation can also become 
$$u_1^{(i+1)} \oplus u_5^{(i+1)} \oplus u_{7}^{(i+1)}= u_{36}^{(i)}  \oplus s_{1}^{(i)} $$
Hence, $[\textcolor{Red}{36},\textcolor{Red}{s_1^{(i)}}] \to [\textcolor{Red}{1},\textcolor{Red}{5},\textcolor{Red}{7}]$ also holds.\\
Fact D: From the theorem, we also have $\{D(5),D(6),D(7)\}\subset \{8,20,24\}$. From the description of KT1 keys we have that $P(13) = D(7)$ and ,thus, equation (5) becomes
$$u_{17}^{(i+1)} \oplus u_{21}^{(i+1)} = u_{8}^{(i)} \oplus u_{20}^{(i)}  \oplus u_{24}^{(i)}$$
Hence, $[\textcolor{Blue}{8},\textcolor{Blue}{20},\textcolor{Blue}{24}] \to [\textcolor{Blue}{17},\textcolor{Blue}{21}]$.\\
Fact E: The last condition of the theorem is $D(9)=4$ and equation (9) becomes
\vskip-15pt
$$u_{33}^{(i+1)} \oplus u_{4}^{(i)} = f^{(i)}$$
Hence, $[\textcolor{ForestGreen}{4},\textcolor{ForestGreen}{f^{(i)}}] \to [\textcolor{ForestGreen}{33}]$.\\
Thus, combining some of the above facts in each round, we can prove that $[\textcolor{ForestGreen}{1},\textcolor{ForestGreen}{3},\textcolor{Blue}{5},\textcolor{Blue}{17},\textcolor{Blue}{21}] \to [\textcolor{RedViolet}{1},\textcolor{Red}{3},\textcolor{RedViolet}{5},\textcolor{Blue}{17},\textcolor{Blue}{21}]$ holds for 8 rounds.
\end{proof}

\section{Slide Attacks on T-310}

In this section, we perform two slide attacks on T-310, described in [7,11], which are based on the three assumptions below:

\begin{enumerate}
\item The attacker using a decryption oracle attack, described in chapter 3 of [7], obtains access to $73\%$ of the values $a_i \myeq u_{127\cdot i, \alpha}$, where $i=1,2,3,\cdots$.
\item We recall that the key bits are repeated every 120 rounds. The attacker chooses an IV at random and creates several other IV's which differ by clocking the IV LFSR backwards by $120 \cdot s$ steps, for some integer $s$.
\item The attacker is able to identify whether the internal states are identical,\newline i.e. $u_{0,1-36}=u^{'}_{120\cdot s,1-36}=$ 0xC5A13E396.
\end{enumerate}
We take into account the equation below for the following two attacks:
$$120\cdot s = 127 \cdot t + d$$

\pagebreak
\subsection{One Bit Correlation Attack}
The main purpose of the attacker is to be able to identify whether $u_{0,1-36}=u^{'}_{120\cdot s,1-36}=$ 0xC5A13E396, by using a decryption oracle attack. For example, if $s=19$, $t=18$, $d=-6$ and the bit $\alpha \in \{25,26,27,28\}$, then we can use the long term key 625, to identify if this is the case, since there are correlations between the bits $\alpha$ for 6 rounds. We can also use a simpler case when $s=1$, $t=1$, $d=-7$, in order to mount such an attack, by constructing keys according to the preconditions mentioned in Appendix C.\\
More details on the attack can be found in [7].

\subsection{Attack based on Multiple Bits Invariant Property}
A most recent attack proposed in [11] is based on linear invariant properties that involve some IV and key bits.
\\The attack proceeds as follows:\\
Based on the above assumptions, we let $s=127$, $t=120$ and $d = 0$ and we choose our ciphertexts that are produced with IV to have 5 characters length and those produced with IV', which differ by clocking the IV LFSR backwards by $120 \cdot 127$ steps, to have 1205 characters length. We chose this specific length for our ciphertexts such that after the $120 \cdot 127$ steps, the $5 \cdot 13 = 65$ bits $u_{a \cdot i}$ produced for the encryption of at least 5 characters have an overlap on the secret key bits and IV bits.\\
As mentioned in the assumptions above, we obtain access to $73\%$ of the bits $u_{a \cdot i}$. So for two encryptions that used IV and IV', respectively, we will obtain access to $0.73 \cdot 0.73 \cdot 65 \approx 35$ values $u_{a \cdot i}$ and $u^{'}_{a \cdot i}$ at the same positions $i$.\\
Then we can conclude that the two states are identical, i.e. $u_{0,1-36}=u^{'}_{120\cdot 127,1-36}=$ 0xC5A13E396, if all 35 bits $u_{a \cdot i}$ and 35 bits $u^{'}_{a \cdot i}$ at the same positions $i$ are exactly the same.\\
In the final step of our attack, we exploit this fact and the fact that the long term key, for example a key that satisfies the conditions of Theorem 3.6, has a linear property for 8 rounds that involves IV and key bits (in our case s1) to construct 8 parity equations and, thus, recover 8 bits of the secret key.
\chapter{Generalised Linear Cryptanalysis}

In this chapter we use Generalised Linear Cryptanalysis in order to search for invariant, not linear, properties of long term keys that hold for one round, a technique that was used in [8] to formulate an attack on DES. When searching for non linear invariant properties we face a three dimensional problem. The three dimensions are $(a)$ the choice of a long term key, $(b)$ the choice of a random Boolean function and $(c)$ the choice of the non linear invariant property. So, we can pick two of the three dimensions and search for the answer in the third dimension. In this chapter, we search for the non linear invariant properties occuring due to our choice of the long term key and the choice of a random Boolean function.

\begin{figure}[h]
    \centering
    \includegraphics[scale=0.6]{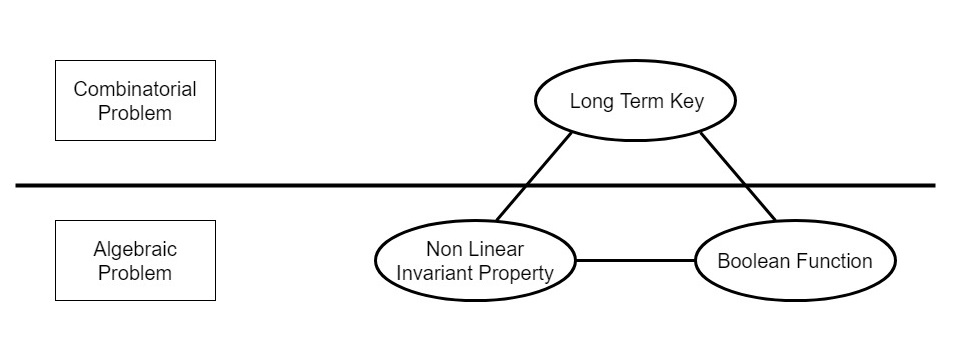}
    \caption{Three Dimensional Problem}
    \label{fig:mesh1}
\end{figure}

\section{Procedure of Finding Invariants on 12 Bits}
As a first step to our research, we searched for invariants on 12 bits, on any long term key, and more precisely on the bits $u_{25}$ up to $u_{36}$. So, the area we focused on is displayed in the figure below.

\begin{figure}[h]
    \centering
    \includegraphics[scale=0.4]{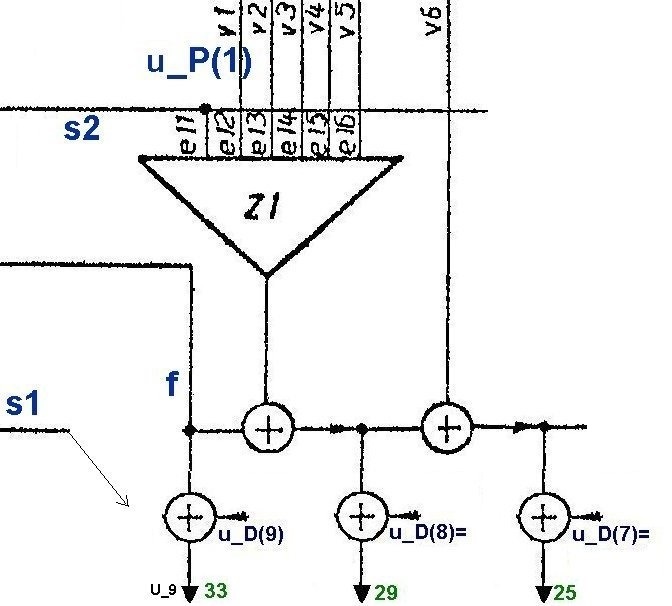}
    \caption{Area of Interest for 12 Bit Invariants}
    \label{fig:mesh1}
\end{figure}

By following a brute force approach in order to search for invariants on the 12 bits mentioned above, we produced some scripts (see Appendix E.2) to help us  follow these 3 steps:\\
\textbf{Step 1:} By renaming each bit $u_i$, where $i \in \{25,26,\dots,36\}$, by the letters $\{l,k,j,\dots,b,a\}$, the IV bit, the $s2$ bit and the Boolean function Z1 by the letters '$F$', '$L$' and '$Z$', respectively, and based on the long term key, we produced a file which contains two columns of data. In the first column, there are all possible monomials on the 12 bits. In the second column, there are the resulting polynomials for each possible monomial after one round.\\
\textbf{Step 2:} In all resulting polynomials, we substituted $F$ by 0 or 1.\\
\textbf{Step 3:} We XORed (exclusive or) each possible monomial with the corresponding polynomial that was produced after one round and computed the kernel of the resulting matrix which will produce the invariant space on the 12 bits.

\vspace{40pt}
\section{Invaribants on 12 Bits}
Below we demonstrate some of the most significant results we have found.

\begin{theorem}[Invariants for $D(9) = 36$] Following the steps mentioned in Section 4.1 and by setting $F=0$ for keys with $D(9) = 36$ we observe the following five invariants for one round:
\begin{spacing}{2}
\begin{itemize}[noitemsep,topsep=0pt]
\item $a \oplus b \oplus c \oplus d \to a \oplus b \oplus c \oplus d$
\item $ac \oplus bd \to ac \oplus bd$
\item $ab \oplus ad \oplus bc \oplus cd \to ab \oplus ad \oplus bc \oplus cd$
\item $abc \oplus abd \oplus acd \oplus bcd \to abc \oplus abd \oplus acd \oplus bcd$
\item $abcd \to abcd$
%\item $u_{33}+u_{34}+u_{35}+u_{36}$\\
%\item $u_{33}\cdot u_{35}+u_{34}\cdot u_{36}$\\
%\item $u_{33} \cdot u_{34}+u_{34} \cdot u_{35}+u_{33} \cdot u_{36}+u_{35} \cdot u_{36}$\\
%\item $u_{33} \cdot u_{34} \cdot u_{35}+u_{33} \cdot u_{34} \cdot u_{36}+u_{33} \cdot u_{35} \cdot u_{36}+u_{34} \cdot u_{35} \cdot u_{36}$\\
%\item $u_{33} \cdot u_{34} \cdot u_{35} \cdot u_{36}$
\end{itemize}
\end{spacing}
\end{theorem}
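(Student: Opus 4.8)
The plan is to pin down exactly how one round acts on the four state bits renamed $a = u_{36}$, $b = u_{35}$, $c = u_{34}$, $d = u_{33}$, and then to show that under the hypotheses $D(9) = 36$ and $F = 0$ this action closes up into a single $4$-cycle on these coordinates. First I would record the shift part of the round: since $34,35,36$ are not of the form $4k+1$, the bits there are merely shifted, so the new values satisfy $u_{34}\leftarrow u_{33}$, $u_{35}\leftarrow u_{34}$, $u_{36}\leftarrow u_{35}$, i.e. $c' = d$, $b' = c$, $a' = b$. The only computed bit among the four is $u_{33} = U_9$, and equation (9), namely $U_9 \oplus u_{D(9)} = f$, gives $U_9 = f \oplus u_{D(9)}$, where $f$ is the LFSR bit renamed $F$. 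Substituting $D(9) = 36$ and $F = 0$, the new value of $u_{33}$ equals $u_{36} = a$, i.e. $d' = a$. Hence the block $\{u_{33},u_{34},u_{35},u_{36}\}$ maps into itself by the cyclic permutation $\sigma:\ a' = b,\ b' = c,\ c' = d,\ d' = a$, and no other state bit enters.

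Second, I would translate ``$P \to P$ is an invariant for one round'' into the algebraic condition $P(b,c,d,a) = P(a,b,c,d)$, i.e. $P$ is fixed by the substitution induced by $\sigma$. Since $\sigma$ permutes monomials, a polynomial $\bigoplus_M c_M M$ is $\sigma$-invariant precisely when the coefficients $c_M$ are constant on each $\sigma$-orbit, so the invariant space is spanned by the orbit sums. Enumerating the orbits of $\sigma$ on monomials in $a,b,c,d$ by degree then gives: degree $1$, the single orbit $\{a,b,c,d\}$ with sum $a\oplus b\oplus c\oplus d$; degree $2$, the orbit $\{ac,bd\}$ with sum $ac\oplus bd$ and the orbit $\{ab,bc,cd,ad\}$ with sum $ab\oplus ad\oplus bc\oplus cd$; degree $3$, the orbit $\{abc,abd,acd,bcd\}$ with sum $abc\oplus abd\oplus acd\oplus bcd$; and degree $4$, the fixed monomial $abcd$. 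These are exactly the five listed invariants (the degree-$0$ orbit $\{1\}$ contributing only the trivial constant), so each listed polynomial is $\sigma$-invariant and therefore preserved by one round.

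The main obstacle is the first step: one must be certain the four-bit block is genuinely closed under the round, which hinges on the feedback bit $u_{33} = U_9$ depending on nothing outside the block. This is precisely what the two hypotheses buy --- $D(9) = 36$ routes the feedback back to $u_{36}$, which lies inside the block, rather than to an arbitrary position, and $F = 0$ removes the constant IV contribution that would otherwise make the induced map affine rather than a pure cyclic permutation (with $F = 1$ each listed orbit sum would map to its complement, destroying the invariance). Once closure and the cyclic form of $\sigma$ are established, the remainder is the routine orbit bookkeeping above; I would also remark that the same argument shows these five, together with the constant, span the entire invariant space on $a,b,c,d$, which matches the statement that exactly these are observed.
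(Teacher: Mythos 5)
Your proposal is correct and follows essentially the same route as the paper: both reduce the round, under $D(9)=36$ and $F=0$, to the cyclic permutation $d\to c\to b\to a\to d$ on the block $\{u_{33},\dots,u_{36}\}$ via the shifts and equation (9), after which the five polynomials are checked to be fixed. Your orbit-sum argument is a cleaner finish than the paper's ``it can be easily verified'' and additionally shows these five (plus the constant) span the \emph{entire} invariant space on the four bits; the only blemish is the parenthetical claim that for $F=1$ each orbit sum maps to its complement, which holds for $a\oplus b\oplus c\oplus d$ but not, e.g., for $ac\oplus bd$ (which maps to $ac\oplus bd\oplus c$), though this aside plays no role in the proof.
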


\vspace{-30pt}
\begin{proof}
We have $d = u_{33}$, $c = u_{34}$, $b = u_{35}$, $a = u_{36}$\\
We observe that $[d] \to [c]$, $[c] \to [b]$ and $[b] \to [a]$ for one round.\\
From equation (9) the following holds:
$$a = F \oplus d$$
So, if $F=0$ then we have $[a] \to [d]$ in the next round.\\
Then it can be easily verified that the above equations will hold for the next round.
\end{proof}

\begin{theorem}[Invariants for $D(9) = 36$ and $L=0$ or $1$]
By following Theorem 4.1, replacing $Z$ by the original Boolean function of T-310 and using the long term key with\\
$D = 1,13,3,2,11,12,32,28,36$ and \\
$P = 35,27,26,34,31,29,25,14,7,22,15,33,8,30,6,10,23,4,24,18,9,20,17,19,16,5,21$\\ we observe that the invariant space has dimension 25 when $L=0$ and dimension 32 when $L=1$.
\end{theorem}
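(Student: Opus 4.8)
The plan is to make the informal three-step recipe of Section~4.1 completely explicit for this key, reduce the requested dimensions to a purely combinatorial count, and then organise that count separately for $L=0$ and $L=1$.

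First I would instantiate the round map on the twelve coordinates $a=u_{36},b=u_{35},\dots,l=u_{25}$. Reading off $D$ and $P$ for this key, the nine shifted positions give $a'=b,\ b'=c,\ c'=d,\ e'=f,\ f'=g,\ g'=h,\ i'=j,\ j'=k,\ k'=l$, while the three freshly computed positions $u_{33},u_{29},u_{25}$ (equations (9),(8),(7)) give $d'=F\oplus a$, $h'=i\oplus F\oplus Z$ and $l'=e\oplus h\oplus F\oplus Z$, where $Z=Z(L,b,j,k,c,f)$. The decisive structural point — which is exactly why a $12$-bit computation is meaningful for this particular key — is that here $D(7),D(8),D(9)\in\{25,\dots,36\}$ and $P(1),\dots,P(6)\in\{25,\dots,36\}$, so every right-hand side lies inside the window. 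Hence, once $F$ is set to $0$ (following Theorem~4.1) and $L$ and the genuine Boolean function $Z$ are fixed, $R$ is a bona fide self-map $R_L\colon\mathbb{F}_2^{12}\to\mathbb{F}_2^{12}$ rather than something leaking to bits outside the window.

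Second I would observe that the ``invariant space'' of Step~3 is precisely the fixed space of the pullback $R_L^{\ast}\colon P\mapsto P\circ R_L$ acting on the $2^{12}$-dimensional $\mathbb{F}_2$-vector space of Boolean functions on the twelve bits; XOR-ing each monomial $m$ with $R_L^{\ast}(m)$ and taking the kernel computes exactly $\ker(R_L^{\ast}+\mathrm{id})$. The key lemma is that $R_L$ is a bijection: it can be inverted in closed form by undoing the shift chains and then solving the two $Z$-coupled equations for $i$ and $e$, or one may invoke the bijectivity of KT1 rounds from Appendix~C of [9] together with the closure of the window established above. For a bijection the invariant functions are exactly those that are constant on each cycle, so $\dim(\text{invariant space})$ equals the number of cycles of $R_L$ on $\mathbb{F}_2^{12}$. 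As a consistency check, the block $\{a,b,c,d\}$ is autonomous and acts (with $F=0$) as the cyclic shift on $\mathbb{F}_2^{4}$, whose six orbits reproduce the constant function together with the five invariants of Theorem~4.1.

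Third I would compute the two cycle counts. The cleanest human-readable route exploits the skew-product structure: $\{a,b,c,d\}$ drives the remaining eight bits only through the two arguments $b,c$ of $Z$, so one can fix each of the six orbits of the $4$-cycle, form the corresponding time-$s$ return map on $\mathbb{F}_2^{8}$ (where $s$ is the orbit length), and sum the resulting cycle counts. The claim then becomes that for $Z(0,\cdot)$ the total is $25$ and for $Z(1,\cdot)$ it is $32$. The main obstacle is genuinely this last step: the coupling through the nonlinear $Z$ admits no clean closed form, so the rigorous count is ultimately a finite enumeration of the $4096$ states (the scripts of Appendix~E.2), and the only conceptual content left is to explain the jump from $25$ to $32$, i.e.\ to show that substituting $L=1$ into $Z$ degenerates it enough in its dependence on the relevant arguments $b,c,f,j,k$ (lowering its effective degree or introducing a linear structure) that many block-$2$ cycles split, raising the number of orbits. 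Establishing that degeneracy, rather than the mechanical enumeration, is where the real work lies.
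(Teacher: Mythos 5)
Your proposal is correct in every checkable detail and is genuinely more of a proof than the paper offers: for this statement the paper writes only ``we omit the proof because of the extensive length of the invariants,'' i.e.\ the two dimensions are reported as a raw software observation with no supporting argument. Your instantiation of the round map is exactly right for this key ($a'=b$, $b'=c$, $c'=d$, $d'=F\oplus a$, the shift chains on $e,\dots,h$ and $i,\dots,l$, and $h'=i\oplus F\oplus Z(L,b,j,k,c,f)$, $l'=e\oplus h\oplus F\oplus Z$, using $D(7)=32$, $D(8)=28$, $D(9)=36$, $P(6)=29$ and $P(1\text{--}5)=35,27,26,34,31$), and your observation that the window $\{25,\dots,36\}$ is closed under all these dependencies is precisely the reason a $12$-bit invariant search is meaningful for this key. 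The reformulation of Step~3 as computing $\ker(R_L^{\ast}+\mathrm{id})$, the identification of its dimension with the number of cycles of the bijection $R_L$ on $\mathbb{F}_2^{12}$ (consistency-checked against the six necklace classes of the autonomous $(a,b,c,d)$-shift, which recover the five invariants of the preceding theorem plus the constant), and the skew-product decomposition over that $4$-bit base are genuine conceptual additions absent from the paper. Two cautions. First, your fallback appeal to the KT1 bijectivity result of [9, App.~C] does not apply here: this key is not KT1 (it has $D(1)=1\neq 0$ and $P(3)=26\neq 33$), so you must rely on your direct closed-form inversion --- which does work, since $b,j,k,c,f$ and hence $Z$ are recoverable from the output alone. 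Second, ``dimension equals number of orbits'' presupposes that the constant monomial is among the $2^{12}$ monomials fed to the kernel computation; if it is excluded the count is off by one, so that convention must be fixed before matching the values $25$ and $32$. With those caveats, what remains is exactly the finite enumeration you identify, and your honest assessment that the specific numbers (and any explanation of the jump from $25$ to $32$) can only come from that enumeration accurately reflects the actual status of this theorem.
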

\noindent\emph{Proof.}
We omit the proof because of the extensive length of the invariants.\\

\pagebreak
\begin{theorem}[Invariants for $D(9) = 36$ and $L=1$]
When modifying Theorem 4.2, by replacing $Z$ with a Boolean function we generated at random we observe that the invariant space has dimension 51 when $L=1$.
\end{theorem}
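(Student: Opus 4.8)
The plan is to recast the assertion as an exact finite-dimensional linear-algebra computation over $\mathbb{F}_2$, following the same three-step procedure of Section~4.1 that underlies Theorems~4.1 and~4.2. Let $V$ be the $2^{12}=4096$-dimensional $\mathbb{F}_2$-vector space of all Boolean functions on the twelve window bits $u_{25},\dots,u_{36}$ (renamed $a,\dots,l$), with the monomials as its canonical basis. One round of the cipher, specialized to the long-term key of Theorem~4.2, with the IV bit fixed to $F=0$, the key bit fixed to $L=1$, and $Z$ replaced by the prescribed random Boolean function, induces a substitution on these twelve variables; precomposing a function with this substitution yields a pullback operator $T:V\to V$. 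A one-round invariant is exactly a $g$ with $T(g)=g$, i.e.\ an element of $\ker(T\oplus\mathrm{Id})$, so the claimed ``dimension $51$'' is the statement $\dim\ker(T\oplus\mathrm{Id})=51$, equivalently $\operatorname{rank}(T\oplus\mathrm{Id})=4045$.

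First I would make $T$ explicit. Nine of the twelve window coordinates are pure shifts ($u_{25}\mapsto u_{26}$, \dots, $u_{35}\mapsto u_{36}$), while the three freshly produced bits sit at the positions $\equiv 1\pmod 4$, namely $25,29,33$, and are read off from equations (7)--(9) of Section~2.1 after substituting the given $D$ and $P$. Because the chosen key sends $P(1),\dots,P(5)$ into $\{25,\dots,36\}$ and the relevant $D$-values either land in the window or equal $D(9)=36$, each of these three outputs is a function of the window bits together with the constants $F,L$ and the single nonlinear block $Z_1$; fixing $F=0$, $L=1$ and plugging in the random $Z$ then collapses $Z_1$ to a concrete function of the window bits, so that $T$ is a genuine endomorphism of $V$ and the window is closed under one round. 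I would then, for each of the $4096$ monomials $m$, form $m$ after substitution, reduce it to algebraic normal form using idempotency $x^2=x$, and record the coordinate vector of $T(m)\oplus m$; stacking these $4096$ columns gives the matrix of $T\oplus\mathrm{Id}$.

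The final step is an exact Gaussian elimination of this $4096\times4096$ matrix over $\mathbb{F}_2$, whose nullity is the sought dimension. As a correctness check, the five symmetric invariants guaranteed by Theorem~4.1 must already lie in $\ker(T\oplus\mathrm{Id})$, and the extra kernel vectors are precisely what lifts the count from the dimension $32$ of Theorem~4.2 to $51$. The main obstacle is not conceptual but one of arithmetic fidelity: the images $T(m)$ must be reduced to ANF exactly and the rank must be computed exactly over $\mathbb{F}_2$, since any floating-point or wrong-modulus slip would corrupt the reported dimension; and one must verify that substituting the random $Z$ does not smuggle in a dependence on a bit outside $\{25,\dots,36\}$, which would make $T$ ill-defined on $V$. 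Granting those verifications, the dimension count is a deterministic certificate that this particular random $Z$ yields $\dim\ker(T\oplus\mathrm{Id})=51$.
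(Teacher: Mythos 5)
Your proposal is correct and is essentially the paper's own method: the paper omits an explicit proof, but the computation it performs (via the Section 5.1 procedure and the Appendix E.2 scripts) is exactly your construction of the pullback operator on the $2^{12}$ monomials, XORing each monomial with its one-round image, and computing the kernel of the resulting matrix over $\mathbb{F}_2$. Your added checks — that the window $\{25,\dots,36\}$ is closed under one round for this key (it is, since $D(7)=32$, $D(8)=28$, $D(9)=36$, $P(6)=29$ and $P(1\text{--}5)\subset\{25,\dots,36\}$) and that the five $Z$-independent invariants of Theorem 5.2.1 must reappear in the kernel — are sensible sanity conditions that the paper leaves implicit.
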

\noindent\emph{Proof.}
We again omit the proof because of the extensive length of the invariants.\\
However, some of the invariants are listed in Appendix D.
The Boolean function we used in order to observe this large number of invariants is:
\begin{align*}
Z(x0,x1,x2,x3,x4,x5) = x1 \oplus x0x1 \oplus x0x3 \oplus x2x5 \oplus x3x5 \oplus x4x5 \oplus x0x1x2\\
\oplus x0x2x5 \oplus x0x3x4 \oplus x0x3x5 \oplus x0x4x5 \oplus x1x2x5 \oplus x1x4x5 \oplus x2x4x5 \oplus x0x1x3x4\\ 
\oplus x0x2x4x5
\oplus x0x3x4x5 \oplus x1x2x3x4 \oplus x1x2x3x5 \oplus x1x2x4x5 \oplus x1x3x4x5\\
\oplus x0x1x2x3x4 \oplus x0x1x2x3x5 \oplus x0x1x3x4x5 \oplus x0x2x3x4x5 \oplus x1x2x3x4x5
\end{align*}

\begin{table}[h]
\caption{Walsh Spectrum and Autocorrelation Spectrum}
\begin{center}
{\small \begin{tabular}{|c|c|lllcc}
\cline{1-2} \cline{6-7}
Walsh Spectrum & Frequency &  &  & \multicolumn{1}{l|}{} & \multicolumn{1}{c|}{Autocorrelation Spectrum} & \multicolumn{1}{c|}{Frequency} \\ \cline{1-2} \cline{6-7} 
-12            & 1         &  &  & \multicolumn{1}{l|}{} & \multicolumn{1}{c|}{-32}                      & \multicolumn{1}{c|}{2}         \\ \cline{1-2} \cline{6-7} 
-8             & 2         &  &  & \multicolumn{1}{l|}{} & \multicolumn{1}{c|}{-24}                      & \multicolumn{1}{c|}{5}         \\ \cline{1-2} \cline{6-7} 
-6             & 5         &  &  & \multicolumn{1}{l|}{} & \multicolumn{1}{c|}{-16}                      & \multicolumn{1}{c|}{6}         \\ \cline{1-2} \cline{6-7} 
-4             & 6         &  &  & \multicolumn{1}{l|}{} & \multicolumn{1}{c|}{-8}                       & \multicolumn{1}{c|}{10}        \\ \cline{1-2} \cline{6-7} 
-2             & 14        &  &  & \multicolumn{1}{l|}{} & \multicolumn{1}{c|}{0}                        & \multicolumn{1}{c|}{17}        \\ \cline{1-2} \cline{6-7} 
0              & 13        &  &  & \multicolumn{1}{l|}{} & \multicolumn{1}{c|}{8}                        & \multicolumn{1}{c|}{11}        \\ \cline{1-2} \cline{6-7} 
2              & 11        &  &  & \multicolumn{1}{l|}{} & \multicolumn{1}{c|}{16}                       & \multicolumn{1}{c|}{9}         \\ \cline{1-2} \cline{6-7} 
4              & 8         &  &  & \multicolumn{1}{l|}{} & \multicolumn{1}{c|}{24}                       & \multicolumn{1}{c|}{2}         \\ \cline{1-2} \cline{6-7} 
6              & 1         &  &  & \multicolumn{1}{l|}{} & \multicolumn{1}{c|}{32}                       & \multicolumn{1}{c|}{1}         \\ \cline{1-2} \cline{6-7} 
8              & 1         &  &  & \multicolumn{1}{l|}{} & \multicolumn{1}{c|}{64}                       & \multicolumn{1}{c|}{1}         \\ \cline{1-2} \cline{6-7} 
12             & 1         &  &  &                       &                                               &                                \\ \cline{1-2}
%30             & 1         &  &  &                       & \multicolumn{1}{l}{}                          & \multicolumn{1}{l}{}           \\ \cline{1-2}
\end{tabular}}
\end{center}
\end{table}
As we can see from the above tables, we have similar values for the Walsh and Autocorrelation Spectrum as for the original Boolean function Z of T-310 and the frequencies are again focused around 0 which indicates that the above Boolean function exhibits a good level of resistance against linear and differential cryptanalysis.
\newline
\newline
\newline
\begin{conjecture}[Invariants on 12 bits for $F=0$ and $F=1$]
We believe that invariants on 12 bits are impossible for both $F=0$ and $F=1$ simultaneously.
\end{conjecture}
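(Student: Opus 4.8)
The plan is to exploit the fact that the IV bit $f$ (written $F$ after the renaming of Section 4.1) enters the round purely additively, and only through equation (9). Solving the telescoping system (1)--(9) from the bottom up gives $U_9=f\oplus u_{D(9)}$, $U_8=f\oplus Z_1\oplus u_{D(8)}$ and $U_7=f\oplus Z_1\oplus u_{D(7)}\oplus u_{P(6)}$, so $f$ occurs exactly once, as an additive term, in each of the nine freshly computed bits (those at positions $\equiv 1\pmod 4$). Inside the window $u_{25},\dots,u_{36}$ the three computed bits are $u_{25}=U_7$, $u_{29}=U_8$, $u_{33}=U_9$. Writing $\phi_0,\phi_1$ for the round map with $F=0$ and $F=1$, and $\pi$ for the projection onto the $12$-bit window, we therefore have $\pi(\phi_1(x))=\pi(\phi_0(x))\oplus\delta$ for every state $x$, where $\delta=e_{25}\oplus e_{29}\oplus e_{33}$ and $e_j$ is the indicator of position $j$.

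First I would reduce a hypothetical simultaneous invariant to a translation symmetry. Suppose $P$ is non-constant and invariant on the $12$ bits for both values of $F$, i.e. $P\circ\pi\circ\phi_0=P\circ\pi$ and $P\circ\pi\circ\phi_1=P\circ\pi$ as functions on $\mathbb{F}_2^{36}$. Combining the two identities with $\pi\phi_1=\pi\phi_0\oplus\delta$ yields $P(\pi(\phi_0(x))\oplus\delta)=P(\pi(\phi_0(x)))$ for all $x$. Because every KT1 key induces a bijective round function (Appendix C of [9]), $\phi_0$ is a bijection of $\mathbb{F}_2^{36}$, so $x\mapsto\pi(\phi_0(x))$ is balanced onto the window and in particular surjective; hence $P(w\oplus\delta)=P(w)$ for all $w\in\mathbb{F}_2^{12}$. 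Thus $P$ must be invariant under the translation $\tau_\delta$ that simultaneously flips $u_{25},u_{29},u_{33}$, and the whole task becomes: show that no non-constant $P$ can be invariant under both $\phi_0$ and $\tau_\delta$.

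Next I would propagate this symmetry around the shift structure. Since the round sends each non-multiple-of-$4$ bit $u_p$ to $u_{p+1}$, flipping the inputs $u_{25},u_{29},u_{33}$ flips the outputs $u_{26},u_{30},u_{34}$ and (provided none of these three inputs occurs among the $Z_1$-inputs $u_{P(1)},\dots,u_{P(5)}$, nor among the $u_{D(\cdot)}$ terms, which for KT1 are multiples of $4$) nothing else. Applying $\phi_0$-invariance at $x$ and at $x\oplus\delta$ and using $P\circ\tau_\delta=P$ then gives $P\circ\tau_{\delta'}=P$ with $\delta'=e_{26}\oplus e_{30}\oplus e_{34}$; iterating two further steps produces invariance under $\delta''=e_{27}\oplus e_{31}\oplus e_{35}$ and $\delta'''=e_{28}\oplus e_{32}\oplus e_{36}$. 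These four directions have disjoint supports covering all twelve positions, so $P$ is forced to be invariant under the full $4$-dimensional group of constant-per-residue-class translations, leaving it a function of the eight within-class differences. To collapse the remaining eight degrees of freedom I would feed the multiples-of-$4$ bits back in: flipping $u_{28},u_{32},u_{36}$ does not shift (those old values are discarded) but enters the computed bits through $u_{D(7)},u_{D(8)},u_{D(9)}$, and the resulting closure (for $D(9)=36$ this is exactly the $4$-cycle $u_{33}\!\to\!u_{34}\!\to\!u_{35}\!\to\!u_{36}\!\to\!u_{33}$ used in Theorem 4.1) yields further admissible translation directions; the goal is to show they generate all of $\mathbb{F}_2^{12}$, whence $P$ is constant, a contradiction.

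The main obstacle is precisely this closure step. The clean propagation around the three shift columns relies on the mild genericity assumption that a flipped coordinate never occurs among the inputs of the nonlinear function $Z_1$, and the final feedback through $D$ is genuinely key-dependent: whether flipping $\{u_{28},u_{32},u_{36}\}$ produces $e_{25}$, $e_{29}$, $e_{33}$ individually, rather than only in the combination $\delta$, is governed by how $\{28,32,36\}$ are distributed among $D(7),D(8),D(9)$ (together with the $Z_1$-inputs $P(1),\dots,P(5)$), which in turn is fixed by the branch permutation $\{j_1,\dots,j_8\}$ defining $D$ for a KT1 key. I would therefore expect the proof to need a finite case analysis over the admissible $D$-patterns, verifying in each case that the generated translation group is the full space. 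The one place where a uniform argument seems to break down is the nonlinearity of $Z_1$, which can spoil the conjugation $\phi_0\circ\tau_\delta=\tau_{\delta'}\circ\phi_0$ whenever a flipped coordinate is a $Z_1$-input; controlling that interaction is, in my view, the crux on which the conjecture stands or falls.
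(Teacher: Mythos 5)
First, a point of comparison: the paper does not prove this statement at all --- it is offered as a conjecture and is ``supported'' only by two examples in which one particular polynomial $\mathcal{P}$ happens to be invariant for exactly one value of $F$. Your opening reduction is therefore already strictly more than the paper establishes, and it is correct: since $f$ enters the system (1)--(9) only through equation (9) and telescopes additively into each fresh output, the three fresh bits of the window, $u_{25}=U_7$, $u_{29}=U_8$, $u_{33}=U_9$, are each flipped when $F$ is toggled, so any simultaneous invariant $P$ must satisfy $P(w\oplus\delta)=P(w)$ with $\delta=e_{25}\oplus e_{29}\oplus e_{33}$ on the image of $\pi\circ\phi_0$ (all of $\mathbb{F}_2^{12}$ when $\phi$ is bijective --- which is guaranteed for KT1 keys but not for the arbitrary long term keys over which the conjecture quantifies). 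This necessary condition is exactly what the paper's two examples instantiate, and it is a genuinely useful partial result.

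The gap is in everything after that. The propagation step needs the conjugation $\phi_0\circ\tau_\delta=\tau_{\delta'}\circ\phi_0$ on the window, which requires $\{25,29,33\}$ to be disjoint from the $Z_1$ taps $P(1),\dots,P(5)$ as well as from $\{D(7),D(8),D(9),P(6)\}$. You call this a ``mild genericity assumption'', but it fails systematically rather than exceptionally: every KT1 key has $P(3)=33$ by definition (Appendix A), so $u_{33}$ is always a $Z_1$ input, and in both of the paper's supporting examples the taps $P(1),\dots,P(5)$ meet $\{25,29,33\}$ (they contain $25$ and $33$ in Example 5.2.1 and $29$ in Example 5.2.2). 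In all these cases flipping $\delta$ perturbs $Z_1$ by a state-dependent amount that enters $U_7$ and $U_8$ additively, so the conjugate of $\tau_\delta$ is not a translation and the chain $\delta\to\delta'\to\delta''\to\delta'''$ never gets started. Moreover, the final closure step --- showing that the feedback through $D(7),D(8),D(9)$ generates all of $\mathbb{F}_2^{12}$ --- is only announced as a key-dependent case analysis and is not carried out; and $\tau_\delta$-invariance alone still leaves a space of candidate invariants of dimension $2^{11}$, so the part of the argument that does work cannot by itself settle the conjecture. In short: keep the first reduction as a lemma (it sharpens the paper's evidence considerably), but the proposal as it stands does not prove the statement.
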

\vskip-10pt
We support the conjecture by the following two examples. We claim that each of the following invariants holds for either $F=0$ or $F=1$.
\begin{example}
The long term key with $D = 1,26,9,34,4,19,28,36,32$ and\\
$P = 30,35,31,25,33,27,5,22,17,29,13,20,1,9,21,3,24,7,6,28,26,2,32,23,18,4,34$ exhibits the following invariant $\mathcal{P}$:\\
$\mathcal{P} = efghijkl \oplus defghijkl \oplus cefghijkl \oplus cdefghijkl \oplus befghijkl \oplus bdefghijkl \oplus bcefghijkl \oplus bcdefghijkl \oplus aefghijkl \oplus adefghijkl \oplus acefghijkl \oplus acdefghijkl \oplus abefghijkl \oplus abdefghijkl \oplus abcefghijkl \oplus abcdefghijkl$\\\\
According to the long term key we have $P(6) = 27 = j$ and we know that $[j] \to [i]$ in one round. By using equations (7), (8) and (9) we have:
$$i = F \oplus Z \oplus l \oplus i$$
$$e = F \oplus d$$
$$a = F \oplus Z \oplus h$$
We compute $\mathcal{P}$ for the next round, we then replace Z by the boolean function of T-310 and set $F=0$. Thus, $\mathcal{P}$ becomes:\\
$\mathcal{P} = abcdefghijkl \oplus bcdefghijkl \oplus acdefghijkl \oplus abdefghijkl \oplus abcdefgijkl \oplus abcdefghijk \oplus cdefghijkl \oplus bdefghijkl \oplus bcdefgijkl \oplus bcdefghijk \oplus adefghijkl \oplus acdefgijkl \oplus acdefghijk \oplus abdefgijkl \oplus abdefghijk \oplus abcdefgijk \oplus defghijkl \oplus cdefgijkl \oplus cdefghijk \oplus bdefgijkl \oplus bdefghijk \oplus bcdefgijk \oplus adefgijkl \oplus adefghijk \oplus acdefgijk \oplus abdefgijk \oplus defgijkl \oplus defghijk \oplus cdefgijk \oplus bdefgijk \oplus adefgijk \oplus defgijk$\\\\
Clearly, we can deduce from the length of the polynomial that it is not an invariant for $F=0$.
However, when we set $F=1$ instead of $F=0$, $\mathcal{P}$ becomes:\\
$\mathcal{P} = abcdefghijkl \oplus bcdefghijkl \oplus acdefghijkl \oplus abdefghijkl \oplus abcefghijkl \oplus cdefghijkl \oplus bdefghijkl \oplus bcefghijkl \oplus adefghijkl \oplus acefghijkl \oplus abefghijkl \oplus defghijkl \oplus cefghijkl \oplus befghijkl \oplus aefghijkl \oplus efghijkl$\\\\
Thus, the invariant holds only for $F=1$.
\end{example}

\begin{example}
The long term key with $D = 0,30,26,6,7,1,28,36,32$ and\\
$P = 26,30,29,31,27,36,5,18,9,15,10,19,28,13,21,32,17,25,14,7,11,3,20,35,34,33,2$ exhibits the following invariant $\mathcal{P}$:\\
$\mathcal{P} = abcdijkl \oplus abcdhijkl \oplus abcdgijkl \oplus abcdghijkl \oplus abcdfijkl \oplus abcdfhijkl \oplus abcdfgijkl \oplus abcdfghijkl \oplus abcdeijkl \oplus abcdehijkl \oplus abcdegijkl \oplus abcdeghijkl \oplus abcdefijkl \oplus abcdefhijkl \oplus abcdefgijkl \oplus abcdefghijkl$\\\\
According to the equations (7), (8) and (9) and the long term key in which $D(7) = 28$, 
$D(8) = P(6) = 36$, $D(9) = 32$, we have: 
$$i = l \oplus h$$
$$e = F \oplus d$$
$$d = F \oplus Z \oplus h$$
We compute $\mathcal{P}$ for the next round, we then replace Z by the boolean function of T-310 and set $F=0$. Thus, $\mathcal{P}$ becomes:\\
$\mathcal{P} = abcdefghijkl \oplus abcefghijkl \oplus abcdfghijkl \oplus abcdeghijkl \oplus abcdefhijkl \oplus abcdefghijk \oplus abcfghijkl \oplus abceghijkl \oplus abcefhijkl \oplus abcefghijk \oplus abcdghijkl \oplus abcdfhijkl \oplus abcdfghijk \oplus abcdehijkl \oplus abcdeghijk \oplus abcdefhijk \oplus abcghijkl \oplus abcfhijkl \oplus abcfghijk \oplus abcehijkl \oplus abceghijk \oplus abcefhijk \oplus abcdhijkl \oplus abcdghijk \oplus abcdfhijk \oplus abcdehijk \oplus abchijkl \oplus abcghijk \oplus abcfhijk \oplus abcehijk \oplus abcdhijk \oplus abchijk$\\\\
We can easily deduce that it is not an invariant for $F=0$ since the length of the resulting polynomial is bigger than expected.\\
However, when we set $F=1$ instead of $F=0$, $\mathcal{P}$ becomes:\\
$\mathcal{P} = abcdefghijkl \oplus abcdfghijkl \oplus abcdeghijkl \oplus abcdefhijkl \oplus abcdefgijkl \oplus abcdghijkl \oplus abcdfhijkl \oplus abcdfgijkl \oplus abcdehijkl \oplus abcdegijkl \oplus abcdefijkl \oplus abcdhijkl \oplus abcdgijkl \oplus abcdfijkl \oplus abcdeijkl \oplus abcdijkl$\\
Therefore, the invariant holds only for $F=1$.
\end{example}

\vspace{-30pt}
\section{Invariants on 20 Bits}
In this section, we rename each bit $u_i$, where $i \in \{17,18,\dots,36\}$, by the letters $\{t,s,r,\dots,b,a\}$, the IV bit, the s2 bit and the Boolean functions Z1, Z2 by the letters '$F$', '$L$', '$Z$' and '$Y$', respectively.

\begin{figure}[h]
    \centering
    \includegraphics[scale=0.4]{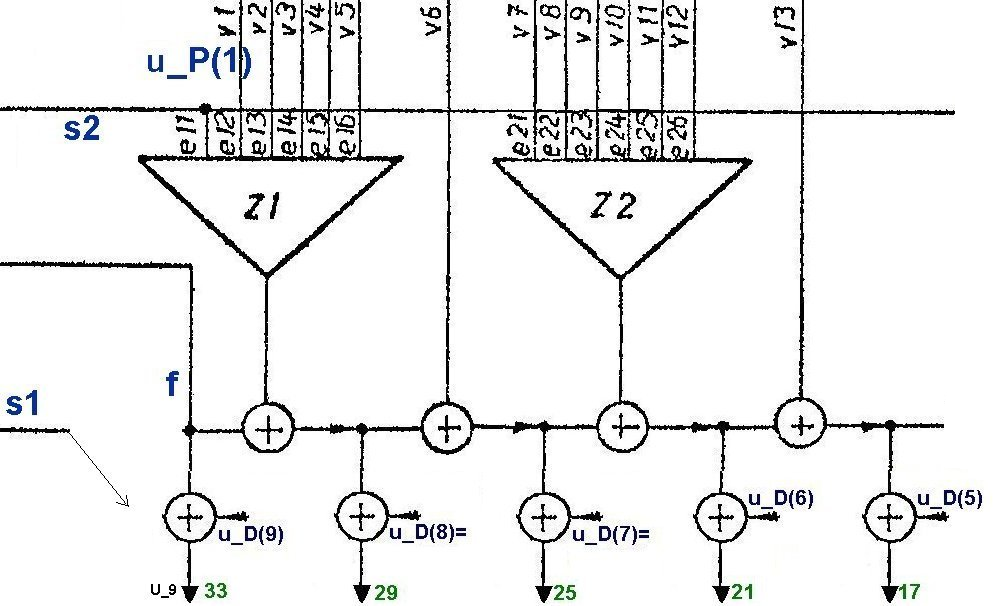}
    \caption{Area of Interest for 20 Bit Invariants}
    \label{fig:mesh1}
\end{figure}

\begin{theorem}[Toy example of invariant on 20 bits]
For the Boolean function\\
$Z(x0,x1,x2,x3,x4,x5) = x1*x2*x3*x4*x5$ and the long term key with\\
$D = 17,25,26,35,18,34,30,32,28$ and\\
$P = 27,29,31,21,33,19,26,25,22,32,23,17,24,16,18,9,5,10,35,13,36,30,34,11,2,28,14$\\
we have the invariant $\mathcal{P} = h \oplus g \oplus f \oplus e \oplus fgh \oplus egh \oplus efh \oplus efg$
for $F = 0$ and the invariant $\mathcal{R} = g \oplus f \oplus gh \oplus eh \oplus fg \oplus ef$ for $F = 1$.
\end{theorem}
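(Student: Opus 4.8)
The plan is to check both invariances by the standard one-round substitution: write each of the four state bits $e=u_{32}$, $f=u_{31}$, $g=u_{30}$, $h=u_{29}$ after one round as a polynomial in the old state, plug these into $\mathcal{P}$ and $\mathcal{R}$, expand over $\mathbb{F}_2$, and compare with the original polynomial. The first task is the round bookkeeping. Three of the four positions, $30,31,32$, are non-computed and merely receive the shifted value from the position below, so after one round $e\mapsto f$, $f\mapsto g$, $g\mapsto h$. Only $u_{29}$ is freshly computed; combining equations (8) and (9) gives $U_8=u_{D(8)}\oplus F\oplus Z_1$, and since $D(8)=32$ this is the rule $h\mapsto e\oplus F\oplus Z_1$. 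Here $Z_1$ is the only active copy of the Boolean function; because $Z(x0,\dots,x5)=x1x2x3x4x5$ does not depend on its $x0=s_2=L$ slot, the bit $L$ plays no role, which is why it is absent from the statement.

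The decisive step is to read off $Z_1$ for this key. With $P(1),\dots,P(5)=27,29,31,21,33$ we get $Z_1=u_{27}u_{29}u_{31}u_{21}u_{33}$, which in the renamed letters is $j\,h\,f\,p\,d$. The crucial feature is that $Z_1$ is divisible by both $f=u_{31}$ and $h=u_{29}$. Working in $\mathbb{F}_2$ with $f^2=f$ and $h^2=h$, this yields the two identities $Z_1(f\oplus h)=0$ and $Z_1(1\oplus fg\oplus fh\oplus gh)=0$ (the first because $Z_1 f=Z_1 h=Z_1$, the second because additionally $Z_1 fg=Z_1 gh$ cancel while $Z_1 fh=Z_1$ absorbs the $1$). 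These are exactly the relations that make the high-degree contributions of $Z_1$ disappear.

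With these in hand the expansions are short. Substituting $(e,f,g,h)\mapsto(f,g,h,\;e\oplus F\oplus Z_1)$ into $\mathcal{P}$ and collecting terms gives $\mathcal{P}_{\mathrm{new}}=\mathcal{P}\oplus(F\oplus Z_1)(1\oplus fg\oplus fh\oplus gh)$; the $Z_1$ part vanishes by the identity above, leaving $\mathcal{P}_{\mathrm{new}}=\mathcal{P}\oplus F(1\oplus fg\oplus fh\oplus gh)$, which equals $\mathcal{P}$ precisely when $F=0$. The same substitution in $\mathcal{R}$ produces a single cross term $H(f\oplus h)$ with $H=e\oplus F\oplus Z_1$; using $Z_1(f\oplus h)=0$ it collapses to $ef\oplus eh\oplus F(f\oplus h)$, and one finds $\mathcal{R}_{\mathrm{new}}=\mathcal{R}\oplus(1\oplus F)(f\oplus h)$, which equals $\mathcal{R}$ precisely when $F=1$. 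This reproduces both claims and also explains the clean split: each invariant is forced by the same divisibility fact, with the constant $F$ selecting which residual term, $(1\oplus fg\oplus fh\oplus gh)$ or $(f\oplus h)$, is switched off.

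The main obstacle is purely in the setup rather than the computation: correctly identifying that among $u_{29},\dots,u_{32}$ only $u_{29}$ is recomputed, and deriving the closed form $U_8=u_{32}\oplus F\oplus Z_1$ from the chain of equations (8)--(9) with $D(8)=32$ and $D(9)=28$. Once the single structural observation $f\mid Z_1$ and $h\mid Z_1$ is secured, the rest is routine $\mathbb{F}_2$ arithmetic and no genuine difficulty remains.
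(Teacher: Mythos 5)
Your proposal is correct and follows essentially the same route as the paper's proof: a direct one-round substitution on the bits $u_{29},\dots,u_{32}$ using $U_8=u_{D(8)}\oplus F\oplus Z_1$ from equations (8)--(9), with the whole argument hinging on the fact that $Z_1=u_{21}u_{27}u_{29}u_{31}u_{33}$ is a single monomial divisible by $f$ and $h$, so that $Z_1(f\oplus h)=0$ and $Z_1(1\oplus fg\oplus fh\oplus gh)=0$ kill the nonlinear contribution. The only cosmetic difference is that you substitute new-state bits in terms of old ones while the paper writes old bits in terms of new ones; the residual terms $F(1\oplus fg\oplus fh\oplus gh)$ and $(1\oplus F)(f\oplus h)$ you isolate match the paper's case split between $F=0$ and $F=1$.
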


\begin{proof}
We have $p = u_{21}$, $o = u_{22}$, $j = u_{27}$, $i = u_{28}$, $h = u_{29}$, $g = u_{30}$, $f = u_{31}$, $e = u_{32}$ $d = u_{33}$, $c = u_{34}$\\
We observe that $[p] \to [o]$, $[j] \to [i]$, $[h] \to [g]$, $[g] \to [f]$, $[f] \to [e]$ and $[d] \to [c]$ for one round.\\
From the combination of equations (8) and (9) the following holds:
$$e = F \oplus Z \oplus h$$
According to the theorem, we have $P(1-5) = \{27,29,31,21,33\}$, hence $Z = j*h*f*p*d$ which becomes $Z = i*f*e*o*c$ in the next round.\\
We have that $\mathcal{P} = h \oplus g \oplus f \oplus fgh \oplus e * ( 1 \oplus gh \oplus fh \oplus fg)$.\\
In the next round we have:
$$\mathcal{P} = g \oplus f \oplus e \oplus efg \oplus (F \oplus Z \oplus h) * (1 \oplus fg \oplus eg \oplus ef)$$
Then, if we set $F = 0$ and replace $Z$ by the expression $i*f*e*o*c$ we get back
$$\mathcal{P} = h \oplus g \oplus f \oplus e \oplus fgh \oplus egh \oplus efh \oplus efg$$
as expected.\\
Similarly, we have that $\mathcal{R} = g \oplus f \oplus gh \oplus fg \oplus e*(h \oplus f)$ which results in 
$$\mathcal{R} = f \oplus e \oplus fg \oplus oef \oplus (F \oplus Z \oplus h) * (g \oplus e)$$
in the next round and if we set $F = 1$ and replace $Z$ by the expression $i*f*e*o*c$ we get back
$$\mathcal{R} = g \oplus f \oplus gh \oplus eh \oplus fg \oplus ef$$
as expected.
\end{proof}

\newpage
\begin{theorem}[Invariant for $D(8)=32$]
For the Boolean function of T-310, if $F = 0$, $L = 0$ and for the long term key with
$D = 0,12,16,4,36,28,20,32,24$ and\\
$P = 22,29,18,31,30,32,35,27,34,28,33,26,20,24,21,17,13,25,27,8,19,36,23,16,4,15,14$\\
we have the invariant\\
$\mathcal{P} = h \oplus g \oplus f \oplus e \oplus gh \oplus fh \oplus eh \oplus fg \oplus eg \oplus ef \oplus fgh \oplus egh \oplus efh \oplus efg \oplus efgh$.
\end{theorem}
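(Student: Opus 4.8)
The plan is to follow the same coordinate-tracking strategy used in the proof of Theorem 4.5. First I would record the dictionary for the relevant letters, $h=u_{29}$, $g=u_{30}$, $f=u_{31}$, $e=u_{32}$, and observe that $\mathcal{P}$ is exactly the sum of all nonempty monomials in $\{e,f,g,h\}$. Over $GF(2)$ this lets me factor it as $\mathcal{P}=(1\oplus e)(1\oplus f)(1\oplus g)(1\oplus h)\oplus 1$, which is the form that makes the one-round bookkeeping tractable. Note that $\mathcal{P}$ involves none of the other 16 tracked bits, so only the fate of $u_{29},u_{30},u_{31},u_{32}$ matters.

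Next I would determine the one-round images of these four bits. Positions $30,31,32$ are plain shifts, i.e. $[h]\to[g]$, $[g]\to[f]$, $[f]\to[e]$, so the new values satisfy $g'=h$, $f'=g$, $e'=f$; only $u_{29}=U_8$ is freshly computed. Combining equations (8) and (9) telescopes to $U_8=u_{D(8)}\oplus F\oplus Z_1(s_2,u_{P(1-5)})$, and since $D(8)=32$ we have $u_{D(8)}=e$, hence $h'=e\oplus F\oplus Z$; setting $F=0$ gives $h'=e\oplus Z$. Reading off the given $P$, the inputs to $Z=Z(x_0,\dots,x_5)$ are $x_0=s_2=L$, $x_1=u_{22}$, $x_2=u_{29}=h$, $x_3=u_{18}$, $x_4=u_{31}=f$, $x_5=u_{30}=g$.

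I would then substitute the images into the factored form, obtaining $\mathcal{P}'=(1\oplus f)(1\oplus g)(1\oplus h)(1\oplus e\oplus Z)\oplus 1$, so that $\mathcal{P}'\oplus\mathcal{P}=(1\oplus f)(1\oplus g)(1\oplus h)\cdot Z$. The whole theorem thus collapses to the single Boolean identity $(1\oplus f)(1\oplus g)(1\oplus h)\cdot Z=0$, to be verified for all inputs.

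The crux — and the step I expect to be the real work — is verifying this identity. The leading factor equals $1$ exactly on the subcube $f=g=h=0$, which in the arguments of $Z$ means $x_4=x_5=x_2=0$; together with the hypothesis $L=0$ this forces $x_0=x_2=x_4=x_5=0$, leaving only $x_1=u_{22}$ and $x_3=u_{18}$ free. I would then inspect the algebraic normal form of $Z$ from Definition 2.3 and check that every one of its monomials contains at least one of $x_0,x_2,x_4,x_5$, so that $Z$ vanishes identically on this subcube and the product is $0$. The real content of the theorem is precisely that the chosen key routes the three ``dangerous'' coordinates $h,f,g$ onto the $Z$-inputs $x_2,x_4,x_5$ (via $P(2)=29$, $P(4)=31$, $P(5)=30$) while $D(8)=32$ recycles $e=u_{32}$ into $U_8$, and that $Z$ carries no monomial supported only on the surviving inputs $x_1,x_3$; I would single this out as the one place where the specific long term key and the specific Boolean function of T-310 are both genuinely used.
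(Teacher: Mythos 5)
Your proposal is correct and follows essentially the same route as the paper's proof: one round replaces only $h$ via $h'=e\oplus F\oplus Z$ (since $D(8)=32$ and position $29=U_8$ is the only freshly computed bit among $e,f,g,h$), and the invariance of $\mathcal{P}=(1\oplus e)(1\oplus f)(1\oplus g)(1\oplus h)\oplus 1$ reduces to the annihilator identity $Z\cdot(1\oplus x_2)(1\oplus x_4)(1\oplus x_5)=0$ at $L=0$, with $P(2),P(4),P(5)=29,31,30$ routing $h,f,g$ onto $x_2,x_4,x_5$. The only (welcome) addition is that you verify this annihilator identity directly from the ANF of $Z$, which the paper merely asserts as its equation (5.1).
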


\begin{proof}
We observe that $[h] \to [g]$, $[g] \to [f]$, $[f] \to [e]$ for one round.\\
According to the theorem, $D(8) = e$ and from the combination of equations (8) and (9) we have:
$$e = F \oplus Z \oplus h$$
Hence, in the next round $\mathcal{P}$ becomes:\\
$\mathcal{P} = g \oplus f \oplus e \oplus fg \oplus eg \oplus ef \oplus efg \oplus (F \oplus Z \oplus h) * (1 \oplus g \oplus f \oplus e \oplus fg \oplus eg \oplus ef \oplus efg)$\\
If we set $F = 0$ then we have:\\
$\mathcal{P} = h \oplus g \oplus f \oplus e \oplus gh \oplus fh \oplus eh \oplus fg \oplus eg \oplus ef \oplus fgh \oplus egh \oplus efh \oplus efg \oplus efgh \oplus Z * (1 \oplus g \oplus f \oplus e \oplus fg \oplus eg \oplus ef \oplus efg)$\\
Now we exploit the fact that for the Boolean function of T-310 when $L = 0$:\\
\begin{equation}
Z * (1 \oplus x2 \oplus x4 \oplus x5 \oplus x2*x4 \oplus x2*x5 \oplus x4*x5 \oplus x2*x4*x5) = 0
\end{equation}
According to the theorem $P(2) = x2 = h$, $P(4) = x4 = f$ and $P(5) = x5 = g$ which will become $g$, $e$ and $f$ in the next round, respectively. Hence, we have:
$$Z * (1 \oplus g \oplus f \oplus e \oplus fg \oplus eg \oplus ef \oplus efg) = 0$$
Therefore, in the next round we have:\\
$\mathcal{P} = h \oplus g \oplus f \oplus e \oplus gh \oplus fh \oplus eh \oplus fg \oplus eg \oplus ef \oplus fgh \oplus egh \oplus efh \oplus efg \oplus efgh$\\
and $\mathcal{P}$ is an invariant for the above key, if $F = 0$ and $L = 0$.
\end{proof}

\begin{remark}
The above theorem will also be true whenever $\{29,30,31\} \subset \{P(2),P(4),P(5)\}$
\end{remark}
\pagebreak
Now we will use equation (5.1) again in order to expand Theorem 5.3.2 to hold for both $L = 0$ and $L = 1$.

\begin{theorem}[Invariant for both $L=0$ and $L=1$]
For the Boolean function of T-310 and the long term key with $D = 0,4,8,12,24,20,28,36,32$ and\\
$P = 18,29,26,31,30,36,35,19,21,23,24,34,33,15,32,14,12,3,20,16,4,2,13,11,1,12,10$\\
if $F = 1$, we have the invariant $\mathcal{P} = abcdijkl * (1 \oplus h \oplus g \oplus f \oplus e \oplus gh \oplus fh \oplus eh \oplus fg \oplus eg \oplus ef \oplus fgh \oplus egh \oplus efh \oplus efg \oplus efgh)$.
\end{theorem}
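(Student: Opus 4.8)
The plan is to read the stated polynomial as a product of complements and then verify directly that one round fixes it, the only real work being the elimination of a single residual $Z$-term. Writing $\overline{x}=1\oplus x$, the parenthesised factor $1\oplus h\oplus g\oplus f\oplus e\oplus gh\oplus\cdots\oplus efgh$ is precisely $\overline{e}\,\overline{f}\,\overline{g}\,\overline{h}$, so the claimed invariant is a product of three ``windows'',
$$\mathcal{P}=abcd\cdot ijkl\cdot\overline{e}\,\overline{f}\,\overline{g}\,\overline{h},$$
carried by $(u_{36},u_{35},u_{34},u_{33})$, $(u_{28},u_{27},u_{26},u_{25})$ and $(u_{32},u_{31},u_{30},u_{29})$. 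I would then record the one-round action on these twelve bits: the non-multiples of $4$ shift up by one position, while the fresh bits at $u_{33},u_{29},u_{25}$ come from equations (9), (8)+(9) and (7)+(8)+(9). For this key $D(9)=32$, $D(8)=36$, $D(7)=28$, $P(6)=36$, so with $F=1$ (old values on the right) we get: new $(a,b,c)=(b,c,d)$ with new $d=\overline{e}$; new $(e,f,g)=(f,g,h)$ with new $h=\overline{a}\oplus Z$; and new $(i,j,k)=(j,k,l)$ with new $l=a\oplus i\oplus 1\oplus Z$, where $Z=Z_1$ is evaluated on the old state.

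Substituting and grouping the three images, and using $\overline{\overline{a}\oplus Z}=a\oplus Z$, gives
$$\mathcal{P}^{\mathrm{new}}=(bcd\,\overline{e})\,\big(jkl\,(a\oplus i\oplus 1\oplus Z)\big)\,\big(\overline{f}\,\overline{g}\,\overline{h}\,(a\oplus Z)\big).$$
Pulling out the common factor $S=bcd\cdot jkl\cdot\overline{e}\,\overline{f}\,\overline{g}\,\overline{h}$, the round reduces to the two-factor product $(a\oplus i\oplus 1\oplus Z)(a\oplus Z)$, which collapses in $\mathbb{F}_2$ (via $X^2=X$ and $X\oplus X=0$) to $ia\oplus iZ$. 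Since $S\cdot ia=\mathcal{P}$, this establishes
$$\mathcal{P}^{\mathrm{new}}=\mathcal{P}\oplus S\,i\,Z,$$
so the theorem is equivalent to the single identity $S\,i\,Z=0$.

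The decisive step, and the reason for the extra factor $abcd\,ijkl$, is proving $S\,i\,Z=0$ for $L=0$ and $L=1$ simultaneously. The inputs of $Z$ are $x0=L$, $x1=u_{P(1)}$ and $x2=u_{P(2)}=h$, $x3=u_{P(3)}=k$, $x4=u_{P(4)}=f$, $x5=u_{P(5)}=g$. The factors $\overline{f}\,\overline{g}\,\overline{h}$ already present in $S$ force $x2=x4=x5=0$, and a one-line evaluation of the T-310 function on this subcube gives $Z|_{x2=x4=x5=0}=x0\oplus x0x3=x0\,\overline{x3}$. For $L=0$ this vanishes outright (this is exactly the annihilation used in Theorem 5.3.2 via equation (5.1)); for $L=1$ the additional factor $k=x3$ contained in $S$ annihilates $x0\,\overline{x3}$, since $x3\,\overline{x3}=0$. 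Hence $\overline{f}\,\overline{g}\,\overline{h}\,k\cdot Z=0$ for every $L$, so $S\,i\,Z=0$ and $\mathcal{P}^{\mathrm{new}}=\mathcal{P}$.

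I expect this last paragraph to be the only delicate point. Everything up to $\mathcal{P}^{\mathrm{new}}=\mathcal{P}\oplus S\,i\,Z$ is mechanical bookkeeping of the shifts and the telescoped fresh-bit equations; the substance is the observation that the mask carried by $S$ pins the inputs of $Z$ to $x2=x4=x5=0,\ x3=1$, and that $Z$ is forced to zero there \emph{independently of} $x0=L$. This is precisely what upgrades the single-$L$ invariant of Theorem 5.3.2 to one valid for both $L=0$ and $L=1$, and it hinges on the placement $\{P(2),P(4),P(5)\}=\{29,30,31\}$ together with $P(3)=26$; to be safe I would re-derive $Z|_{x2=x4=x5=0}$ from the algebraic normal form and recheck these $P$-values against the given key before concluding.
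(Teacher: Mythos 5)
Your proof is correct and follows essentially the same route as the paper's: push $\mathcal{P}$ through one round using the three fresh-bit equations for $U_7,U_8,U_9$ (with $D(8)=P(6)=36$ cancelling), reduce everything to a single residual term $S\,i\,Z$, and annihilate it using the fact that the mask $\overline{f}\,\overline{g}\,\overline{h}\,k$ pins the inputs $x_2,x_4,x_5$ of $Z$ to $0$ and $x_3$ to $1$. The only difference is presentational and welcome: you identify the degree-$4$ factor as the product of complements $\overline{e}\,\overline{f}\,\overline{g}\,\overline{h}$ and re-derive $Z|_{x_2=x_4=x_5=0}=x_0(1\oplus x_3)$ directly from the ANF, which unifies the $L=0$ and $L=1$ cases that the paper treats via the two separate annihilator identities surrounding equation (5.1).
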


\begin{proof}
We observe that $D(7) = i$, $D(8) = P(6) = a$, $D(9) = e$. Thus, from equations (7), (9) and the combination of equations (8) and (9) we have:
$$i = l \oplus h$$
$$e = F \oplus d$$
$$a = F \oplus Z \oplus h$$
If we let $x = abcdijkl$, then in the next round $x$ will become:
$$x = abcijk*(l \oplus h)*(F \oplus Z \oplus h)$$
and if we let $y = 1 \oplus h \oplus g \oplus f \oplus e \oplus gh \oplus fh \oplus eh \oplus fg \oplus eg \oplus ef \oplus fgh \oplus egh \oplus efh \oplus efg \oplus efgh$, then in the next round $y$ will become:
$$y = 1 \oplus g \oplus f \oplus e \oplus fg \oplus eg \oplus ef \oplus efg \oplus (F \oplus d) * (1 \oplus g \oplus f \oplus e \oplus fg \oplus eg \oplus ef \oplus efg)$$
If we set $F=1$, then $x$ and $y$ become:
$$x = abcijkl * (1 \oplus h) \oplus Zabcijk*(l \oplus h)$$
$$y = d * (1 \oplus g \oplus f \oplus e \oplus fg \oplus eg \oplus ef \oplus efg)$$
Clearly, $\mathcal{P} = x*y$. So in the next round $\mathcal{P}$ becomes:\\
$\mathcal{P} = abcdijkl * (1 \oplus h)*(1 \oplus g \oplus f \oplus e \oplus fg \oplus eg \oplus ef \oplus efg) \oplus Zabcdijk*(l \oplus h)*(1 \oplus g \oplus f \oplus e \oplus fg \oplus eg \oplus ef \oplus efg)$\\
Now, we will exploit two facts. As mentioned in Theorem 5.3.2, when $L=0$:
$$Z * (1 \oplus x2 \oplus x4 \oplus x5 \oplus x2*x4 \oplus x2*x5 \oplus x4*x5 \oplus x2*x4*x5) = 0$$
Furthermore, when $L=1$:
$$Z * (1 \oplus x2 \oplus x4 \oplus x5 \oplus x2*x4 \oplus x2*x5 \oplus x4*x5 \oplus x2*x4*x5)*x3 = 0$$
According to the theorem $P(2) = x2 = h$, $P(3) = x3 = k$, $P(4) = x4 = f$ and $P(5) = x5 = g$ which will become $g$, $j$, $e$ and $f$ in the next round, respectively.\\
Thus, if either $L=0$ or $L=1$, we have:
$$Zabcdijk*(l \oplus h)*(1 \oplus g \oplus f \oplus e \oplus fg \oplus eg \oplus ef \oplus efg) = 0$$
Hence, in either case $\mathcal{P}$ is simplified into 
$$\mathcal{P} = abcdijkl * (1 \oplus h \oplus g \oplus f \oplus e \oplus gh \oplus fh \oplus eh \oplus fg \oplus eg \oplus ef \oplus fgh \oplus egh \oplus efh \oplus efg \oplus efgh)$$
and, thus, it is an invariant for both $L=0$ and $L=1$.
\end{proof}
\begin{remark}
The above theorem is also true whenever $P(3) \in \{25,26,27,32,33,34,35\}$
\end{remark}

% body of thesis comes here

\chapter{Conclusion and Further Work}

\label{ch:conclusions}

T-310 contains a block cipher in a stream cipher mode and only 10 bits are used from a total of 1651 rounds to just encrypt a single character. Approximately 3$\%$ of long term KT1 keys are weak against Linear Cryptanalysis (see Section 21.20 in [9]). In our thesis, we have explained how a KT1 key can exhibit one bit correlations or other linear invariances for more than one bit, and demonstrated how these invariances can be exploited. From our observations, we concluded that it is not possible to extend the attack in section 4.3.2 in order to break the whole cipher and we moved on to Generalised Linear Cryptanalysis. Based on our results, we realised that a long term key can exhibit a large number of non linear invariances which hold for one round and thus help us recover a larger proportion of the secret key bits. We have also combined a property of Boolean functions (see Theorem 3.0.1) to generate non linear invariances.\\\\
We believe that in our ongoing research we will be able to find KT1 keys which will be weak against Generalised Linear Cryptanalysis. Based on our findings, we are also confident that in the near future attacks can be developed to expoit long term keys which are weak against Generalised Linear Cryptanalysis, and it is expected that these attacks will have more devastating effects on the security of T-310.

\addcontentsline{toc}{chapter}{Bibliography}
\bibliographystyle{alpha}
\bibliography{bibliography/bibliography}

\begin{thebibliography}{}

\bibitem{1}
Andrew Klapper, Mark Goresky: Arithmetic Correlations and Walsh Transforms

\bibitem{2}
Carlo Harpes, Gerhard G. Kramer, James L. Massey: A Generalization of Linear Cryptnalysis and the Applicability of Matsui's Piling-up Lemma, Eurocrypt 1995

\bibitem{3}
Claude Carlet, Sihem Mesnager: On The Supports Of The Walsh Transforms Of Boolean Functions

\bibitem{4}
Jorg Drobick: T-310 Schlusselunterlagen, a web site about T-310 and siblings of T310, which enumerates several different known long-term keys for T-310 consulted 21 January 2017, \url{http://scz.bplaced.net/t310-schluessel.html}

\bibitem{5}
Jorg Drobick: T-310/50 ARGON, a web page about T-310 cipher machines consulted 19 March 2017, \url{http://scz.bplaced.net/t310.html}

\bibitem{6}
Mitsuru Matsui: Linear Cryptanalysis method for DES cipher, In Advances in Cryptology, Eurocrypt 1993, Lecture Notes in Computer Science No.765, pp. 386-397, Spriger, 1993

\bibitem{7}
Nicolas T. Courtois: Decryption oracle slide attacks on T310, In Cryptologia, vol. 42, Issue 3, 2018, pp. 191-204. \url{http://www.tandfonline.com/doi/full/10.1080/01611194.2017.1362062}

\bibitem{8}
Nicolas T. Courtois: Feistel Schemes and Bi-Linear Cryptanalysis, Long Extended Version of Crypto 2004 paper

\bibitem{9}
Nicolas T. Courtois, Klaus Schmeh, Jorg Drobick, Jacques Patarin, Maria-Bristena Oprisanu, Matteo Scarlata, Om Bhallamudi: Cryptographic Security Analysis of T-310, Monography study on the T-310 block cipher, 132 pages, received 20 May 2017, last revised 26 Jun 2017, \url{https://eprint.iacr.org/2017/440.pdf}

\bibitem{10}
Nicolas T. Courtois, Maria-Bristena Oprisanu: Ciphertext-only attacks and weak long-term keys in T-310, in Cryptologia, vol 42, iss. 4, pp. 316-336, May 2018. \url{http://www.tandfonline.com/doi/full/10.1080/01611194.2017.1362065}

\bibitem{11}
Nicolas T. Courtois, Marios Georgiou, Matteo Scarlata: Slide Attacks and LC-Weak Keys in T-310, will appear in Cryptologia in 2018

\bibitem{12}
Nicolas T. Courtois, Willi Meier: Algebraic Attacks on Stream Ciphers with Linear Feedback, Extended Version of the Eurocrypt 2003 paper, August 24, 2003

\bibitem{13}
Pantelimon Stanica: Graph Eigenvalues And Walsh Spectrum Of Boolean Functions

\bibitem{14}
Pedro Miguel Sosa: Calculating Nonlinearity of Boolean Functions with Walsh-Hadamard Transform

\bibitem{15}
Referat 11: Kryptologische Analyse des Chiffriergerates T-310/50. Central Cipher Organ, Ministry of State Security of the GDR, document referenced as ‘ZCO 402/80’, a.k.a. MfS-Abt-XI-594, 123 pages, Berlin, 1980

\bibitem{16}
Whitney J. Townsend, Mitchell A. Thornton: Walsh Spectrum Computations Using  Cayley Graphs

\end{thebibliography}

\appendix
% appendices come here
\chapter{Description of KT1 Keys}
Below is a description of the KT1 class of long term keys found in [94].\\
$( P , D , \alpha ) \in K T 1 \Leftrightarrow \text { all of the following hold: }$
\begin{spacing}{1.3}
\begin{itemize}[noitemsep,topsep=0pt]
\item $D$ and $P$ are injective
\item $P ( 3 ) = 33 , P ( 7 ) = 5 , P ( 9 ) = 9 , P ( 15 ) = 21 , P ( 18 ) = 25 , P ( 24 ) = 29$
\item $\text { Let } W = \{ 5,9,21,25,29,33 \}$
\item $\forall _ { 1 \geq i \geq 9} \text{ }D ( i ) \notin W$
\item $\alpha \notin W \quad \text { (note: cf. also Fig. } 9.16 \text { page } 30 )$
\item $\text { Let } T = ( \{ 0,1 , \ldots , 12 \} \backslash W ) \cap ( \{ P ( 1 ) , P ( 2 ) , \ldots , P ( 24 ) \} \cup \{ D ( 4 ) , D ( 5 ) , \ldots , D ( 9 ) \} \cup \{ \alpha \})$
\item $\text { Let } U = ( \{ 13 , \dots , 36 \} \backslash W ) \cap ( \{ P ( 26 ) , P ( 27 ) \} \cup \{ D ( 1 ) , D ( 2 ) , D ( 3 ) \} )$
\item $| T \backslash \{ P ( 25 ) \} | \text{ }+\text{ } | U \backslash \{ P ( 25 ) \} | \leq 12$
\item $D ( 1 ) = 0$
\item $\text { There exist } \left\{ j _ { 1 } , j _ { 2 } , \dots , j _ { 7 } , j _ { 8 } \right\} \text { a permutation of } \{ 2,3 , \dots , 9 \} \text { which defines }$\\
$D ( i ) \text { for every } i \in \{ 2,3 , \dots , 9 \} \text { as follows: }$\\
$D \left( j _ { 1 } \right) = 4 , D \left( j _ { 2 } \right) = 4 j _ { 1 } , D \left( j _ { 3 } \right) = 4 j _ { 2 } , \dots , D \left( j _ { 8 } \right) = 4 j _ { 7 }$
\item $P ( 20 ) = 4 j _ { 8 } \text { (note: this value is not any of the } D ( i ) )$
\item $( D ( 5 ) , D ( 6 ) ) \in \{ 8,12,16 \} \times \{ 20,28,32 \} \cup \{ 24,28,32 \} \times \{ 8,12,16 \}$
\item $P ( 6 ) = D ( 8 ) , P ( 13 ) = D ( 7 )$
\item $P ( 27 ) \neq 0 \bmod 4$
\item $\forall _ { 1 \geq l \geq 9 } \exists _ { 1 \geq i \geq 26 } P ( i ) = 4 \cdot l$
\item $D ( 3 ) \in \{ P ( 1 ) , P ( 2 ) , P ( 4 ) , P ( 5 ) \}$
\item $D ( 4 ) \quad \notin \quad \{ P ( 14 ) , P ( 16 ) , P ( 17 ) , P ( 19 ) \}$
\item $\{ P ( 8 ) , P ( 10 ) , P ( 11 ) , P ( 12 ) \} \cap \{ D ( 4 ) , D ( 5 ) , D ( 6 ) \} = \emptyset$
\end{itemize}
\end{spacing}

\chapter{Description of KT2 Keys}
Below is a description of the KT1 class of long term keys found in [94].\\
$( P , D , \alpha ) \in K T 2 \Leftrightarrow \text { all of the following hold: }$
\begin{spacing}{1.5}
\begin{itemize}[noitemsep,topsep=0pt]
\item $D \text { and } P \text { are injective }$
\item $P ( 3 ) = 33 , P ( 7 ) = 5 , P ( 9 ) = 9 , P ( 15 ) = 21 , P ( 18 ) = 25 , P ( 24 ) = 29$
\item $\text { Let } W = \{ 5,9,21,25,29,33 \}$
\item $\forall _ { 1 \geq i \geq 9 }\text{ } D ( i ) \notin W$
\item $\alpha \notin W$
\item $\text { Let } T = ( \{ 0,1 , \ldots , 12 \} \backslash W ) \cap ( \{ P ( 1 ) , P ( 2 ) , \ldots , P ( 24 ) \} \cup \{ D ( 4 ) , D ( 5 ) , \ldots , D ( 9 ) \} \cup \{ \alpha \})$
\item $\text { Let } U = ( \{ 13 , \ldots , 36 \} \backslash W ) \cap ( \{ P ( 26 ) , P ( 27 ) \} \cup \{ D ( 1 ) , D ( 2 ) , D ( 3 ) \} )$
\item $| T \backslash \{ P ( 25 ) \} | + | U \backslash \{ P ( 25 ) \} | \text{ }\leq 12$
\item $A = \{ D ( 1 ) , D ( 2 ) , D ( 3 ) , D ( 4 ) , D ( 5 ) , D ( 6 ) , D ( 7 ) , D ( 8 ) , D ( 9 ) \} \cup \{ P ( 6 ) , P ( 13 ) , P ( 20 ) , P ( 27 ) \}$
\item $A _ { 1 } = \{ D ( 1 ) , D ( 2 ) \} \cup \{ P ( 27 ) \}$
\item $A _ { 2 } = \{ D ( 3 ) , D ( 4 ) \} \cup \{ P ( 20 ) \}$
\item $A _ { 3 } = \{ D ( 5 ) , D ( 6 ) \} \cup \{ P ( 13 ) \}$
\item $A _ { 4 } = \{ D ( 7 ) , D ( 8 ) \} \cup \{ P ( 6 ) \}$
\item $\forall ( i , j ) \in \{ 1 , \ldots , 27 \} \times \{ 1 , \dots , 9 \} : P _ { i } \neq D _ { j }$
\item $\exists j _ { 1 } \in \{ 1 , \dots , 7 \} : D _ { j _ { 1 } } = 0$
\item $\{ D ( 8 ) , D ( 9 ) \} \subset \{ 4,8 , \dots , 36 \} \subset A$
\item $\forall ( i , j ) \in \overline { 1,27 } \times \overline { 1,9 } : P _ { i } \neq D _ { j }$
\item $\exists j _ { 1 } \in \overline { 1,7 } : D _ { j _ { 1 } } = 0$
\item $\left\{ D _ { 8 } , D _ { 9 } \right\} \subset \{ 4,8 , \dots , 36 \} \subset A$
\item $\exists \left( j _ { 2 } , j _ { 3 } \right) \in ( \{ j \in \overline { 1,4 } | D _ { j _ { 1 } } \notin A _ { j } \} ) ^ { 2 } \wedge$
\item $\exists \left( j _ { 4 } , j _ { 5 } \right) \in \left( \overline { 1,4 } \backslash \left\{ j _ { 1 } , 2 j _ { 2 } - 1,2 j _ { 2 } \right\} \right) \times \left( \overline { 5,8 } \backslash \left\{ j _ { 1 } , 2 j _ { 2 } - 1,2 j _ { 2 } \right\} \right) \wedge$
\item $\exists j _ { 6 } \in \overline { 1,9 } \backslash \left\{ j _ { 1 } , 2 j _ { 2 } - 1,2 j _ { 2 } , j _ { 4 } , j _ { 5 } \right\}$
\item $j _ { 2 } \neq j _ { 3 } \wedge \left\{ 4 j _ { 4 } , 4 j _ { 5 } \right\} \subset A _ { j _ { 2 } } \wedge$
\item $A _ { j _ { 2 } } \cap \left( \overline { 4 j _ { 1 } - 3,4 j _ { 1 } } \cup \overline { 4 j _ { 6 } - 3,4 j _ { 6 } } \right) \neq \emptyset \wedge$
\item $\left\{ 8 j _ { 2 } - 5,8 j _ { 2 } \right\} \subset A _ { j _ { 3 } } \wedge A _ { j _ { 3 } } \cap \left( \overline { 4 j _ { 1 } - 3,4 j _ { 1 } } \cup \overline { 4 j _ { 6 } - 3,4 j _ { 6 } } \right) \neq \emptyset$
\item $\{ D ( 9 ) \} \backslash ( \overline { 33,36 } \cup \{ 0 \} ) \neq \emptyset$
\item $\{ D ( 8 ) , D ( 9 ) , P ( 1 ) , P ( 2 ) , \dots , P ( 5 ) \} \backslash ( \overline { 29,32 } \cup \{ 0 \} ) \neq \emptyset$
\item $\{ D ( 7 ) , D ( 8 ) , P ( 1 ) , P ( 2 ) , \dots , P ( 6 ) \} \backslash ( \overline { 25,32 } \cup \{ 0 \} ) \neq \emptyset$
\item $\{ D ( 7 ) , D ( 9 ) , P ( 1 ) , P ( 2 ) , \ldots , P ( 6 ) \} \backslash ( \overline { 25,28 } \cup \overline { 33,36 } \cup \{ 0 \} ) \neq \emptyset$
\item $\{ D ( 6 ) , D ( 7 ) , D ( 8 ) , D ( 9 ) , P ( 1 ) , P ( 2 ) , \dots , P ( 12 ) \} \backslash ( \overline { 21,36 } \cup \{ 0 \} ) \neq \emptyset$
\item $\{ D ( 5 ) , D ( 7 ) , D ( 8 ) , D ( 9 ) , P ( 1 ) , P ( 2 ) , \dots , P ( 13 ) \} \backslash ( \overline { 17,20 } \cup \overline { 25,36 } \cup \{ 0 \} ) \neq \emptyset$
\item $\{ D ( 7 ) , D ( 8 ) , D ( 9 ) , P ( 1 ) , P ( 2 ) , \dots , P ( 6 ) \} \backslash ( \overline { 25,36 } \cup \{ 0 \} ) \neq \emptyset$
\item $\{ D ( 5 ) , D ( 6 ) , D ( 8 ) , D ( 9 ) , P ( 1 ) , P ( 2 ) , \dots , P ( 13 ) \} \backslash ( \overline { 17,24 } \cup \overline { 29,36 } \cup \{ 0 \} ) \neq \emptyset$
\item $\{ D ( 5 ) , D ( 6 ) , D ( 7 ) , D ( 9 ) , P ( 1 ) , P ( 2 ) , \dots , P ( 13 ) \} \backslash ( \overline { 17,28 } \cup \overline { 33,36 } \cup \{ 0 \} ) \neq \emptyset$
\item $\{ D ( 5 ) , D ( 6 ) , D ( 7 ) , D ( 8 ) , P ( 1 ) , P ( 2 ) , \dots , P ( 13 ) \} \backslash ( \overline { 17,32 } \cup \{ 0 \} ) \neq \emptyset$
\item $\{ D ( 5 ) , D ( 6 ) , D ( 7 ) , D ( 8 ) , D ( 9 ) , P ( 1 ) , P ( 2 ) , \dots , P ( 13 ) \} \backslash ( \overline { 17,36 } \cup \{ 0 \} ) \neq \emptyset$
\item $\{ D ( 4 ) , D ( 5 ) , \ldots , D ( 9 ) , P ( 1 ) , P ( 2 ) , \ldots , P ( 19 ) \} \backslash ( \overline { 13,36 } \cup \{ 0 \} ) \neq \emptyset$
\item $\{ D ( 3 ) , D ( 4 ) , \ldots , D ( 9 ) , P ( 1 ) , P ( 2 ) , \ldots , P ( 20 ) \} \backslash ( \overline { 9,36 } \cup \{ 0 \} ) \neq \emptyset$
\item the "Matrix rank $= 9$ condition" which is defined as:\\
$\text { The concrete values } D ( i ) / P ( j ) \text { inside the formulas } \mathbf { D } \left( s 1 , u _ { I ^ { 1 } } \right) \oplus \mathbf { T } \left( f , s 2 , \mathbf { P } \left( u _ { I ^ { 1 - 4 } } \right) \right)$\\
$\text { which define the } 9 \text { "fresh" outputs } I ^ { 4 } = \{ 1,5 , \dots , 33 \} \text { of } \phi \text { appear at such places }$\\
$\text { that all the } 9 \text { "fresh" outputs } I ^ { 4 } \text { of } \phi \text { are sums of non-linear parts of type } Z ( . ),$\\
$\text { plus affine parts which involve various variables in } u _ { I ^ { 2 - 4 } } , \text { plus an invertible linear }$\\
$\text { transformation } B \text { of rank } 9 \text { with the remaining } 9 \text { inputs of } I ^ { 1 } = \{ 4,8 , \dots , 36 \}.$
\end{itemize}
\end{spacing}

\chapter{One-Bit Correlations for 7 Rounds}
Using a similar method as in Theorem \ref{Preconditions6rAlphtoalph} we produced the pre-conditions needed for one-bit correlations for 7 rounds with Hamming Weight=3,4 in the fourth round for any KT1 key.

\begin{spacing}{0.1}
\textbf{Case 1a: }\bm{$[29]\to[29]$} \textbf{using the boolean functions Z2 and Z1}\\
Pre-conditions:
\begin{multicols}{2}
\begin{itemize}[noitemsep,topsep=0pt]
\item{$D(6)=32$}
\item{$\{23,31\}\in\{P(1),P(2),P(4),P(5)\}$}
\item{$D(8)\in\{4,8,12,16,20\}$}
\item{$D(8)-3\in\{P(7)-P(12)\}$}
\end{itemize}
\end{multicols}
\end{spacing}

\begin{table}[h]
\small
\centering
\label{my-label1}
\begin{tabular}{|c|c|c|}
\hline
$D(8)$ & internal set of 3 bits after 4 rounds & proportion of weak keys \\ \hline
4      & $[2,21,29]$    & $2^{-14.46}$                       \\ \hline
8      & $[6,21,29]$     & $2^{-11.76}$                      \\ \hline
12     & $[10,21,29]$       & $2^{-11.76}$                   \\ \hline
16     & $[14,21,29]$     & $2^{-14.46}$                     \\ \hline
20     & $[18,21,29]$        & $2^{-14.46}$                  \\ \hline
\end{tabular}
\caption{Enumeration of All Cases for $D(8)$}
\end{table}

\begin{spacing}{0.1}
\textbf{Case 1b: }\bm{$[29]\to[29]$} \textbf{using the boolean functions Z2 and Z1}\\
Pre-conditions: 
\begin{multicols}{2}
\begin{itemize}[noitemsep,topsep=0pt]
\item{$D(5)=32$}
\item{$\{19,27\}\in\{P(1),P(2),P(4),P(5)\}$}
\item{$D(8)\in\{4,8,12,16,20\}$}
\item{$D(8)-3\in\{P(7)-P(12)\}$}
\end{itemize}
\end{multicols}
\end{spacing}

\begin{table}[h]
\small
\centering
\label{my-label2}
\begin{tabular}{|c|c|c|}
\hline
$D(8)$ & internal set of 3 bits after 4 rounds & proportion of weak keys \\ \hline
4      & $[2,17,25]$    & $2^{-14.46}$                       \\ \hline
8      & $[6,17,25]$     & $2^{-11.76}$                       \\ \hline
12     & $[10,17,25]$     & $2^{-11.76}$                      \\ \hline
16     & $[14,17,25]$    & $2^{-14.46}$                      \\ \hline
20     & $[17,18,25]$   & $2^{-14.46}$                       \\ \hline
\end{tabular}
\caption{Enumeration of All Cases for $D(8)$}
\end{table}

\pagebreak
\begin{spacing}{0.1}
\textbf{Case 2: }\bm{$[29]\to[29]$} \textbf{using the boolean functions Z3 and Z1}\\
Pre-conditions: 
\begin{multicols}{2}
\begin{itemize}[noitemsep,topsep=0pt]
\item{$D(4)=32$}
\item{$D(5)\in \{P(14),P(16),P(17),P(19)\}$}
\item{$\{15,19\}\in\{P(1),P(2),P(4),P(5)\}$}
\item{$D(8)\in\{4,16,20,24,28\}$}
\item{$D(8)-3\in\{P(14)-P(19)\}$}
\end{itemize}
\end{multicols}
\end{spacing}

\begin{table}[h]
\small
\centering
\label{my-label3}
\begin{tabular}{|c|c|c|}
\hline
$D(8)$ & internal set of 3 bits after 4 rounds & proportion of weak keys \\ \hline
4      & $[2,13,17]$      & $2^{-16.94}$                     \\ \hline
16     & $[13,14,17]$     & $2^{-16.94}$                     \\ \hline
20     & $[13,17,18]$     & $2^{-16.94}$                     \\ \hline
24     & $[13,17,22]$     & $2^{-13.88}$                     \\ \hline
28     & $[13,17,26]$    & $2^{-13.88}$                       \\ \hline
\end{tabular}
\caption{Enumeration of All Cases for $D(8)$}
\end{table}

\begin{spacing}{0.1}
\textbf{Case 3: }\bm{$[29]\to[29]$} \textbf{using the boolean functions Z4 and Z1}\\
Pre-conditions: 
\begin{multicols}{2}
\begin{itemize}[noitemsep,topsep=0pt]
\item{$D(3)=32$}
\item{$D(2)\in \{P(21),P(22),P(23),P(25),P(26)\}$}
\item{$\{7,11\}\in\{P(1),P(2),P(4),P(5)\}$}
\item{$D(8)\in\{4,16,20\}$}
\item{$D(8)-3\in\{P(21),P(22),P(23),P(25),P(26)\}$}
\end{itemize}
\end{multicols}
\end{spacing}

\begin{table}[h]
\small
\centering
\label{my-label4}
\begin{tabular}{|c|c|c|}
\hline
$D(8)$ & internal set of 3 bits after 4 rounds & proportion of weak keys \\ \hline
4      & $[2,5,9]$     & $2^{-16.58}$                        \\ \hline
16     & $[5,9,14]$    & $2^{-16.58}$                     \\ \hline
20     & $[5,9,18]$     & $2^{-16.58}$                       \\ \hline
\end{tabular}
\caption{Enumeration of All Cases for $D(8)$}
\end{table}

\begin{spacing}{0.1}
\textbf{Case 4a: }\bm{$[25]\to[25]$} \textbf{using the boolean functions Z2 and Z1}\\
Pre-conditions: 
\begin{multicols}{2}
\begin{itemize}[noitemsep,topsep=0pt]
\item{$D(6)=28$}
\item{$\{23,31\}\in\{P(1),P(2),P(4),P(5)\}$}
\item{$\{D(7),D(8)\}\in\{4,8,12,16,20\}$}
\item{$\{D(7)-3,D(8)-3\}\in\{P(7)-P(12)\}$}
\end{itemize}
\end{multicols}
\end{spacing}

\begin{table}[h]
\small
\centering
\label{my-label5}
\begin{tabular}{|c|c|c|c|c|}
\cline{1-1} \cline{3-5}
$D(7)$ &                   & $D(8)$ & internal set of 4 bits after 4 rounds & proportion of weak keys \\ \hline
4      & $\leftrightarrow$ & 8      & $[2,6,21,29]$     & $2^{-16.01}$                    \\ \hline
4      & $\leftrightarrow$ & 12     & $[2,10,21,29]$           & $2^{-16.01}$             \\ \hline
4      & $\leftrightarrow$ & 16     & $[2,14,21,29]$        & $2^{-19.01}$                \\ \hline
4      & $\leftrightarrow$ & 20     & $[2,18,21,29]$        & $2^{-19.01}$                 \\ \hline
8      & $\leftrightarrow$ & 12     & $[6,10,21,29]$          & $2^{-13.36}$               \\ \hline
8      & $\leftrightarrow$ & 16     & $[6,14,21,29]$       & $2^{-16.01}$                 \\ \hline
8      & $\leftrightarrow$ & 20     & $[6,18,21,29]$        & $2^{-16.01}$                \\ \hline
12     & $\leftrightarrow$ & 16     & $[10,14,21,29]$      & $2^{-16.01}$                 \\ \hline
12     & $\leftrightarrow$ & 20     & $[10,18,21,29]$        & $2^{-16.01}$               \\ \hline
16     & $\leftrightarrow$ & 20     & $[16,18,21,29]$       & $2^{-19.01}$                 \\ \hline
\end{tabular}
\caption{Enumeration of All Cases for $D(7)$ and $D(8)$}
\end{table}
\pagebreak

\begin{spacing}{0.1}
\textbf{Case 4b: }\bm{$[25]\to[25]$} \textbf{using the boolean functions Z2 and Z1}\\
Pre-conditions: 
\begin{multicols}{2}
\begin{itemize}[noitemsep,topsep=0pt]
\item{$D(5)=28$}
\item{$\{19,27\}\in\{P(1),P(2),P(4),P(5)\}$}
\item{$\{D(7),D(8)\}\in\{4,8,12,16,20\}$}
\item{$\{D(7)-3,D(8)-3\}\in\{P(7)-P(12)\}$}
\end{itemize}
\end{multicols}
\end{spacing}

\begin{table}[h]
\small
\centering
\label{my-label6}
\begin{tabular}{|c|c|c|c|c|}
\cline{1-1} \cline{3-5}
$D(7)$ &                   & $D(8)$ & internal set of 4 bits after 4 rounds & proportion of weak keys \\ \hline
4      & $\leftrightarrow$ & 8      & $[2,6,17,25]$        & $2^{-16.01}$                 \\ \hline
4      & $\leftrightarrow$ & 12     & $[2,10,17,25]$        & $2^{-16.01}$                \\ \hline
4      & $\leftrightarrow$ & 16     & $[2,14,17,25]$        & $2^{-19.01}$                \\ \hline
4      & $\leftrightarrow$ & 20     & $[2,17,18,25]$          & $2^{-19.01}$              \\ \hline
8      & $\leftrightarrow$ & 12     & $[6,10,17,25]$            & $2^{-13.36}$             \\ \hline
8      & $\leftrightarrow$ & 16     & $[6,14,17,25]$         & $2^{-16.01}$               \\ \hline
8      & $\leftrightarrow$ & 20     & $[6,17,18,25]$         & $2^{-16.01}$               \\ \hline
12     & $\leftrightarrow$ & 16     & $[10,14,17,25]$       & $2^{-16.01}$                \\ \hline
12     & $\leftrightarrow$ & 20     & $[10,17,18,25]$       & $2^{-16.01}$                \\ \hline
16     & $\leftrightarrow$ & 20     & $[14,17,18,25]$         & $2^{-19.01}$              \\ \hline
\end{tabular}
\caption{Enumeration of All Cases for $D(7)$ and $D(8)$}
\end{table}

\begin{spacing}{0.1}
\textbf{Case 5: }\bm{$[25]\to[25]$} \textbf{using the boolean functions Z3 and Z1}\\
Pre-conditions: 
\begin{multicols}{2}
\begin{itemize}[noitemsep,topsep=0pt]
\item{$D(4)=28$}
\item{$D(5)\in \{P(14),P(16),P(17),P(19)\}$}
\item{$\{15,19\}\in\{P(1),P(2),P(4),P(5)\}$}
\item{$\{D(7),D(8)\}\in\{4,16,20,24\}$}
\item{$\{D(7)-3,D(8)-3\}\in\{P(14)-P(19)\}$}
\end{itemize}
\end{multicols}
\end{spacing}

\begin{table}[h]
\small
\centering
\label{my-label7}
\begin{tabular}{|c|c|c|c|c|}
\cline{1-1} \cline{3-5}
$D(7)$ &                   & $D(8)$ & internal set of 4 bits after 4 rounds & proportion of weak keys \\ \hline
4      & $\leftrightarrow$ & 16     & $[2,13,14,17]$      & $2^{-22.00}$                  \\ \hline
4      & $\leftrightarrow$ & 20     & $[2,13,17,18]$     & $2^{-22.00}$                   \\ \hline
4      & $\leftrightarrow$ & 24     & $[2,13,17,22]$       & $2^{-18.48}$                 \\ \hline
16     & $\leftrightarrow$ & 20     & $[13,14,17,18]$        & $2^{-22.00}$               \\ \hline
16     & $\leftrightarrow$ & 24     & $[13,14,17,22]$         & $2^{-18.48}$               \\ \hline
20     & $\leftrightarrow$ & 24     & $[13,17,18,22]$         & $2^{-18.48}$               \\ \hline
\end{tabular}
\caption{Enumeration of All Cases for $D(7)$ and $D(8)$}
\end{table}

\begin{spacing}{0.1}
\textbf{Case 6: }\bm{$[25]\to[25]$} \textbf{using the boolean functions Z4 and Z1}\\
Pre-conditions: 
\begin{multicols}{2}
\begin{itemize}[noitemsep,topsep=0pt]
\item{$D(3)=28$}
\item{$D(2)\in \{P(21),P(22),P(23),P(25),P(26)\}$}
\item{$\{7,11\}\in\{P(1),P(2),P(4),P(5)\}$}
\item{$\{D(7),D(8)\}\in\{4,16,20,32\}$}
\item{$\{D(7)-3,D(8)-3\}\in\{P(21)-P(26)\}$}
\end{itemize}
\end{multicols}
\end{spacing}

\begin{table}[h]
\small
\centering
\label{my-label8}
\begin{tabular}{|c|c|c|c|c|}
\cline{1-1} \cline{3-5}
$D(7)$ &                   & $D(8)$ & internal set of 4 bits after 4 rounds & proportion of weak keys \\ \hline
4      & $\leftrightarrow$ & 16     & $[2,5,9,14]$         & $2^{-21.07}$                   \\ \hline
4      & $\leftrightarrow$ & 20     & $[2,5,9,18]$           & $2^{-21.07}$               \\ \hline
32      &  & 4     & $[2,5,9,30]$            & $2^{-19.34}$              \\ \hline
16     & $\leftrightarrow$ & 20     & $[5,9,14,18]$    & $2^{-21.07}$                     \\ \hline
32     &  & 16     & $[5,9,14,30]$      & $2^{-19.34}$                   \\ \hline
32     &  & 20     & $[5,9,18,30]$       & $2^{-19.34}$                  \\ \hline
\end{tabular}
\caption{Enumeration of All Cases for $D(7)$ and $D(8)$}
\end{table}

\pagebreak

\begin{spacing}{0.1}
\textbf{Case 7: }\bm{$[21]\to[21]$} \textbf{using the boolean functions Z2 and Z1/Z2}\\
Pre-conditions: 
\begin{multicols}{2}
\begin{itemize}[noitemsep,topsep=0pt]
\item{$D(5)=24$}
\item{$\{19,27\}\in\{P(1)-P(12)\}$}
\item{$D(6)\in\{8,12,16\}$}
\item{$D(8)\in\{4,8,12,16,20\}$}
\item{$\{D(6)-3,D(8)-3\}\in\{P(7)-P(12)\}$}
\end{itemize}
\end{multicols}
\end{spacing}

\begin{table}[h]
\small
\centering
\label{my-label9}
\begin{tabular}{|c|c|c|c|c|}
\cline{1-1} \cline{3-5}
$D(8)$ &                   & $D(6)$ & internal set of 4 bits after 4 rounds & proportion of weak keys \\ \hline
4      &                   & 8      & $[2,6,17,25]$         & $2^{-15.63}$                 \\ \hline
4      &                   & 12     & $[2,10,17,25]$         & $2^{-15.63}$               \\ \hline
4      &                   & 16     & $[2,14,17,25]$        & $2^{-19.12}$                \\ \hline
8      & $\leftrightarrow$ & 12     & $[6,10,17,25]$            & $2^{-11.55}$            \\ \hline
8      & $\leftrightarrow$ & 16     & $[6,14,17,25]$             & $2^{-14.61}$           \\ \hline
20     &                   & 8      & $[6,17,18,25]$            & $2^{-14.61}$             \\ \hline
12     & $\leftrightarrow$ & 16     & $[10,14,17,25]$      & $2^{-15.63}$                 \\ \hline
20     &                   & 12     & $[10,17,18,25]$         & $2^{-19.12}$              \\ \hline
20     &                   & 16     & $[14,17,18,25]$        & $2^{-15.63}$               \\ \hline
\end{tabular}
\caption{Enumeration of All Cases for $D(6)$ and $D(8)$}
\end{table}

\begin{spacing}{0.1}
\textbf{Case 8: }\bm{$[21]\to[21]$} \textbf{using the boolean functions Z3 and Z1/Z2}\\
Pre-conditions: 
\begin{multicols}{2}
\begin{itemize}[noitemsep,topsep=0pt]
\item{$D(4)=24$}
\item{$D(5)\in \{P(14),P(16),P(17),P(19)\}$}
\item{$\{15,19\}\in\{P(1)-P(12)\}$}
\item{$D(6)\in\{16,20,28\}$}
\item{$D(8)\in\{4,16,20,28\}$}
\item{$\{D(6)-3,D(8)-3\}\in\{P(14)-P(19)\}$}
\end{itemize}
\end{multicols}
\end{spacing}

\begin{table}[h]
\small
\centering
\label{my-label10}
\begin{tabular}{|c|c|c|c|c|}
\cline{1-1} \cline{3-5}
$D(8)$ &                   & $D(6)$ & internal set of 4 bits after 4 rounds & proportion of weak keys \\ \hline
4      &                   & 16     & $[2,13,14,17]$       & $2^{-19.21}$                  \\ \hline
4      &                   & 20     & $[2,13,17,18]$        & $2^{-19.21}$                \\ \hline
4      &                   & 28     & $[2,13,17,26]$         & $2^{-15.63}$               \\ \hline
16     & $\leftrightarrow$ & 20     & $[13,14,17,18]$          & $2^{-18.20}$             \\ \hline
16     & $\leftrightarrow$ & 28     & $[13,14,17,26]$             & $2^{-14.61}$          \\ \hline
20     & $\leftrightarrow$ & 28     & $[13,17,18,26]$           & $2^{-14.61}$             \\ \hline
\end{tabular}
\caption{Enumeration of All Cases for $D(6)$ and $D(8)$}
\end{table}

\begin{spacing}{0.1}
\textbf{Case 9: }\bm{$[21]\to[21]$} \textbf{using the boolean functions Z4 and Z1/Z2}\\
Pre-conditions: 
\begin{multicols}{2}
\begin{itemize}[noitemsep,topsep=0pt]
\item{$D(3)=24$}
\item{$D(2)\in \{P(21),P(22),P(23),P(25),P(26)\}$}
\item{$\{7,11\}\in\{P(1)-P(12)\}$}
\item{$D(6)\in\{16,20,32\}$}
\item{$D(8)\in\{4,16,20,32\}$}
\item{$\{D(6)-3,D(8)-3\}\in\{P(21)-P(26)\}$}
\end{itemize}
\end{multicols}
\end{spacing}

\begin{table}[h]
\small
\centering
\label{my-label11}
\begin{tabular}{|c|c|c|c|c|}
\cline{1-1} \cline{3-5}
$D(8)$ &                   & $D(6)$ & internal set of 4 bits after 4 rounds & proportion of weak keys \\ \hline
4      &                   & 16     & $[2,5,9,14]$        & $2^{-19.59}$                  \\ \hline
4      &                   & 20     & $[2,5,9,18]$           & $2^{-19.59}$               \\ \hline
4      &                   & 32     & $[2,5,9,30]$           & $2^{-16.59}$               \\ \hline
16     & $\leftrightarrow$ & 20     & $[5,9,14,18]$       & $2^{-18.58}$                  \\ \hline
16     & $\leftrightarrow$ & 32     & $[5,9,14,30]$          & $2^{-15.58}$               \\ \hline
20     & $\leftrightarrow$ & 32     & $[5,9,18,30]$          & $2^{-15.58}$               \\ \hline
\end{tabular}
\caption{Enumeration of All Cases for $D(6)$ and $D(8)$}
\end{table}
\pagebreak
\chapter{Invariants on 12 Bits}
When using the Boolean function mentioned in Theorem 5.2.3 with $L = 1$ and the long term key with $D = 1,13,3,2,11,12,32,28,36$ and \\
$P = 35,27,26,34,31,29,25,14,7,22,15,33,8,30,6,10,23,4,24,18,9,20,17,19,16,5,21$\\
we observed that the invariant space has dimension 51. We list some of the invariants below.
\begin{spacing}{0.1}
\begin{footnotesize}
\begin{enumerate}
\item $d+c+b+a$
\item $bd+ac$
\item $cd+bc+ad+ab$
\item $bcd+acd+abd+abc$
\item $abcd$
\item $j+i+h+gl+gj+fk+fi+eh+dj+di+dh+cj+ci+ch+bj+bi+bh+aj+ai+ah+dgl+dgj+dfk+dfi+deh+cgl+cgj+cfk+cfi+ceh+cdj+cdi+cdh+bgl+bgj+bfk+bfi+beh+bdj+bdi+bdh+bcj+bci+bch+agl+agj+afk+afi+aeh+adj+adi+adh+acj+aci+ach+abj+abi+abh+cdgl+cdgj+cdfk+cdfi+cdeh+bdgl+bdgj+bdfk+bdfi+bdeh+bcgl+bcgj+bcfk+bcfi+bceh+bcdj+bcdi+bcdh+adgl+adgj+adfk+adfi+adeh+acgl+acgj+acfk+acfi+aceh+acdj+acdi+acdh+abgl+abgj+abfk+abfi+abeh+abdj+abdi+abdh+abcj+abci+abch+bcdgl+bcdgj+bcdfk+bcdfi+bcdeh+acdgl+acdgj+acdfk+acdfi+acdeh+abdgl+abdgj+abdfk+abdfi+abdeh+abcgl+abcgj+abcfk+abcfi+abceh+abcdj+abcdi+abcdh+abcdgl+abcdgj+abcdfk+abcdfi+abcdeh$
\item $jl+ik+ij+hj+hi+gi+gh+fl+fh+ek+eg+djl+dik+dij+dhj+dhi+dgi+dgh+dfl+dfh+dek+deg+cjl+cik+cij+chj+chi+cgi+cgh+cfl+cfh+cek+ceg+bjl+bik+bij+bhj+bhi+bgi+bgh+bfl+bfh+bek+beg+ajl+aik+aij+ahj+ahi+agi+agh+afl+afh+aek+aeg+cdjl+cdik+cdij+cdhj+cdhi+cdgi+cdgh+cdfl+cdfh+cdek+cdeg+bdjl+bdik+bdij+bdhj+bdhi+bdgi+bdgh+bdfl+bdfh+bdek+bdeg+bcjl+bcik+bcij+bchj+bchi+bcgi+bcgh+bcfl+bcfh+bcek+bceg+adjl+adik+adij+adhj+adhi+adgi+adgh+adfl+adfh+adek+adeg+acjl+acik+acij+achj+achi+acgi+acgh+acfl+acfh+acek+aceg+abjl+abik+abij+abhj+abhi+abgi+abgh+abfl+abfh+abek+abeg+bcdjl+bcdik+bcdij+bcdhj+bcdhi+bcdgi+bcdgh+bcdfl+bcdfh+bcdek+bcdeg+acdjl+acdik+acdij+acdhj+acdhi+acdgi+acdgh+acdfl+acdfh+acdek+acdeg+abdjl+abdik+abdij+abdhj+abdhi+abdgi+abdgh+abdfl+abdfh+abdek+abdeg+abcjl+abcik+abcij+abchj+abchi+abcgi+abcgh+abcfl+abcfh+abcek+abceg+abcdjl+abcdik+abcdij+abcdhj+abcdhi+abcdgi+abcdgh+abcdfl+abcdfh+abcdek+abcdeg$
\item $l+k+j+i+h+g+f+e+kl+jl+jk+il+ik+ij+hl+hk+hj+hi+gl+gk+gj+gi+gh+fl+fk+fj+fi+fh+fg+el+ek+ej+ei+eh+eg+ef+jkl+ikl+ijl+ijk+hkl+hjl+hjk+hil+hik+hij+gkl+gjl+gjk+gil+gik+gij+ghl+ghk+ghj+ghi+fkl+fjl+fjk+fil+fik+fij+fhl+fhk+fhj+fhi+fgl+fgk+fgj+fgi+fgh+ekl+ejl+ejk+eil+eik+eij+ehl+ehk+ehj+ehi+egl+egk+egj+egi+egh+efl+efk+efj+efi+efh+efg+ijkl+hjkl+hikl+hijl+hijk+gjkl+gikl+gijl+gijk+ghkl+ghjl+ghjk+ghil+ghik+ghij+fjkl+fikl+fijl+fijk+fhkl+fhjl+fhjk+fhil+fhik+fhij+fgkl+fgjl+fgjk+fgil+fgik+fgij+fghl+fghk+fghj+fghi+ejkl+eikl+eijl+eijk+ehkl+ehjl+ehjk+ehil+ehik+ehij+egkl+egjl+egjk+egil+egik+egij+eghl+eghk+eghj+eghi+efkl+efjl+efjk+efil+efik+efij+efhl+efhk+efhj+efhi+efgl+efgk+efgj+efgi+efgh+hijkl+gijkl+ghjkl+ghikl+ghijl+ghijk+fijkl+fhjkl+fhikl+fhijl+fhijk+fgjkl+fgikl+fgijl+fgijk+fghkl+fghjl+fghjk+fghil+fghik+fghij+eijkl+ehjkl+ehikl+ehijl+ehijk+egjkl+egikl+egijl+egijk+eghkl+eghjl+eghjk+eghil+eghik+eghij+efjkl+efikl+efijl+efijk+efhkl+efhjl+efhjk+efhil+efhik+efhij+efgkl+efgjl+efgjk+efgil+efgik+efgij+efghl+efghk+efghj+efghi+ghijkl+fhijkl+fgijkl+fghjkl+fghikl+fghijl+fghijk+ehijkl+egijkl+eghjkl+eghikl+eghijl+eghijk+efijkl+efhjkl+efhikl+efhijl+efhijk+efgjkl+efgikl+efgijl+efgijk+efghkl+efghjl+efghjk+efghil+efghik+efghij+fghijkl+eghijkl+efhijkl+efgijkl+efghjkl+efghikl+efghijl+efghijk+efghijkl$
\item $hij+ghi+fil+fhl+fhj+fgh+ejk+ehk+egk+egj+egi+ijkl+hijl+hijk+gijl+ghjl+ghik+fikl+fijl+fhil+fhik+fhij+fgil+fghl+fghi+ejkl+eijk+ehjk+ehik+egjl+egij+eghk+eghj+eghi+efkl+efhk+efgl+efgh+dhij+dghi+dfil+dfhl+dfhj+dfgh+dejk+dehk+degk+degj+degi+chij+cghi+cfil+cfhl+cfhj+cfgh+cejk+cehk+cegk+cegj+cegi+bhij+bghi+bfil+bfhl+bfhj+bfgh+bejk+behk+begk+begj+begi+ahij+aghi+afil+afhl+afhj+afgh+aejk+aehk+aegk+aegj+aegi+dijkl+dhijl+dhijk+dgijl+dghjl+dghik+dfikl+dfijl+dfhil+dfhik+dfhij+dfgil+dfghl+dfghi+dejkl+deijk+dehjk+dehik+degjl+degij+deghk+deghj+deghi+defkl+defhk+defgl+defgh+cijkl+chijl+chijk+cgijl+cghjl+cghik+cfikl+cfijl+cfhil+cfhik+cfhij+cfgil+cfghl+cfghi+cejkl+ceijk+cehjk+cehik+cegjl+cegij+ceghk+ceghj+ceghi+cefkl+cefhk+cefgl+cefgh+cdhij+cdghi+cdfil+cdfhl+cdfhj+cdfgh+cdejk+cdehk+cdegk+cdegj+cdegi+bijkl+bhijl+bhijk+bgijl+bghjl+bghik+bfikl+bfijl+bfhil+bfhik+bfhij+bfgil+bfghl+bfghi+bejkl+beijk+behjk+behik+begjl+begij+beghk+beghj+beghi+befkl+befhk+befgl+befgh+bdhij+bdghi+bdfil+bdfhl+bdfhj+bdfgh+bdejk+bdehk+bdegk+bdegj+bdegi+bchij+bcghi+bcfil+bcfhl+bcfhj+bcfgh+bcejk+bcehk+bcegk+bcegj+bcegi+aijkl+ahijl+ahijk+agijl+aghjl+aghik+afikl+afijl+afhil+afhik+afhij+afgil+afghl+afghi+aejkl+aeijk+aehjk+aehik+aegjl+aegij+aeghk+aeghj+aeghi+aefkl+aefhk+aefgl+aefgh+adhij+adghi+adfil+adfhl+adfhj+adfgh+adejk+adehk+adegk+adegj+adegi+achij+acghi+acfil+acfhl+acfhj+acfgh+acejk+acehk+acegk+acegj+acegi+abhij+abghi+abfil+abfhl+abfhj+abfgh+abejk+abehk+abegk+abegj+abegi+cdijkl+cdhijl+cdhijk+cdgijl+cdghjl+cdghik+cdfikl+cdfijl+cdfhil+cdfhik+cdfhij+cdfgil+cdfghl+cdfghi+cdejkl+cdeijk+cdehjk+cdehik+cdegjl+cdegij+cdeghk+cdeghj+cdeghi+cdefkl+cdefhk+cdefgl+cdefgh+bdijkl+bdhijl+bdhijk+bdgijl+bdghjl+bdghik+bdfikl+bdfijl+bdfhil+bdfhik+bdfhij+bdfgil+bdfghl+bdfghi+bdejkl+bdeijk+bdehjk+bdehik+bdegjl+bdegij+bdeghk+bdeghj+bdeghi+bdefkl+bdefhk+bdefgl+bdefgh+bcijkl+bchijl+bchijk+bcgijl+bcghjl+bcghik+bcfikl+bcfijl+bcfhil+bcfhik+bcfhij+bcfgil+bcfghl+bcfghi+bcejkl+bceijk+bcehjk+bcehik+bcegjl+bcegij+bceghk+bceghj+bceghi+bcefkl+bcefhk+bcefgl+bcefgh+bcdhij+bcdghi+bcdfil+bcdfhl+bcdfhj+bcdfgh+bcdejk+bcdehk+bcdegk+bcdegj+bcdegi+adijkl+adhijl+adhijk+adgijl+adghjl+adghik+adfikl+adfijl+adfhil+adfhik+adfhij+adfgil+adfghl+adfghi+adejkl+adeijk+adehjk+adehik+adegjl+adegij+adeghk+adeghj+adeghi+adefkl+adefhk+adefgl+adefgh+acijkl+achijl+achijk+acgijl+acghjl+acghik+acfikl+acfijl+acfhil+acfhik+acfhij+acfgil+acfghl+acfghi+acejkl+aceijk+acehjk+acehik+acegjl+acegij+aceghk+aceghj+aceghi+acefkl+acefhk+acefgl+acefgh+acdhij+acdghi+acdfil+acdfhl+acdfhj+acdfgh+acdejk+acdehk+acdegk+acdegj+acdegi+abijkl+abhijl+abhijk+abgijl+abghjl+abghik+abfikl+abfijl+abfhil+abfhik+abfhij+abfgil+abfghl+abfghi+abejkl+abeijk+abehjk+abehik+abegjl+abegij+abeghk+abeghj+abeghi+abefkl+abefhk+abefgl+abefgh+abdhij+abdghi+abdfil+abdfhl+abdfhj+abdfgh+abdejk+abdehk+abdegk+abdegj+abdegi+abchij+abcghi+abcfil+abcfhl+abcfhj+abcfgh+abcejk+abcehk+abcegk+abcegj+abcegi+bcdijkl+bcdhijl+bcdhijk+bcdgijl+bcdghjl+bcdghik+bcdfikl+bcdfijl+bcdfhil+bcdfhik+bcdfhij+bcdfgil+bcdfghl+bcdfghi+bcdejkl+bcdeijk+bcdehjk+bcdehik+bcdegjl+bcdegij+bcdeghk+bcdeghj+bcdeghi+bcdefkl+bcdefhk+bcdefgl+bcdefgh+acdijkl+acdhijl+acdhijk+acdgijl+acdghjl+acdghik+acdfikl+acdfijl+acdfhil+acdfhik+acdfhij+acdfgil+acdfghl+acdfghi+acdejkl+acdeijk+acdehjk+acdehik+acdegjl+acdegij+acdeghk+acdeghj+acdeghi+acdefkl+acdefhk+acdefgl+acdefgh+abdijkl+abdhijl+abdhijk+abdgijl+abdghjl+abdghik+abdfikl+abdfijl+abdfhil+abdfhik+abdfhij+abdfgil+abdfghl+abdfghi+abdejkl+abdeijk+abdehjk+abdehik+abdegjl+abdegij+abdeghk+abdeghj+abdeghi+abdefkl+abdefhk+abdefgl+abdefgh+abcijkl+abchijl+abchijk+abcgijl+abcghjl+abcghik+abcfikl+abcfijl+abcfhil+abcfhik+abcfhij+abcfgil+abcfghl+abcfghi+abcejkl+abceijk+abcehjk+abcehik+abcegjl+abcegij+abceghk+abceghj+abceghi+abcefkl+abcefhk+abcefgl+abcefgh+abcdhij+abcdghi+abcdfil+abcdfhl+abcdfhj+abcdfgh+abcdejk+abcdehk+abcdegk+abcdegj+abcdegi+abcdijkl+abcdhijl+abcdhijk+abcdgijl+abcdghjl+abcdghik+abcdfikl+abcdfijl+abcdfhil+abcdfhik+abcdfhij+abcdfgil+abcdfghl+abcdfghi+abcdejkl+abcdeijk+abcdehjk+abcdehik+abcdegjl+abcdegij+abcdeghk+abcdeghj+abcdeghi+abcdefkl+abcdefhk+abcdefgl+abcdefgh$
\item $jl+ik+dl+dk+dj+di+dh+dg+df+de+cl+ck+cj+ci+ch+cg+cf+ce+bl+bk+bj+bi+bh+bg+bf+be+al+ak+aj+ai+ah+ag+af+ae+jkl+ikl+ijl+ijk+hjl+hik+gjl+gik+fjl+fik+ejl+eik+dkl+djk+dil+dik+dij+dhl+dhk+dhj+dhi+dgl+dgk+dgj+dgi+dgh+dfl+dfk+dfj+dfi+dfh+dfg+del+dek+dej+dei+deh+deg+def+ckl+cjl+cjk+cil+cij+chl+chk+chj+chi+cgl+cgk+cgj+cgi+cgh+cfl+cfk+cfj+cfi+cfh+cfg+cel+cek+cej+cei+ceh+ceg+cef+bkl+bjk+bil+bik+bij+bhl+bhk+bhj+bhi+bgl+bgk+bgj+bgi+bgh+bfl+bfk+bfj+bfi+bfh+bfg+bel+bek+bej+bei+beh+beg+bef+akl+ajl+ajk+ail+aij+ahl+ahk+ahj+ahi+agl+agk+agj+agi+agh+afl+afk+afj+afi+afh+afg+ael+aek+aej+aei+aeh+aeg+aef+hjkl+hikl+hijl+hijk+gjkl+gikl+gijl+gijk+ghjl+ghik+fjkl+fikl+fijl+fijk+fhjl+fhik+fgjl+fgik+ejkl+eikl+eijl+eijk+ehjl+ehik+egjl+egik+efjl+efik+dikl+dijk+dhkl+dhjk+dhil+dhik+dhij+dgkl+dgjk+dgil+dgik+dgij+dghl+dghk+dghj+dghi+dfkl+dfjk+dfil+dfik+dfij+dfhl+dfhk+dfhj+dfhi+dfgl+dfgk+dfgj+dfgi+dfgh+dekl+dejk+deil+deik+deij+dehl+dehk+dehj+dehi+degl+degk+degj+degi+degh+defl+defk+defj+defi+defh+defg+cjkl+cijl+chkl+chjl+chjk+chil+chij+cgkl+cgjl+cgjk+cgil+cgij+cghl+cghk+cghj+cghi+cfkl+cfjl+cfjk+cfil+cfij+cfhl+cfhk+cfhj+cfhi+cfgl+cfgk+cfgj+cfgi+cfgh+cekl+cejl+cejk+ceil+ceij+cehl+cehk+cehj+cehi+cegl+cegk+cegj+cegi+cegh+cefl+cefk+cefj+cefi+cefh+cefg+bikl+bijk+bhkl+bhjk+bhil+bhik+bhij+bgkl+bgjk+bgil+bgik+bgij+bghl+bghk+bghj+bghi+bfkl+bfjk+bfil+bfik+bfij+bfhl+bfhk+bfhj+bfhi+bfgl+bfgk+bfgj+bfgi+bfgh+bekl+bejk+beil+beik+beij+behl+behk+behj+behi+begl+begk+begj+begi+begh+befl+befk+befj+befi+befh+befg+bdjl+ajkl+aijl+ahkl+ahjl+ahjk+ahil+ahij+agkl+agjl+agjk+agil+agij+aghl+aghk+aghj+aghi+afkl+afjl+afjk+afil+afij+afhl+afhk+afhj+afhi+afgl+afgk+afgj+afgi+afgh+aekl+aejl+aejk+aeil+aeij+aehl+aehk+aehj+aehi+aegl+aegk+aegj+aegi+aegh+aefl+aefk+aefj+aefi+aefh+aefg+acik+ghjkl+ghikl+ghijl+ghijk+fhjkl+fhikl+fhijl+fhijk+fgjkl+fgikl+fgijl+fgijk+fghjl+fghik+ehjkl+ehikl+ehijl+ehijk+egjkl+egikl+egijl+egijk+eghjl+eghik+efjkl+efikl+efijl+efijk+efhjl+efhik+efgjl+efgik+dhikl+dhijk+dgikl+dgijk+dghkl+dghjk+dghil+dghik+dghij+dfikl+dfijk+dfhkl+dfhjk+dfhil+dfhik+dfhij+dfgkl+dfgjk+dfgil+dfgik+dfgij+dfghl+dfghk+dfghj+dfghi+deikl+deijk+dehkl+dehjk+dehil+dehik+dehij+degkl+degjk+degil+degik+degij+deghl+deghk+deghj+deghi+defkl+defjk+defil+defik+defij+defhl+defhk+defhj+defhi+defgl+defgk+defgj+defgi+defgh+chjkl+chijl+cgjkl+cgijl+cghkl+cghjl+cghjk+cghil+cghij+cfjkl+cfijl+cfhkl+cfhjl+cfhjk+cfhil+cfhij+cfgkl+cfgjl+cfgjk+cfgil+cfgij+cfghl+cfghk+cfghj+cfghi+cejkl+ceijl+cehkl+cehjl+cehjk+cehil+cehij+cegkl+cegjl+cegjk+cegil+cegij+ceghl+ceghk+ceghj+ceghi+cefkl+cefjl+cefjk+cefil+cefij+cefhl+cefhk+cefhj+cefhi+cefgl+cefgk+cefgj+cefgi+cefgh+bhikl+bhijk+bgikl+bgijk+bghkl+bghjk+bghil+bghik+bghij+bfikl+bfijk+bfhkl+bfhjk+bfhil+bfhik+bfhij+bfgkl+bfgjk+bfgil+bfgik+bfgij+bfghl+bfghk+bfghj+bfghi+beikl+beijk+behkl+behjk+behil+behik+behij+begkl+begjk+begil+begik+begij+beghl+beghk+beghj+beghi+befkl+befjk+befil+befik+befij+befhl+befhk+befhj+befhi+befgl+befgk+befgj+befgi+befgh+bdjkl+bdijl+bdhjl+bdgjl+bdfjl+bdejl+ahjkl+ahijl+agjkl+agijl+aghkl+aghjl+aghjk+aghil+aghij+afjkl+afijl+afhkl+afhjl+afhjk+afhil+afhij+afgkl+afgjl+afgjk+afgil+afgij+afghl+afghk+afghj+afghi+aejkl+aeijl+aehkl+aehjl+aehjk+aehil+aehij+aegkl+aegjl+aegjk+aegil+aegij+aeghl+aeghk+aeghj+aeghi+aefkl+aefjl+aefjk+aefil+aefij+aefhl+aefhk+aefhj+aefhi+aefgl+aefgk+aefgj+aefgi+aefgh+acikl+acijk+achik+acgik+acfik+aceik+fghjkl+fghikl+fghijl+fghijk+eghjkl+eghikl+eghijl+eghijk+efhjkl+efhikl+efhijl+efhijk+efgjkl+efgikl+efgijl+efgijk+efghjl+efghik+dghikl+dghijk+dfhikl+dfhijk+dfgikl+dfgijk+dfghkl+dfghjk+dfghil+dfghik+dfghij+dehikl+dehijk+degikl+degijk+deghkl+deghjk+deghil+deghik+deghij+defikl+defijk+defhkl+defhjk+defhil+defhik+defhij+defgkl+defgjk+defgil+defgik+defgij+defghl+defghk+defghj+defghi+cghjkl+cghijl+cfhjkl+cfhijl+cfgjkl+cfgijl+cfghkl+cfghjl+cfghjk+cfghil+cfghij+cehjkl+cehijl+cegjkl+cegijl+ceghkl+ceghjl+ceghjk+ceghil+ceghij+cefjkl+cefijl+cefhkl+cefhjl+cefhjk+cefhil+cefhij+cefgkl+cefgjl+cefgjk+cefgil+cefgij+cefghl+cefghk+cefghj+cefghi+bghikl+bghijk+bfhikl+bfhijk+bfgikl+bfgijk+bfghkl+bfghjk+bfghil+bfghik+bfghij+behikl+behijk+begikl+begijk+beghkl+beghjk+beghil+beghik+beghij+befikl+befijk+befhkl+befhjk+befhil+befhik+befhij+befgkl+befgjk+befgil+befgik+befgij+befghl+befghk+befghj+befghi+bdijkl+bdhjkl+bdhijl+bdgjkl+bdgijl+bdghjl+bdfjkl+bdfijl+bdfhjl+bdfgjl+bdejkl+bdeijl+bdehjl+bdegjl+bdefjl+aghjkl+aghijl+afhjkl+afhijl+afgjkl+afgijl+afghkl+afghjl+afghjk+afghil+afghij+aehjkl+aehijl+aegjkl+aegijl+aeghkl+aeghjl+aeghjk+aeghil+aeghij+aefjkl+aefijl+aefhkl+aefhjl+aefhjk+aefhil+aefhij+aefgkl+aefgjl+aefgjk+aefgil+aefgij+aefghl+aefghk+aefghj+aefghi+acijkl+achikl+achijk+acgikl+acgijk+acghik+acfikl+acfijk+acfhik+acfgik+aceikl+aceijk+acehik+acegik+acefik+efghjkl+efghikl+efghijl+efghijk+dfghikl+dfghijk+deghikl+deghijk+defhikl+defhijk+defgikl+defgijk+defghkl+defghjk+defghil+defghik+defghij+cfghjkl+cfghijl+ceghjkl+ceghijl+cefhjkl+cefhijl+cefgjkl+cefgijl+cefghkl+cefghjl+cefghjk+cefghil+cefghij+bfghikl+bfghijk+beghikl+beghijk+befhikl+befhijk+befgikl+befgijk+befghkl+befghjk+befghil+befghik+befghij+bdhijkl+bdgijkl+bdghjkl+bdghijl+bdfijkl+bdfhjkl+bdfhijl+bdfgjkl+bdfgijl+bdfghjl+bdeijkl+bdehjkl+bdehijl+bdegjkl+bdegijl+bdeghjl+bdefjkl+bdefijl+bdefhjl+bdefgjl+afghjkl+afghijl+aeghjkl+aeghijl+aefhjkl+aefhijl+aefgjkl+aefgijl+aefghkl+aefghjl+aefghjk+aefghil+aefghij+achijkl+acgijkl+acghikl+acghijk+acfijkl+acfhikl+acfhijk+acfgikl+acfgijk+acfghik+aceijkl+acehikl+acehijk+acegikl+acegijk+aceghik+acefikl+acefijk+acefhik+acefgik+defghikl+defghijk+cefghjkl+cefghijl+befghikl+befghijk+bdghijkl+bdfhijkl+bdfgijkl+bdfghjkl+bdfghijl+bdehijkl+bdegijkl+bdeghjkl+bdeghijl+bdefijkl+bdefhjkl+bdefhijl+bdefgjkl+bdefgijl+bdefghjl+aefghjkl+aefghijl+acghijkl+acfhijkl+acfgijkl+acfghikl+acfghijk+acehijkl+acegijkl+aceghikl+aceghijk+acefijkl+acefhikl+acefhijk+acefgikl+acefgijk+acefghik+bdfghijkl+bdeghijkl+bdefhijkl+bdefgijkl+bdefghjkl+bdefghijl+acfghijkl+aceghijkl+acefhijkl+acefgijkl+acefghikl+acefghijk+bdefghijkl+acefghijkl$
\end{enumerate}
\end{footnotesize}
\end{spacing}

\chapter{Github Project weak\_keys}
The scripts used for this thesis can be found in \url{https://github.com/mariosgeorgiou17uclacuk/weak_keys}
\section{Walsh Spectrum and Autocorrelation Spectrum}
We use Python3 to compute the Walsh Spectrum and Autocorrelation Spectrum. By executing the commands below, we can compute the Walsh and Autocorrelation Spectrum for the Boolean function of T-310:
\begin{lstlisting}
pip install itertools
pip install numpy
pip install scipy
\end{lstlisting}
We then run the python script by:
\begin{lstlisting}
python spectrum.py
\end{lstlisting}
The results will be printed on the screen.
\section{Invariants on 12 bits}
In order to compute all invariants on 12 bits we need first to create a text file, we name it 'IOquestion12.txt', of this format:
\begin{figure}[h]
    \centering
    \includegraphics[scale=0.7]{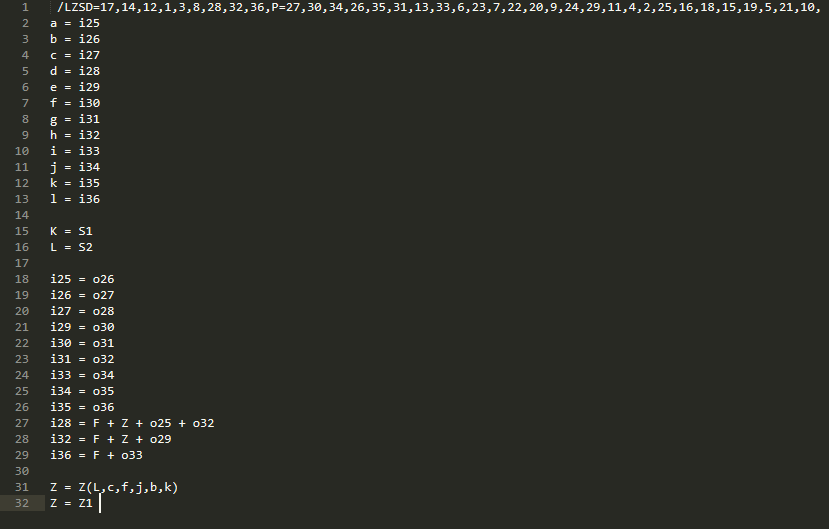}
    \caption{IOquestion12.txt}
    \label{fig:mesh1}
\end{figure}
\newline
We use Python3 and Sagemath 8.0, so before running the files we need to execute the commands below:
\begin{lstlisting}
pip install itertools
pip install sys
pip install compiler
\end{lstlisting}
\vspace{-3pt}
Step 1: We run mongen.py by:
\begin{lstlisting}
python mongen.py IOquestion12.txt
\end{lstlisting}
\vspace{-3pt}
which takes as input the text file 'IOquestion12.txt', and it will create two separate text files, called 'draft1.txt' and 'draft2.txt'. 'draft1.txt' contains all possible monomials for the variables $\{a,b,c,\dots,k,l\}$ and 'draft2.txt' contains all the resulting polynomials after one round, for each possible monomial in 'draft1.txt'.\\
Step 2: We then run mongen.ipynb, a Sagemath file, which calculates all resulting polynomials by removing the parentheses from 'draft2.txt'. It will then create a text file, called\\ 'IOquestion12.all\_monomials.txt', which contains two columns of data. The first column contains all the possible monomials from 'draft1.txt' and the second column contains the corresponding polynomials after one round.\\\\
Step 3: When the new text file is created, we run ax64.exe as follows:
\begin{small}
\begin{lstlisting}
ax64.exe 41012 "IOmonomials.temp.txt" "IOquestion12.all_monomials.txt"
\end{lstlisting}
\end{small}
to create a text file, called "IOmonomials.temp.txt", which contains the XOR of the two columns.\\\\
Step 4: As soon as "IOmonomials.temp.txt" is created, we run replacebooleanfunction.ipynb, a Sagemath file, which will substitute F,L and Z, do the calculations for each polynomial and write the result in the text file called 'IOmonomials.temp2.txt'.\\\\
Step 5: Finally, we use ax64.exe again by executing the command below:
\begin{small}
\begin{lstlisting}
ax64.exe 41013 "IOmonomials.temp.rewritten.txt" "IOmonomials.temp2.txt" \end{lstlisting}
\end{small}
A text file, called 'Kernel\_abcd.txt' will be created, which will contain all the invariants for the specific long term key.

\end{document}